\documentclass[journal,twoside,web]{ieeecolor}
\usepackage{lcsys}
\usepackage{cite}
\usepackage{amsmath,amssymb,amsfonts}
\usepackage{bm}
\usepackage{algorithm}
\usepackage{multirow}
\usepackage[noend]{algpseudocode}
\usepackage{graphicx}
\usepackage{subcaption}
\usepackage{textcomp}
\usepackage{subcaption}
\usepackage{comment}
\usepackage{colortbl}
\usepackage[dvipsnames]{xcolor}
\usepackage{hyperref}
\newtheorem{ass}{Assumption}

\newtheorem{lem}{Lemma}
\newtheorem{rem}{Remark}
\newtheorem{prop}{Proposition}
\newtheorem{cor}{Corollary}
\newtheorem{thm}{Theorem}
\newtheorem{Def}{Definition}

\definecolor{lightgray}{gray}{0.85}
\begin{document}
\title{\bf  Dynamically Feasible  Planning and Control in Complex Environments: a Scalable Systematic Approach}
\author{Miguel Castroviejo-Fernandez \IEEEmembership{Graduate Student Member, IEEE}, and Ilya Kolmanovsky \IEEEmembership{Fellow, IEEE}
\thanks{This work was supported by Air Force Office of Scientific Research, grant number FA9550-23-1-0678 and by the François-Xavier Bagnoud Doctoral Fellowship.}
\thanks{The authors are with the Department of Aerospace Engineering,
        University of Michigan, 48109 Ann Arbor MI, USA.
        {\tt\small mcastrov, ilya @umich.edu}}}
\maketitle
\thispagestyle{empty}

\begin{abstract}
In this article we present a method to generate safe sets for linear discrete-time systems subject to non-convex constraints that can be represented as a union of polytopes. It is then shown how a reference governor can be implemented for safe reference tracking tasks. A  theoretical analysis of the safe set is presented and properties of the reference governor scheme  are derived. The guarantees include safety at any time as well as finite-time convergence of the applied reference command to any strictly admissible reference command. For the proposed reference governor, online computational overhead is low. Moreover, it is shown that for specific instances of the complex constraint sets, the safe set can be computed efficiently. Extensive simulation results demonstrating the applicability of the method and online/offline computation times are reported.
\end{abstract}

\section{Introduction}
In the past decades, controllers have more and more frequently been expected to handle pointwise-in-time constraints. This has stimulated the widespread adoption of Model Predictive Control (MPC) \cite{mayne2000constrained} and Control Barrier Function schemes \cite{ames2019control}. Both methods produce control inputs that ensure constraint satisfaction. 
Another approach to enforce constraint satisfaction are reference governors (RG), see \cite{garone2017reference} and references therein. RGs are supervisory schemes that act as reference command filters modifying, when necessary, the reference command applied to the system in order to ensure constraint satisfaction. The reference command selection strategy in most RG schemes is tightly interconnected with the notion of Maximum Output Admissible Sets (MOAS) \cite{gilbert1991linear}. The MOAS is defined as the set of all initial condition and reference command pairs such that the resulting trajectory when the reference command is kept constant does not result in any constraint violation. 

While the above constrained control methods are well established and are backed by strong theoretical guarantees in the case of convex (in particular polytopic) constraint sets the handling of non-convex constraint sets is still an active field of research. Due to the inherent  complexity of the non-convex case it is common in the planning community to discard the dynamics and focus on feasible path generation \cite{weiss2015safe,danielson2020motion,marcucci2024fast}.  
Moreover, most studies \cite{blanchini2004control,wu2025towards,wang2022geometrically,marcucci2024fast} are restricted to geometrical constraints, i.e. constraints on the position of a system while ignoring other dynamical constraints, this restriction is not necessary with our approach.

In this work we develop a general framework for generating safe sets for linear time invariant systems subject to pointwise-in-time constraints. The constraint set, while non-convex, is assumed to be represented as a union of polytopes with each pair of polytopes in the union overlapping over at most one facet. Under suitable connectedness properties between the elements of the union, the constraint set is dubbed a \textit{connected collection of polytopes}. 
The safe set constructed for this collection utilizes the MOAS of simpler sets generated from the polytopes in the collection. Given its root in the MOAS, the computed safe set is usually large. The safe set is computed offline and can then be used online for safe reference tracking tasks.
The  contributions of this work are as follows.\begin{itemize}
    \item We establish a systematic and general approach to generate a safe set for any connected collection of polytopes. The method relies on defining sets linking adjacent polytopes in the collection that we refer to as \textit{(weak) extensions}. The (weak) extensions provide a safety path between the different polytopes. We derive 
    properties of the (weak) extensions and, in particular, show that they are themselves polytopes with non-empty interior. 
    \item  Following the philosophy of RGs we develop a supervisory scheme that enables safe reference tracking capability, leveraging the computed safe set. Moreover, under suitable assumptions we prove that starting from a safe state and reference command pair the applied reference command converges, in finite time, to any strictly admissible constant reference command. 
\item In the case where the connected collection of polytopes is solely composed of hyperrectangles we show that the safe set computations reduce to scaling and centering a small number of base polytopes. This makes the approach amenable to online implementation.
\item We provide extensive numerical simulations for several dynamic systems including on-orbit proximity maneuvering and quadcopter control in an urban environment.\footnote{The code for the simulations in Section \ref{sec:CWH} and Section \ref{sec:droneInCity} can be found at \url{https://github.com/mcastrov-pixel/Reference-Governor-for-union-of-polytopes}}
\end{itemize} 
It must be noted that approximating non-convex regions by a finite union of polytopes is not a new idea, for example, Blanchini \textit{et al.} \cite{blanchini2004control} propose a control scheme based on this for a class of robot manipulators systems. 
Moreover, RG schemes that handle non-convex constraint sets have also been proposed in the past.  In particular, RG schemes for intersections and unions of concave constraints were studied in \cite{hosseinzadeh2019constrained,hosseinzadeh2019explicit}.
A scalar reference governor strategy for linear systems subject to constraints represented by a union of polytopes has been considered in \cite{romagnoli2020new} under the assumption that ``touching" polytopes in the sequence overlap with nonempty interior. In contrast with our approach, this assumption in \cite{romagnoli2020new} avoids the challenge of generating a safe transition regions.
The authors of \cite{nguyen2021b} introduce a similar notion of extension for a strictly connected sequence of polytopes and use it as a tool for path generation. Nevertheless, their main focus is on smooth path generation with static constraint satisfaction while the properties of the extension are not studied in \cite{nguyen2021b}. Relative to \cite{nguyen2021b}, we make the following novel contributions: We introduce the weak extension which enables us to handle polytopes with partial overlap, we derive theoretical properties of the (weak) extensions, and, moreover, through the MOAS we take into account the dynamics of the system and are able to handle non-geometric constraints directly.

The article is structured as follows: We conclude this section by introducing notation as well as some basic facts about polytopes. Section \ref{sec:DTLTI_and_RG} recalls how a classical RG scheme for a discrete-time linear time-invariant (DT LTI) system is constructed. Section \ref{sec:connectedSeqDef} introduces the class of constraint sets we consider and other notions instrumental in the safe set generation. The supervisory scheme that we propose is described in Section \ref{sec:RGNonConvex}. We then derive theoretical properties of the connected collection of polytopes (Section \ref{sec:propsSequence}) and of the proposed RG (Section \ref{sec:propsRG}). Section \ref{sec:hyperrectangles} provides a method for fast safe set generation. Simulation results are reported in Section \ref{sec:numericalSim}.
\subsection{Notation}
Let $\mathbb Z$ be the set of integers and $\mathbb R$ be the set of real numbers. Given sets $\mathbb S,\mathcal A\subseteq \mathbb R^n$, then $\mathbb S_{\mathcal A}\triangleq \mathbb S\cap \mathcal A$ and for $a\in\mathbb R$ and $\mathbb S\subseteq\mathbb R$, $\mathbb S_{\geq a} \triangleq \mathbb S_{[a,\infty)}$ and $\mathbb S_{> a} \triangleq \mathbb S_{(a,\infty)}$.  Given a set $\mathcal S$ the power set of $\mathcal S$ is denoted $\mathcal P(\mathcal S)$ and the relative interior of $\mathcal S$ is denoted ${\tt ri}(\mathcal S)$.
For a given matrix $A\in\mathbb R^{m\times n}$ its $i^{\text{th}}$ row is denoted by $A_{(i)}$. Moreover, $A_{(-i)}\in\mathbb R^{m-1\times n}$ is the matrix obtained by removing the $i^{\text{th}}$ row from $A$. The hyperplane generated by the row vector $0\neq a\in\mathbb R^{ n}$ and scalar $b\in\mathbb R$ is denoted as ${\tt H}(a,b) \triangleq \{y\in\mathbb R^{n}:\, a y = b\}$. Given a matrix $A\in\mathbb R^{m\times n}$ and a vector $b\in \mathbb R^m$, the polytope generated by the pair $A,b$ is ${\tt P}(A,b) = \{x\in\mathbb R^n:\, Ax\leq b\}$.
Matrix and scalar multiplications of $\mathcal A\subseteq \mathbb R^n$ by $B\in\mathbb R^{m\times n}$ and $\beta\in\mathbb R$ are defined as $B\mathcal A = \{Bx:\;x\in\mathcal A\}$ and  $\beta \mathcal A = \{\beta x:x\in\mathcal A\}$, respectively. The identity matrix for $\mathbb R^{n\times n}$ is denoted $I_n$ and $0_{n\times m}$ denotes an $n\times m$ matrix with zero in each entry; when clear from the context the subscripts are omitted. Given $x\in \mathbb R^n$, The 2-norm is defined as $\|x\| = \sqrt{x^\top x}$, the 2-norm unit ball centered at $x$ is $\mathcal B(x)\triangleq \{y\in\mathbb R^n:\,\|(x-y)\|_2\leq 1\}$ and $\mathcal B = \mathcal B(0)$. The Minkowski sum of two sets  $A,\;B\subset\mathbb R^n$ is defined as $A\oplus B=\{a+b:\;a\in A,\;b\in B\}.$
\subsection{Polytopes}
A set $\mathcal A\subseteq\mathbb R^{n}$ is a polytope if it can be expressed as the intersection of a finite number of  half-spaces. The dimension of a polytope corresponds to the dimension of its affine hull. It is full-dimensional if it has dimension $n$.
An  H-representation of $\mathcal A$ is a pair $(A^{\mathcal A},b^{\mathcal A})\in \mathbb R^{m\times n}\times \mathbb R^{m}$ such that $\mathcal A = {\tt P}(A^{\mathcal A}, b^{\mathcal A})$.
It is a minimal representation of $\mathcal A$ if no H-representations can be constructed with a strictly smaller number of  half-spaces. For a full-dimensional polytope, the minimal H-representation is unique up to scalar multiplication and re-ordering. In the following, $(A^{\mathcal A},b^{\mathcal A})$ denotes a minimal H-representation of $\mathcal A$. 
A $k$-face of $\mathcal A$, $k\in\mathbb Z_{[0,n-1]}$, is any non-empty set that can be expressed as $\bigcap_{j\in\mathcal J}{\tt H}(A^{\mathcal A}_{(j)},b^{\mathcal A}_{(j)})\cap\mathcal A$ where the index set $\mathcal J$ has cardinality $n-k$. Furthermore, the $0$-faces of $\mathcal A$ are vertices, the $1$-faces are called edges and the $n-1$-faces are called facets. If $\mathcal A$ is bounded it is described by its V-representation: as the convex hull of the set of its vertices.

\section{Reference governor for systems with non convex constraints}
In this section we describe the proposed supervisory scheme for reference tracking in linear systems subject to non-convex constraint sets. Section \ref{sec:DTLTI_and_RG} summarizes the ideas underlying reference governors for DT LTI systems subject to polytopic constraints. In Section \ref{sec:connectedSeqDef} we precisely define the class of non-convex constraint sets we consider, and introduce the proposed scheme in Section \ref{sec:RGNonConvex}.
\subsection{Reference governor for discrete-time linear systems}
\label{sec:DTLTI_and_RG}
Consider a DT LTI system with dynamics: 
\begin{subequations}\label{eq:dyns}
\begin{align}
    x_{k+1} &= A x_k + B v_k,\\
    y_k & = C x_k + D v_k,
\end{align}
\end{subequations}
where $k\in \mathbb Z_{\geq 0}$ denotes the time instant, $x_k\in\mathbb R^{n_x}$ is the state of the system, $v_k\in\mathbb R^{n_v}$ is a control input to the system and $y_k\in\mathbb R^{n_y}$ is the output vector. The output is subject to pointwise-in-time  constraints described by the set $\mathcal Y$, i.e.,
\begin{align}\label{eq:cstr}
y_k\in\mathcal Y, \text{for all }k\in\mathbb Z_{\geq 0}.    
\end{align}
For the time being we make the following assumption on $\mathcal Y$.
\begin{ass}
    \label{ass:simpleCstr} The set $\mathcal Y\subset\mathbb R^{n_y}$ is a bounded polytope with H-representation given by $(A^{\mathcal Y},b^{\mathcal Y})$. Moreover, $\mathcal Y$ contains the origin in its interior, i.e., $b^{\mathcal Y}\in \mathbb R^{n_y}_{>0}$.
\end{ass}

Note that, in general, compactness of the constraint set is a reasonable requirement as one can include extra box constraints to restrict the unbounded directions. We make the following assumption on \eqref{eq:dyns}.
\begin{ass}\label{ass:schur}
    $(C,A)$ is observable and $A$ is Schur.
\end{ass}
This is motivated by our focus on the task of reference tracking for a pre-stabilized system. In this setting, it is usual to let the input represent the applied reference command at time $k$. Then, we introduce the set of (strictly) admissible reference commands  parametrized by the margin $\epsilon\in\mathbb R_{\geq 0}$:
\begin{align}
        \mathcal R(\mathcal Y,\epsilon)  &= 
        \{v\in\mathbb R^{n_v}:\, \{H_\infty v\}\oplus\epsilon\mathcal B\subseteq\mathcal Y \}, \text{where}\nonumber \\
       H_\infty &=  D+ C(I-A)^{-1}B,\label{eq:Hinfty}
\end{align}
and Assumption \ref{ass:schur} ensures that $(I-A)$ is invertible. The set of feasible reference commands is $\mathcal R(\mathcal Y,0)$. When clear from context we may omit the second argument of $\mathcal R(\mathcal Y,\epsilon)$.

Reference governor schemes \cite{garone2017reference} select a reference command that is safe, given the current state of the system. More precisely, we say that $(v,x)\in \mathbb R^{n_v}\times\mathbb R^{n_x}$ is safe for the dynamics \eqref{eq:dyns} and constraints \eqref{eq:cstr} if the output trajectory starting from $x$ and with constant input $v$ does not lead to constraint violation. The set of all safe reference command and state pairs is the Maximum Output Admissible Set (MOAS) \cite{gilbert1991linear}:
     \begin{align*}
    \mathcal O(\mathcal Y)&\!\triangleq \!\{(v,x)\in\mathbb R^{n_v+n_x}:\, C A^k x\! + \!H_k v \!\in\!\mathcal Y\;\forall k\in\mathbb Z_{\geq 0}\},\\
    H_k &= C(I_n-A)^{-1}(I-A^k)B +D. \label{eq:Hk}
\end{align*}
While, in general, $\mathcal O(\mathcal Y)$ may not be computable in finite time, an $\epsilon-$close approximation of the MOAS can be \cite{gilbert1991linear}. The approximation is obtained by a slight tightening of the set of feasible reference commands, i.e., 
\begin{equation}
    \mathcal O(\mathcal Y,\epsilon) = \mathcal O(\mathcal Y)\cap  (\mathcal R(\mathcal Y,\epsilon)\times \mathbb R^{n_x}).
\end{equation}
For the reference tracking task, let $r\in\mathbb R^{n_v}$ be the desired setpoint.  
As mentioned above, RG schemes select the applied reference command, $v_k$, in such a way that if the reference command is maintained safety is guaranteed. One example of RG scheme is the command governor (CG), which, given the current state vector $x_k$, chooses the applied reference command as:
\begin{align*}
    v_k &= \text{arg}\min_{v}\|v-r\|^2 \text{ subject to } (v,x_k)\in\mathcal O(\mathcal Y,\epsilon).
\end{align*}
For the CG, \cite[Proposition 1]{bemporad1997nonlinear} is a classical result providing conditions under which strictly steady-state admissible reference commands can be reached in finite time. Hereunder, we give a weaker version of the result that holds directly from     \cite[Proposition 1]{bemporad1997nonlinear}.
\begin{thm}\label{thm:convToStrictSteadyState}
Let Assumptions \ref{ass:simpleCstr} and \ref{ass:schur} hold and let $\epsilon\in\mathbb R_{>0}$ be given. Consider the closed loop trajectory obtained from $x_0\in\mathbb R^{n_x}$ through \eqref{eq:dyns} and with $v_k$ generated from the CG for a constant $r\in\mathcal R(\mathcal Y,\epsilon)$. If there exists $v^0$ such that $(v^0,x_0)\in\mathcal O(\mathcal Y)$, then, $v_k$ converges in finite time to $r$. More precisely, 
there exists $ k^\star\in\mathbb Z_{\geq 0}$ such that $v_k = r$ for all $k\geq k^\star$.
\end{thm}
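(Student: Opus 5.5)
The plan is to reduce the statement to \cite[Proposition 1]{bemporad1997nonlinear}, which gives finite-time convergence of the CG to $r\in\mathcal R(\mathcal Y,\epsilon)$ once the CG problem is feasible. The only new ingredient is a \emph{feasibility-recovery} phase. The hypothesis gives only $(v^0,x_0)\in\mathcal O(\mathcal Y)$, not $(v^0,x_0)\in\mathcal O(\mathcal Y,\epsilon)$, so the CG optimisation may be infeasible at the start.

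The convention I would adopt, as is standard for reference governors, is that when the CG problem has no solution the previously applied command is held, with $v_{-1}=v^0$. Invariance of $\mathcal O(\mathcal Y)$ under constant $v$ keeps the closed loop safe during this phase. The steps are as follows.

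\textbf{Step 1 (safe holding).} While the CG is infeasible, $v_k=v^0$. Since $(v^0,x_0)\in\mathcal O(\mathcal Y)$ and $\mathcal O(\mathcal Y)$ is positively invariant for constant inputs, $(v^0,x_k)\in\mathcal O(\mathcal Y)$ for all such $k$. By Assumption \ref{ass:schur}, $x_k\to\bar x:=(I-A)^{-1}Bv^0$ geometrically.

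\textbf{Step 2 (feasibility in finite time).} I would show that there exist $w\in\mathcal R(\mathcal Y,\epsilon)$, $\delta>0$ and $k_0$ such that $(w,x)\in\mathcal O(\mathcal Y,\epsilon)$ for all $x$ with $\|x-\bar x\|\le\delta$. Then the CG becomes feasible at $k_0$. The tool is the identity
\[
CA^kx+H_kw=H_\infty w+CA^k\bigl(x-(I-A)^{-1}Bw\bigr).
\]
Since $H_\infty r\oplus\epsilon\mathcal B\subseteq\mathcal Y$, I would look for $w$ of the form $w=\lambda v^0+(1-\lambda)r$ and use convexity of $\mathcal Y$ and of $\mathcal O(\mathcal Y)$. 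With this $w$, the output sequence is a convex combination of the (safe) output sequence generated by $v^0$ from $\bar x$ and the tail of the $r$-trajectory. The finitely many $k$ before $A^k$ becomes small are handled by the finite determinedness of $\mathcal O(\mathcal Y,\epsilon)$ \cite{gilbert1991linear}. Compactness of $\mathcal Y$ and $\|A^k\|\to0$ then give a uniform margin, which absorbs the perturbation $x_k-\bar x$.

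\textbf{Step 3 (conclusion).} From $k_0$ on, the CG is feasible. Feasibility is recursive, since $(v_{k},x_{k+1})\in\mathcal O(\mathcal Y,\epsilon)$ by invariance. The shifted trajectory starting at $x_{k_0}$ satisfies the hypotheses of \cite[Proposition 1]{bemporad1997nonlinear}, which yields $k^\star\ge k_0$ with $v_k=r$ for all $k\ge k^\star$.

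The main obstacle is Step 2. The convex combination $w$ only inherits margin $(1-\lambda)\epsilon$ in the steady-state tightening, not $\epsilon$, and $v^0$ itself may lie on the boundary of $\mathcal R(\mathcal Y,0)$. If this is the binding issue, I would instead replace $v^0$ by a sequence of held commands that march along the segment $[v^0,r]$. Each step is kept in $\mathcal O(\mathcal Y)$ by a convexity argument, and the state is given time to settle before the next step, until a command in $\mathcal R(\mathcal Y,\epsilon)$ near $r$ is reached. This is essentially the mechanism in the proof of \cite[Proposition 1]{bemporad1997nonlinear}, and it is the reason the paper can claim the weaker-hypothesis version follows from that result.
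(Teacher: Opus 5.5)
Your Step~3 is the paper's entire argument: the paper only cites \cite[Proposition 1]{bemporad1997nonlinear}. That citation applies only when the CG is feasible at $k=0$, i.e.\ when $(v^0,x_0)\in\mathcal O(\mathcal Y,\epsilon)$. This is also how the theorem is invoked later, in Lemma~\ref{lem:fromSet2WE} and Theorem~\ref{thm:finiteTimeConvToSteadyState}, where the initial pair lies in an $\epsilon$-tightened MOAS. The gap is Step~2, and it cannot be repaired. The claim that some $w\in\mathcal R(\mathcal Y,\epsilon)$ is admissible from a neighbourhood of $\bar x$ is false in general, and so is the statement read literally with $(v^0,x_0)\in\mathcal O(\mathcal Y)$ and your hold convention. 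Take $x_{k+1}=\tfrac12 x_k+v_k$, $y_k=x_k-v_k$, which is observable and Schur with $H_\infty=1$. Let $\mathcal Y=[-1,1]$, $\epsilon=\tfrac12$, $r=0\in\mathcal R(\mathcal Y,\epsilon)=[-\tfrac12,\tfrac12]$, $v^0=1$ and $x_0=\bar x=2$. Then $(v^0,x_0)\in\mathcal O(\mathcal Y)$, since it is an equilibrium with $y\equiv 1$. From $x=2$, however, any constant command $w$ gives $y_0=2-w$, so $y_0\le 1$ forces $w\ge 1$, while $H_\infty w\in\mathcal Y$ forces $w\le 1$. The only admissible command is $w=1\notin\mathcal R(\mathcal Y,\epsilon)$. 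So the CG is infeasible at every step, and holding gives $x_k\equiv 2$ and $v_k\equiv 1\neq r$.

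The example also shows where your sub-arguments break.
\begin{itemize}
\item With $w=\lambda v^0+(1-\lambda)r$, the output from $\bar x$ is $\lambda$ times the safe $v^0$-output plus $(1-\lambda)$ times the output from $\bar x$ under constant $r$.
\item That second term is not a tail of an $r$-trajectory. It is the full transient $H_\infty r+CA^k(I-A)^{-1}B(v^0-r)$, which can leave $\mathcal Y$ for small $k$.
\item Finite determinedness only bounds how many of these constraints must be checked; it does not make them hold.
\item When $H_\infty v^0\in\partial\mathcal Y$, transient safety pushes $\lambda$ toward $1$, while $w\in\mathcal R(\mathcal Y,\epsilon)$ pushes it toward $0$. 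These demands are incompatible.
\end{itemize}
The marching fallback fails at its first step in the example. It also is not something the CG can produce, because the CG's feasible set excludes every command outside $\mathcal R(\mathcal Y,\epsilon)$. Your convexity argument would also need $(r,x)\in\mathcal O(\mathcal Y)$ at the same state $x$, which is exactly what is missing. The finite-time mechanism of \cite{bemporad1997nonlinear} relies on the current command already carrying the $\epsilon$ steady-state margin, and that margin cannot be manufactured from an untightened starting pair. The fix is to read the hypothesis as $(v^0,x_0)\in\mathcal O(\mathcal Y,\epsilon)$, drop Steps~1--2, and keep Step~3.
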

Similar results can be derived for other RG schemes such as the scalar reference governor, explicit reference governor \cite{garone2017reference} and other reference governor schemes for nonlinear systems \cite{castroviejo2024safe,castroviejo2025robust}. Nevertheless, this usually assumes a convex constraint set. 
In this work, we aim to derive similar results for more complex constraint sets representable as a finite union of convex sets, i.e., 
\begin{equation*}
    \mathcal Y = \bigcup_{i=1}^{n_{\mathcal Y}}\mathcal Y_i,\quad \mathcal Y_i\subseteq \mathbb R^{n_y},\;i=1,\dots,\;n_{\mathcal Y}.
\end{equation*}
The set $\mathcal Y$ is associated with the following collection of sets,
\begin{equation*}
   \{\mathcal Y_i\}_{i=1}^{n_{\mathcal Y}}\triangleq \{\mathcal Y_i\subseteq \mathbb R^{n_y}: i=1,\dots,n_{\mathcal Y}\}.
\end{equation*}
In order to derive constructive results we restrict the class of collections, $\{\mathcal Y_i\}_{i=1}^{n_{\mathcal Y}}$, to a connected collection of polytopes.
\begin{figure*}
\centering
\begin{subfigure}{.45\textwidth}
  \centering
 \includegraphics[width=1\linewidth]{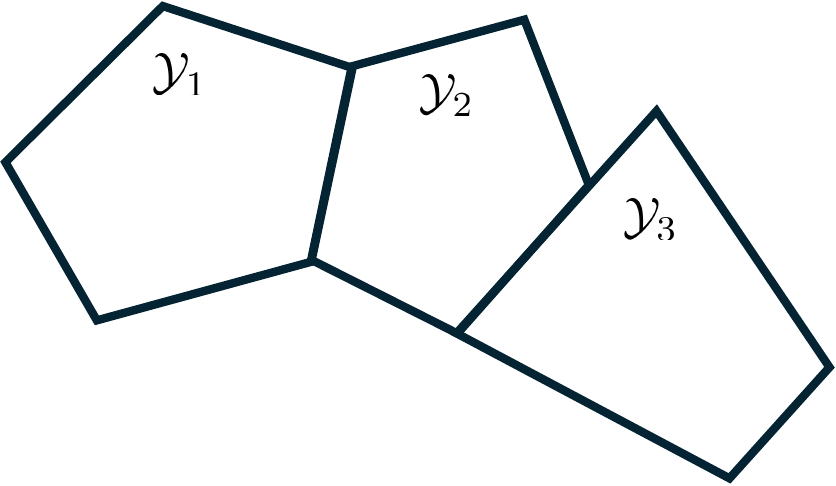}
    \caption{{\small connected collection of polytopes with $n_{\mathcal Y}=3$.}}
    \label{fig:connectedSequence}
\end{subfigure}
$\;$
\begin{subfigure}{.45\textwidth}
  \centering
  \includegraphics[width=1\linewidth]{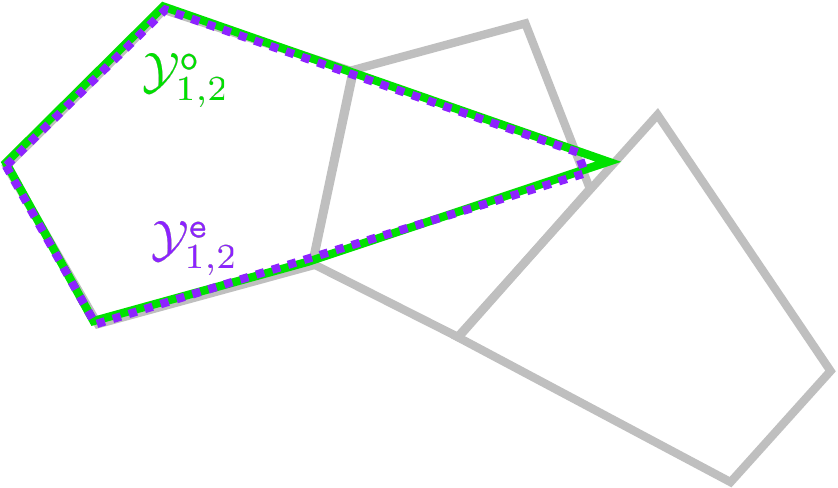}
    \caption{\small Openings and extensions for the collection of polytopes}
    \label{fig:Ext1}
\end{subfigure}
$\;$
\begin{subfigure}{.45\textwidth}
  \centering
  \includegraphics[width=1\linewidth]{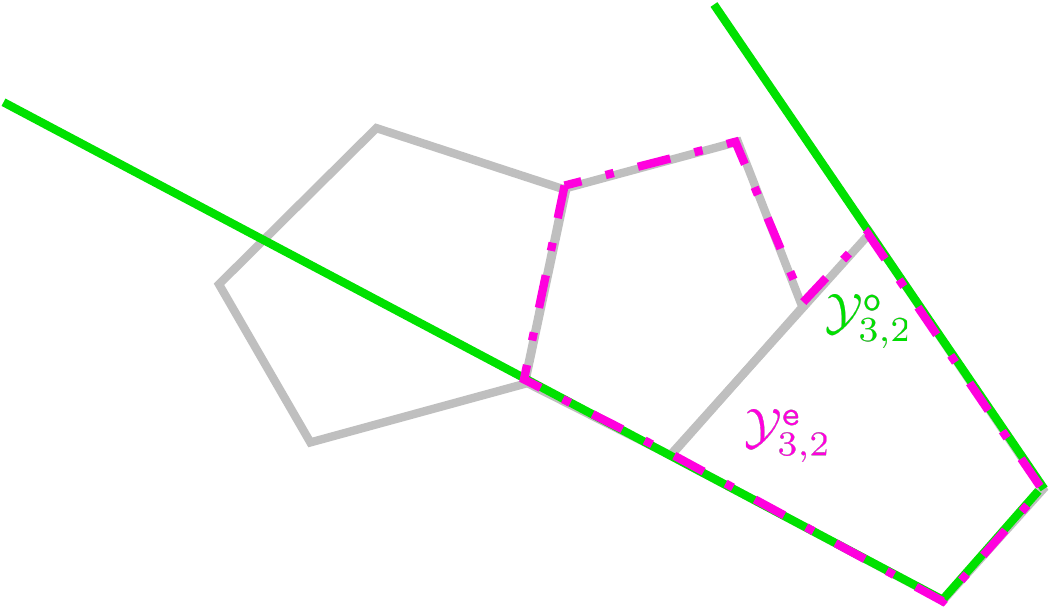}
  \caption{Opening and extension for the collection of polytopes }
  \label{fig:Ext2}
\end{subfigure}%
\caption{Illustration of a connected collection of polytopes with $\{\mathcal Y_1,\;\mathcal Y_2\}$ strictly connected but $\{\mathcal Y_2,\;\mathcal Y_3\}$ not strictly connected}
\label{fig:exampleConnectedSeq}
\end{figure*}
\subsection{Connected collection of polytopes}
\label{sec:connectedSeqDef}
We start this section by introducing some definitions relating to a collection of sets $\{\mathcal Y_i\}_{i=1}^{n_\mathcal Y}$.
\begin{Def}
    \label{def:adjacencyMap} Let $\{\mathcal Y_i\}_{i=1}^{n_{\mathcal Y}}\subset\mathcal P(\mathbb R^{n_y})$, where $n_{\mathcal Y}\in\mathbb Z_{>0}$. The \textbf{adjacency matrix} of $\{\mathcal Y_i\}_{i=1}^{n_{\mathcal Y}}$, denoted by $F(\{\mathcal Y_i\}_{i=1}^{n_{\mathcal Y}})\in \ \{0,1\}^{n_{\mathcal Y}\times n_{\mathcal Y}}$, is a binary matrix with entries given by 
    \begin{equation}\label{eq:FijDef}
        F_{i,j} = \begin{cases}
            &1,\text{ if } \mathcal Y_i\cap \mathcal Y_j\neq \emptyset,\\
            &0, \text{ else}
        \end{cases}.
    \end{equation}
\end{Def}
The adjacency matrix describes which elements of $\{\mathcal Y_i\}_{i=1}^{n_\mathcal Y}$ have non-empty intersection. When clear from the context we omit its argument.
\begin{Def}\label{def:connectedSec}
        Let $\{\mathcal Y_i\}_{i=1}^{n_{\mathcal Y}}\subset\mathcal P(\mathbb R^{n_y})$ where $n_{\mathcal Y}\in\mathbb Z_{>0}$.  We say that $\{\mathcal Y_i\}_{i=1}^{n_{\mathcal Y}}$ is a \textbf{(simply) connected collection of polytopes} if \begin{itemize}
        \item[\ref{def:connectedSec}.a] For $i=1,\dots,n_{\mathcal Y}$, $\mathcal Y_{i}$ has non-empty interior, i.e., it is a full-dimensional polytope.
        \item[\ref{def:connectedSec}.b] The adjacency matrix represents a connected graph.
        \item[\ref{def:connectedSec}.c] For every pair $(i,j)$ such that $F_{i,j}= 1$ and $i\neq j$, $\mathcal Y_i$ and $\mathcal Y_{j}$ intersect on a single facet and the intersection has dimension $n_y-1$. More precisely, there is $k\in\mathbb Z_{[1,n_{\mathcal Y_i}]}$, $\epsilon>0$ and $y\in{\tt H}(A^{\mathcal Y_i}_{(k)},b^{\mathcal Y_i}_{(k)})$ such that 
        \begin{equation}
        (\{y\}\oplus\epsilon\mathcal B)\cap{\tt H}(A^{\mathcal Y_i}_{(k)},b^{\mathcal Y_i}_{(k)})\subseteq\mathcal Y_i\cap\mathcal Y_{j}\subseteq {\tt H}(A^{\mathcal Y_i}_{(k)}  ,b^{\mathcal Y_i}_{(k)}), \label{eq:overlapCondConnectedSeq}
        \end{equation}
        where $n_{\mathcal Y_i}$ is the number of rows of $A^{\mathcal Y_i}$.   
    \end{itemize}
\end{Def}
\begin{Def}\label{def:strictConnectedSec}
    $\{\mathcal Y_i\}_{i=1}^{n_{\mathcal Y}}$ is a \textbf{strictly connected collection of polytopes} if it is a connected collection of polytopes, and, for every $i,j$ such that $F_{i,j}= 1$ the polytopes $\mathcal Y_i,\;\mathcal Y_{j}$ have a \textit{common facet}, i.e., there is $k\in \mathbb Z_{[1,n_{\mathcal Y_i}]}$ such that
    \begin{equation*}
        {\tt H}(A^{\mathcal Y_i}_{(k)},b^{\mathcal Y_i}_{(k)})\cap\mathcal Y_i = {\tt H}(A^{\mathcal Y_{i}}_{(k)},b^{\mathcal Y_{i}}_{(k)})\cap\mathcal Y_{j}.
    \end{equation*}
\end{Def}

The intersection between adjacent elements of a connected collection of polytopes will play a significant role in our derivations. For convenience we refer to it as the \textbf{gate} between the elements. We  also introduce the \textbf{connection matrix}, $G(\{\mathcal Y_i\}_{i=1}^{n_{\mathcal Y}})$, representing the indexes of the common hyperplane between two elements; when clear from the context we ignore its argument. The entries of the connection matrix are given by 
\begin{align*}
G_{i,j} = \begin{cases}
    &1, \;\text{ if }F_{i,j} = 0 \text{ or } i =j\\
    & k \text{ such that \eqref{eq:overlapCondConnectedSeq} holds for } \mathcal Y_i,\; \mathcal Y_j, \text{ else} 
\end{cases}.
\end{align*}
Then, for any $i\neq j$ such that $F_{i,j}=1$, up to a positive constant:
\begin{equation*}
    A^{\mathcal Y_i}_{G_{i,j}} =- A^{\mathcal Y_{j}}_{G_{j,i}},\quad b^{\mathcal Y_i}_{G_{i,j}} =  b^{\mathcal Y_{j}}_{G_{j,i}}.
\end{equation*}
Figure \ref{fig:connectedSequence} depicts a connected collection of polytopes in $\mathbb R^2$ where $\{\mathcal Y_1,\mathcal Y_2\}$ are strictly connected and $\{\mathcal Y_2,\mathcal Y_3\}$ are connected. Finally, we introduce some sets that play a central role in generating safe sets. 
\begin{Def}\label{def:auxSets}
    Let $\{\mathcal Y_i\}_{i=1}^{n_{\mathcal Y}}$ be a connected collection of polytopes and let $i,j\in\mathbb R_{[1,n_{\mathcal Y}]}$ be such that $F_{i,j} = 1$ and $i\neq j$.  
    The \textbf{opening} of $\mathcal Y_i$ towards $\mathcal Y_j$ is defined as \begin{equation*}
        \mathcal Y_{i,j}^{\tt o}\triangleq \{y \in\mathbb R^{n_y}:\,A^{\mathcal Y_i}_{(-G_{i,j})}y\leq b^{\mathcal Y_i}_{(-G_{i,j})}\},
    \end{equation*}
    and the \textbf{restriction} of $\mathcal Y_j$ by $\mathcal Y_i$ is defined as
    \begin{equation*}
   \mathcal Y_{j}^{\tt res} =   \mathcal Y^{\tt o}_{i,j}\cap \mathcal Y_{j}.    
    \end{equation*}
    
    Furthermore, the \textbf{ extension} of  $\mathcal Y_i$ towards $\mathcal Y_j$ is defined as 
    \begin{equation*}
        \mathcal Y^{\tt e}_{i,j}\triangleq \mathcal Y_i\cup \mathcal Y^{\tt res}_{j,i},
    \end{equation*} and the \textbf{weak extension} of the pair $(\mathcal Y_i,\mathcal Y_{j})$ is
    \begin{equation*}
        {\tt WE}(\mathcal Y_i,\mathcal Y_{j})\triangleq \mathcal Y^{\tt res}_{i,j}\cup \mathcal Y^{\tt res}_{j,i}.
    \end{equation*}
\end{Def}

Loosely speaking, the opening of $\mathcal Y_i$ towards $\mathcal Y_j$ is the polytope obtained from $\mathcal Y_i$ when we remove the hyperplane that intersects with $\mathcal Y_{j}$. The restriction of $\mathcal Y_j$, is what remains of $\mathcal Y_j$ after we intersect it with the opening $\mathcal Y^{\tt o}_{i,j}$. The extension of $\mathcal Y_i$ towards $\mathcal Y_j$ restricts $\mathcal Y^{\tt o}_{i,j}$ to the points that are either in $\mathcal Y_i$ or in $\mathcal Y_{j}$. Figures \ref{fig:Ext1}-\ref{fig:Ext2} illustrate this. In Figure \ref{fig:exampleConnectedSeq}, we note that all openings are convex. Moreover, the extensions relating to the strictly connected subcollection, that is $\mathcal Y^{\tt e}_{1,2},\;\mathcal Y^{\tt e}_{2,1}$, are also convex and so is $\mathcal Y^{\tt e}_{2,3}$. In contrast, $\mathcal Y^{\tt e}_{3,2}$ is not convex. This motivates the introduction of the weak extension. As we will see later, the weak extension corresponds to the extension for strictly connected subsets of the original collection. An illustration of the weak extension is shown in Figure \ref{fig:exampleWeakExtension}.  

Properties of the (weak) extensions are formally derived in Section \ref{sec:theoreticalProps}. However, we mention that for a strictly connected collection, the extensions are indeed polytopes. For a simply connected collection these results do not apply to the extension but hold for the weak extension. Crucially, the (weak) extension provides a natural way to construct a bridge between neighboring sets. Given that the \textit{overlap} has a non-empty interior (see Proposition \ref{prop:overlapIsNonEmpty}) we will be able to use the (weak) extension to ensure a safe transition from one element to the next.  
In other words, the safe set for $\mathcal Y$ will be constructed using the individual MOAS for the sets $\mathcal Y^{\tt e}_{i,j}$ or ${\tt WE}(\mathcal Y_{i},\mathcal Y_{j})$. For the approach to be successful, we need to make sure the equilibrium manifold adequately intersects the collection. The following definition formalizes this idea.
\begin{Def}\label{def:steadyOutputCompliant}
    The connected collection of polytopes $\{\mathcal Y_i\}_{i=1}^{n_{\mathcal Y}}$ is \textbf{steady-output compliant} for the dynamics \eqref{eq:dyns} if the steady-output manifold intersects the relative interior of each element and gate of the connected collection, i.e., 
    \begin{align}\label{eq:gateCompliance}
        &\emptyset\neq H_{\infty} \mathbb R^{n_v}\cap \mathcal {\tt ri}(\mathcal Y_i),\; i = 1,\dots,n_{\mathcal Y}\,\nonumber\\
        &\emptyset\neq H_{\infty} \mathbb R^{n_v}\cap  {\tt ri}(\mathcal Y_i\cap\mathcal Y_j) \forall i\neq j\in\mathbb Z_{[1,n_{\mathcal Y}]} \text{ s.t. }F_{i,j}=1, \nonumber
    \end{align}
    where  $H_{\infty}$ is defined in \eqref{eq:Hinfty}.
\end{Def}
We then introduce the following assumptions.
\begin{ass}\label{ass:connectedSequence}
    The constraint set $\mathcal Y$, in \eqref{eq:cstr}, is represented by a connected collection of polytopes, $\{\mathcal Y_{i}\}_{i = 1}^{n_{\mathcal Y}}\subseteq \mathbb R^{n_y}$. 
    Moreover, $\mathcal Y$ is steady-output compliant for the dynamics \eqref{eq:dyns}.
     \end{ass}
     
 There exists a wide class of practical applications such that Assumption \ref{ass:connectedSequence} holds, as exemplified in Section \ref{sec:CWH} for an aerospace system and in Section \ref{sec:droneInCity} for a robotic system.
\subsection{Reference governor for non-convex feasible region}
\label{sec:RGNonConvex}
We now introduce a reference governor scheme for constraint enforcement when the feasible region is represented by a connected collection of polytopes. 
The overarching idea is to advance between different elements of the connected collection, leveraging the connectedness and the weak extension between elements. The weak extensions enable safe transition from one element to the next. 
The connectedness of the collection ensures a path between the starting point and desired setpoint can be computed. A sequence of intermediate reference commands is then generated with two elements inside each weak extension along the path, one on each side of the gate. A reference governor, such as that described in Section \ref{sec:DTLTI_and_RG}, with  constraint set formed from either an element of the collection or a weak extension is then used to navigate to the next intermediate reference command.

In the sequel, we let the system start from the initial state $x_0$ and assume that there is a known  reference command $v^0$ such that the pair $(v^0,x_0)$ is safe. Moreover, the desired setpoint to which the system should be stabilized is $r\in  \mathcal R(\mathcal Y,\epsilon)$, i.e., the associated steady-output is strictly admissible with margin $\epsilon \in \mathbb R_{>0}$.
We now introduce the three elements involved in our supervisory scheme: safe set generation, path generation and reference governor implementation.

\subsubsection{Safe-set generation}
The generation of the safe sets is based on the  MOAS for different elements of the constraint set. For it, we introduce the following family of sets:
\begin{equation}\label{eq:MathcalYji}
    \mathcal Y_{i,j} =\begin{cases}
            &\mathcal Y_i, \text{ if } i= j,\\
            &\emptyset, \text{ if }F_{i,j}=0,\\
            & {\tt WE}(\mathcal Y_i,\mathcal  Y_j), \text{ else}
    \end{cases},
\end{equation}
where $i,j \in\mathbb Z_{[1,n_{\mathcal Y}]}$.  In the case  where $i=j$ the set $\mathcal Y_{i,j}$ corresponds to one of the elements composing $\mathcal Y$. The sets $\mathcal Y_{i,j}$ with $i\neq j$ and $F_{i,j} = 1$ correspond to the weak extension of $\mathcal Y_i,\;\mathcal Y_j$. This is a suitable choice for transitioning from $\mathcal Y_i$ to $\mathcal Y_j$, as ${\tt WE}(\mathcal Y_i,\mathcal Y_j)$ is a subset of $\mathcal Y_i\cup \mathcal Y_j$, overlaps both elements and is convex; this is shown in Section \ref{sec:theoreticalProps}. 

A family of safe sets is then computed offline, leveraging the inner approximation of the MOAS discussed in Section \ref{sec:DTLTI_and_RG}: 
\begin{equation}
    \mathcal O_{i,j}(\mathcal Y,\epsilon) = \mathcal O(\mathcal Y_{i,j},\epsilon).
\end{equation}
It is crucial that $\epsilon$ be chosen such that transitions between the different sections are still possible, i.e., such that 
\begin{equation} \label{eq:overlapInSSAdmissibleRefs}
    \mathcal R(\mathcal Y_{i,j},\epsilon)\cap \mathcal R(\mathcal Y_{i},\epsilon) \neq \emptyset\; \forall i,j: F_{i,j} = 1.
\end{equation}
A sufficient condition is to choose $\epsilon$ strictly smaller than the radius of the smallest Chebychev ball inscribed in $\mathcal Y_{i,j}\cap \mathcal Y_{i}$ for all $F_{j,i}\neq 0$. Proposition \ref{prop:overlapIsNonEmpty} ensures that a strictly positive $\epsilon$ always exists.
\subsubsection{Path generation}
Let $l^{\tt s},l^{\tt e}\in \mathbb Z_{[1,n_{\mathcal Y}]}$ be the \textit{start} and \textit{end} indices, i.e., such that $(v^0,x_0)\in \mathcal O_{l^{\tt s},l^{\tt s}}$ and $ H_\infty r\in \mathcal Y_{l^{\tt e}}$. Using connectedness of the collection of sets we can generate an \textit{index path}: a sequence $\{l_i\}_{i=1}^{n_l}$ corresponding to a path between $l_1 = l^{\tt s}$ and $l_{n_l} = l^{\tt e}$. The most straightforward approach is to compute the  undirected graph associated with the adjacency matrix and generate the shortest path connecting the nodes $l^{\tt s}$ and $l^{\tt e}$ using, e.g., the Dijkstra algorithm \cite{dijkstra2022note}. This returns the path with the least number of transitions. Modifying the weights of the adjacency matrix to minimize other metrics is also possible. A directed graph that heuristically produced short traveled distances in our simulation examples is obtained by using the following weights: 
\begin{equation}\label{eq:altWeights}
    w_{i,j} = b + \text{dist }(\mathcal Y_{i,j}\cap \mathcal Y_j, \text{segment}(H_\infty v^0, H_\infty r)), \text{ if } F_{i,j}=1,
\end{equation}
where $b\in\mathbb R_{>0}$  is a weight parameter penalizing the number of elements in a path and the second term is the distance between the relevant polytope  and the segment connecting the initial and desired setpoints.

Once the index path $\{ l_i\}_{i=1}^{n_l}$ has been generated, the intermediate reference command sequence, $\{r_i\}_{i=1}^{2n_l-1}$, is determined as
\begin{align*}
    &r_{2n_l-1} = r\\
    &r_{2i} =\quad \arg\hspace{-1cm} \min\limits_{v\in\mathcal R(\mathcal Y_{l_{i+1},l_{i}}\cap \mathcal Y_{l_{i+1}})}\|v-r_{2i+1}\|,\;i=  1,\dots, n_l-1, \\
    &r_{2i-1} = \quad\arg\hspace{-.85cm}\min\limits_{v\in\mathcal R(\mathcal Y_{l_{i},l_{i+1}}\cap \mathcal Y_{l_{i}})}\|v-r_{2i}\|,\;i=  1,\dots, n_l-1
\end{align*}
The sequence $\{r_i\}_{i=1}^{2n_l-1}$ is constructed by placing intermediate setpoints in the transition area between subsequent sets along the index path. Two setpoints are allocated for each transition area, one on each side of the associated gate. This ensures proper tracking performance. Without the steady-output compliance property in Assumption \ref{ass:connectedSequence} the intermediate reference command sequence may not be properly defined.

\subsubsection{Reference governor}

Algorithm \ref{algo:1} summarizes the proposed approach and corresponds to the computations to be performed online, at each time instant. In summary, we first determine whether we are ready to update the current index along the index path, then determine what safe set and what intermediate setpoint to use, finally, we compute the applied reference command using a CG scheme.

\begin{algorithm}
\caption{Input generation at time instant $k\in\mathbb Z_{\geq 0}$. } \label{algo:1}
\begin{algorithmic}[1]
\Require{ $x_k$: the current state; $v_{k-1}$: the  reference command at time $k-1$ with $v_{-1} = v^0$; $i_{k-1}\in\mathbb Z_{[1,n_l]}$: the location along the index path used at time $k-1$ (default to $1$).}
\State $i_k = i_{k-1}$
\If{$i_{k-1}<n_l$}
\If{$(v_{k-1},x_k)\in \mathcal O_{ l_{i_{k}},l_{i_k+1}}$}{ $i_k  = i_k+1$}
\EndIf
\EndIf
\State $\bar r_k = r_{2i_k -1}$, $\mathcal O_{k} = \mathcal O_{l_{i_k},l_{i_k}}$
\If{$(v_{k-1},x_k)\not\in\mathcal O_k$}{ $\mathcal O_{k} = \mathcal O_{l_{i_k-1},l_{i_k}}$,  $\bar r_k = r_{2i_k}$}
\EndIf
\State $v_k = \arg\min_{(v,x_k)\in\mathcal O_k}\|v-\bar r_k\|^2$
\end{algorithmic}
\end{algorithm}
Algorithm \ref{algo:1} proposes a unified CG approach irrespectively of whether the collection of polytopes is simply or strictly connected. In reality, if the collection is strictly connected one may use the extensions, reduce the number of intermediate setpoints by two and obtain possibly faster convergence to $r$.
\begin{rem}
    The proposed decomposition of the safe space  (weak extensions and intermediate setpoints) is also amenable to developing MPC-based control schemes. Feasibility guarantees may be obtained through chaining conditions applied to backward reachable sets corresponding to the intermediate setpoints. The details require careful treatment and are left as a topic for future research. Notably, it is expected that computing the MOAS, as we propose here, is much easier than the backward reachable sets. 
\end{rem}
\begin{rem}
    By construction, the proposed approach requires that the system trajectory enter the MOAS of each element and weak extension in the path. This can lead to slower convergence to the desired setpoint, as a constraint admissible trajectory that does not cross the individual MOAS might exist. This potential decrease in performance is offset by the scalability of our approach (see result in Section \ref{sec:compTimes}) and fast online computations (see times reported in Section \ref{sec:droneInCity}). Future work will explore ways to relax this requirement using, e.g., transient bridging controllers \cite{miller2000control}.
\end{rem}
As we show in the sequel, this approach has provable safety (Theorem \ref{thm:recursiveFeas}) and finite-time convergence guarantees (Theorem \ref{thm:finiteTimeConvToSteadyState}). Moreover, we show that computing the (weak) extensions is a straightforward  process (Propositions \ref{prop:HRepStrict}-\ref{prop:HRepLoose}) and, in some cases, the complexity of computing the  family of MOAS can be greatly reduced thereby making the method scalable to large connected collections (see Section \ref{sec:hyperrectangles}).

\section{Theoretical analysis}\label{sec:theoreticalProps}
\subsection{Theoretical properties of the (weak) extension}\label{sec:propsSequence}
We start by studying the (weak) extensions generated from a connected collection of polytopes. The key takeaways being that under suitable conditions the (weak) extension is a polytope (Proposition \ref{prop:HRepStrict}, Proposition \ref{prop:HRepLoose}) and that the intersection of the (weak) extensions with either of the constituent polytopes is full-dimensional (Proposition \ref{prop:overlapIsNonEmpty}). These results ensure the different MOAS can be computed efficiently and are nonempty.
Our first result relates to compactness of a connected collection.
\begin{lem} \label{lem:compact}
    Let ${\mathcal Y_1,\;\mathcal Y_2}$ form a simply connected collection of polytopes. If the elements of the connected collection are  compact, so is the extension of $\mathcal Y_1$ towards $\mathcal Y_2$.
\end{lem}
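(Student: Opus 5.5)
The extension is $\mathcal Y^{\tt e}_{1,2}=\mathcal Y_1\cup\mathcal Y^{\tt res}_{2,1}$, where $\mathcal Y^{\tt res}_{2,1}=\mathcal Y^{\tt o}_{2,1}\cap\mathcal Y_1$ by Definition \ref{def:auxSets}. So $\mathcal Y^{\tt e}_{1,2}$ is a finite union of two sets, and a finite union of compact sets in $\mathbb R^{n_y}$ is compact. The plan is therefore to prove that each of the two pieces is compact separately, using only that closedness and boundedness are preserved by intersections and finite unions.

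First, $\mathcal Y_1$ is compact by hypothesis. Second, $\mathcal Y^{\tt o}_{2,1}=\{y:\,A^{\mathcal Y_2}_{(-G_{2,1})}y\leq b^{\mathcal Y_2}_{(-G_{2,1})}\}$ is a finite intersection of closed half-spaces, hence closed. It may be unbounded: removing a facet constraint from a bounded polytope typically opens it up. Boundedness is not needed, however, since $\mathcal Y^{\tt res}_{2,1}=\mathcal Y^{\tt o}_{2,1}\cap\mathcal Y_1$ is the intersection of a closed set with a compact set, and is therefore compact. By Heine--Borel, $\mathcal Y^{\tt e}_{1,2}$, as the union of two closed bounded sets, is closed and bounded, hence compact.

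The only point that needs care is the indexing of the restriction. If one instead read the relevant piece as $\mathcal Y^{\tt o}_{1,2}\cap\mathcal Y_2$, the same argument applies with the roles swapped, using compactness of $\mathcal Y_2$: the piece is contained in $\mathcal Y_2$ and is an intersection of closed sets. Either way, every piece is a closed subset of one of the compact $\mathcal Y_i$. No use of the facet-overlap condition \eqref{eq:overlapCondConnectedSeq} or full-dimensionality is required, so I expect no genuine obstacle. The one thing to verify is that the possibly unbounded opening never appears in the union without being intersected with a compact element.
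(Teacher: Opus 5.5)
Your argument is correct and matches the paper's proof: the restriction is a closed set (an opening, i.e.\ a finite intersection of half-spaces) intersected with a compact polytope, hence compact, and the extension is a union of two compact sets. One bookkeeping point: by Definition \ref{def:auxSets} (and as used in the proof of Proposition \ref{prop:HRepStrict}), $\mathcal Y^{\tt res}_{2,1}=\mathcal Y^{\tt o}_{1,2}\cap\mathcal Y_2$, so the reading you treated as an aside is the intended one, and your primary reading would make $\mathcal Y^{\tt e}_{1,2}$ collapse to $\mathcal Y_1$; the argument is the same under either reading, so this does not affect validity.
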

\begin{proof}
We first note that $\mathcal Y^{\tt res}_{2,1}$ is the intersection between a closed set and a compact set and therefore compact. $\mathcal Y^{\tt e}_{1,2}$ is the union of two compact sets and thus compact.
\end{proof}
Next, we construct a closed form expression of the extensions for a strictly connected collection. 
\begin{prop}\label{prop:HRepStrict}
    Let $\{\mathcal Y_1,\mathcal Y_2\}$ form a strictly connected collection of polytopes. Then, $\mathcal Y^{\tt e}_{1,2}$ is a polytope and an H-representation of $\mathcal Y^{\tt e}_{1,2}$ is given by the pair
    \begin{equation}\label{eq:Ext_HRep}
        \left(\begin{bmatrix}
            A^{\mathcal Y^{\tt o}_{1,2}}\\
            A^{(\mathcal Y^{\tt res}_{2})^{\tt o}_{2,1}}
        \end{bmatrix}, \begin{bmatrix}
            b^{\mathcal Y^{\tt o}_{1,2}}\\
            b^{(\mathcal Y^{\tt res}_{2})^{\tt o}_{2,1}}
        \end{bmatrix} \right),
    \end{equation}
    where, $(\mathcal Y^{\tt res}_{2,1})^{\tt o}_{2,1}$ is the opening of $\mathcal Y^{\tt res}_{2,1}$ towards $\mathcal Y^{\tt res}_{1,2}$. 
\end{prop}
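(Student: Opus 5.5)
The plan is to reduce the claim to a set identity across the gate hyperplane and then settle the one inclusion that is not immediate. Write the gate hyperplane as ${\tt H}(a,\beta)$, with $a = A^{\mathcal Y_1}_{(G_{1,2})}$ and $\beta = b^{\mathcal Y_1}_{(G_{1,2})}$. Then $\mathcal Y_1\subseteq\{ay\le\beta\}$ and $\mathcal Y_2\subseteq\{ay\ge\beta\}$. Let $F={\tt H}(a,\beta)\cap\mathcal Y_1={\tt H}(a,\beta)\cap\mathcal Y_2$ be the common facet from Definition \ref{def:strictConnectedSec}. Let $\mathcal P$ denote the polytope given by \eqref{eq:Ext_HRep}.

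\textbf{Step 1: describe $\mathcal P$.} An H-representation of $\mathcal Y^{\tt res}_{2,1}=\mathcal Y^{\tt o}_{1,2}\cap\mathcal Y_2$ is given by the rows of $\mathcal Y^{\tt o}_{1,2}$, the rows of $\mathcal Y_2$, and the gate row $-ay\le-\beta$; we take its minimal version. Opening it towards $\mathcal Y^{\tt res}_{1,2}$ removes exactly the gate row. Hence
\[
\mathcal P=\mathcal Y^{\tt o}_{1,2}\cap\mathcal Q,
\]
where $\mathcal Q$ is cut out by the non-gate rows of the \emph{minimal} representation of $\mathcal Y^{\tt res}_{2,1}$. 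These rows are a subset of the rows of $\mathcal Y^{\tt o}_{1,2}$ and the non-gate rows of $\mathcal Y_2$. Splitting along the gate then gives:
\begin{itemize}
\item $\mathcal P\cap\{ay\ge\beta\}=\mathcal Y^{\tt res}_{2,1}$, since adding the gate row back recovers $\mathcal Y^{\tt res}_{2,1}$;
\item $\mathcal P\cap\{ay\le\beta\}=\mathcal Y_1\cap\mathcal Q$, since $\mathcal Y^{\tt o}_{1,2}\cap\{ay\le\beta\}=\mathcal Y_1$.
\end{itemize}
Therefore $\mathcal P=\mathcal Y^{\tt e}_{1,2}$ if and only if $\mathcal Y_1\subseteq\mathcal Q$. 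In words, every retained non-gate row of $\mathcal Y_2$ must be valid on $\mathcal Y_1$; the rows coming from $\mathcal Y^{\tt o}_{1,2}$ are trivially valid there.

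\textbf{Step 2: the main obstacle, $\mathcal Y_1\subseteq\mathcal Q$.} This is where minimality and the common-facet hypothesis are essential. A naive representation that keeps all rows of $\mathcal Y_2$ can fail. For example, $\mathcal Y_1$ may widen away from the gate faster than the supporting halfspace of a side of $\mathcal Y_2$, so the extended side of $\mathcal Y_2$ cuts into $\mathcal Y_1$.

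I would argue by contradiction. Suppose a non-gate row $cy\le d$ of $\mathcal Y_2$ is irredundant in $\mathcal Y^{\tt res}_{2,1}$, and some $\bar y\in\mathcal Y_1$ has $c\bar y>d$. Because $F$ is common to both polytopes, $cy\le d$ holds on $F$, and hence on $\mathcal Y_1\cap{\tt H}(a,\beta)$. The set $\{y\in\mathcal Y_1: cy\ge d\}$ therefore touches the gate only inside the face $F\cap{\tt H}(c,d)$, which has dimension at most $n_y-2$.

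Now look near a point $z$ of $F\cap{\tt H}(c,d)$. Consider the cone of $\mathcal Y_1$ at $z$ and the cone of $\mathcal Y_2$ at $z$. The claim to establish is the following. If the extension of ${\tt H}(c,d)$ into $\{ay<\beta\}$ enters $\mathcal Y_1$, then the reflection of that behaviour on the other side of the gate means some facet row of $\mathcal Y_1$ passing through $z$ is tighter than $cy\le d$ on $\{ay\ge\beta\}$. Concretely, I would consider the two-dimensional section spanned by $a$ and the projection of $c$, and compare slopes. This would show $\{ay\ge\beta\}\cap\mathcal Y^{\tt o}_{1,2}\subseteq\{cy\le d\}$ locally. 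By convexity of $\mathcal Y^{\tt o}_{1,2}\cap\mathcal Y_2$ the inclusion then holds globally, so $cy\le d$ is redundant in $\mathcal Y^{\tt res}_{2,1}$, which is the desired contradiction. I expect this local-to-global redundancy argument to be the delicate part, and I would first check it carefully in the planar case.

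\textbf{Step 3: conclude.} With $\mathcal Y_1\subseteq\mathcal Q$, Step 1 gives $\mathcal P=\mathcal Y_1\cup\mathcal Y^{\tt res}_{2,1}=\mathcal Y^{\tt e}_{1,2}$. Thus $\mathcal Y^{\tt e}_{1,2}$ is a finite intersection of halfspaces, i.e.\ a polytope, with H-representation \eqref{eq:Ext_HRep}. Compactness, if it is needed, follows from Lemma \ref{lem:compact}.
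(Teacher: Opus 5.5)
Your Step 1 is correct and matches the paper's reduction. The proposition holds iff $\mathcal Y_1\subseteq\mathcal Q$, that is, iff every non-gate row $cy\le d$ of $\mathcal Y_2$ violated somewhere on $\mathcal Y_1$ drops out of the minimal representation of $\mathcal Y^{\tt res}_{2,1}$. This follows once you show $\mathcal Y^{\tt o}_{1,2}\cap\{ay\ge\beta\}\subseteq\{cy\le d\}$.

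The gap is Step 2, which carries the whole content of the proposition. It is only sketched, and the sketched route does not work as stated.

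\emph{The anchor point $z$ is not supplied.} Nothing in your hypotheses gives a point $z\in F\cap{\tt H}(c,d)$, because a violation on $\mathcal Y_1$ does not force ${\tt H}(c,d)$ to meet $F$. For example, in the plane take gate $y_2=0$, $\mathcal Y_1=\mathrm{conv}\{(-1,0),(1,0),(10,-1),(-1,-1)\}$ and $\mathcal Y_2=\mathrm{conv}\{(-1,0),(1,0),(2,1),(1.5,2),(-1,1)\}$. The facet row $y_1+\tfrac12 y_2\le\tfrac52$ of $\mathcal Y_2$ fails at $(10,-1)$ yet is strict on all of $F$. Getting $z$ out of your irredundancy assumption already needs the argument below.

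\emph{The local claim is false as stated once $n_y\ge 3$.} Containment of the cone of $\mathcal Y^{\tt o}_{1,2}\cap\{ay\ge\beta\}$ at $z$ in $\{cy\le d\}$ is in general certified only by a nonnegative combination of several active rows of $\mathcal Y_1$ plus the gate row, not by one ``tighter'' facet row. Take $z=0$, $a=(0,0,1)$, $\beta=0$, active rows $y_1+y_3\le 0$ and $y_2+y_3\le 0$, and $c=(1,1,1)$. The cone of $\mathcal Y_1$ contains $(1,1,-1)$, which violates $cy\le 0$. The cone of $\mathcal Y^{\tt o}_{1,2}\cap\{y_3\ge 0\}$ lies in $\{cy\le 0\}$. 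Yet neither row alone implies $cy\le 0$ on $\{y_3\ge 0\}$.

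\emph{The planar section cannot decide the question.} A section spanned by $a$ and the projection of $c$ tests only directions in that plane, and it need not contain the violating point $\bar y$. Comparing slopes there cannot yield the containment.

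\emph{Minor slip.} The convex set for your local-to-global step must be $\mathcal Y^{\tt o}_{1,2}\cap\{ay\ge\beta\}$. The set $\mathcal Y^{\tt o}_{1,2}\cap\mathcal Y_2$ lies in $\{cy\le d\}$ trivially.

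The missing idea is a direct, global convexity argument, which is what the paper uses. Suppose $\bar y\in\mathcal Y_1$ and $w\in\mathcal Y^{\tt o}_{1,2}$ with $aw\ge\beta$ satisfy $c\bar y>d$ and $cw>d$. Since $a\bar y\le\beta\le aw$, the segment $[\bar y,w]$ meets ${\tt H}(a,\beta)$ at a point $y_3$. This point lies in the convex set $\mathcal Y^{\tt o}_{1,2}$ and satisfies $cy_3>d$. But by the common-facet property, $\mathcal Y^{\tt o}_{1,2}\cap{\tt H}(a,\beta)=\mathcal Y_1\cap{\tt H}(a,\beta)=F\subseteq\mathcal Y_2\subseteq\{cy\le d\}$, a contradiction.

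Hence every non-gate row of $\mathcal Y_2$ violated on $\mathcal Y_1$ is implied by $\mathcal Y^{\tt o}_{1,2}\cap\{ay\ge\beta\}$ alone. All such rows are therefore simultaneously redundant in $\mathcal Y^{\tt res}_{2,1}$, so every non-gate facet inequality of $\mathcal Y^{\tt res}_{2,1}$ is valid on $\mathcal Y_1$, giving $\mathcal Y_1\subseteq\mathcal Q$. No point $z$, tangent cones, or sections are needed.
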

\begin{proof}
 Let $\tilde {\mathcal Y}$ be the polytope defined by the pair in \eqref{eq:Ext_HRep}. The inclusion $\tilde {\mathcal Y}\subseteq \mathcal Y^{\tt e}_{1,2}$ holds directly as $y\in \tilde {\mathcal Y}$ and $A^{\mathcal Y_1}_{(G_{1,2})} y\leq b^{\mathcal Y_1}_{(G_{1,2})}\implies y\in\mathcal Y_1\subseteq \mathcal Y^{\tt e}_{1,2}$ and $y\in \tilde {\mathcal Y}$ and $A^{\mathcal Y_1}_{(G_{1,2})} y\geq b^{\mathcal Y_1}_{(G_{1,2})}\implies y\in\mathcal Y^{o}_{1,2}\cap\mathcal Y_2\subseteq \mathcal Y^{\tt e}_{1,2}$.\\
We now show that the inclusion $\mathcal Y^{\tt e}_{1,2}\subseteq \tilde {\mathcal Y}$ holds. By definition $y\in \mathcal Y^{\tt res}_{2,1}$ implies $y\in \tilde Y$. It remains to show that for all $y\in\mathcal Y_1$, $y\in(\mathcal Y^{\tt res}_{2,1})^{\tt o}_{2,1}$. Assume there exists $k\in Z_{[1,n_{\mathcal Y_2}]}$ with $k\neq G_{2,1}$ such that there is $y_1\in\mathcal Y_1$ $A^{\mathcal Y_2}_{(k)} y_1>b^{\mathcal Y_2}_{(k)}$ it is sufficient to demonstrate that $(A^{\mathcal Y_2}_{(k)},b^{\mathcal Y_2}_{(k)})$ does not represent a facet of  $\mathcal Y^{\tt res}_{2,1}$. By the strict connectedness property, it holds that for all $y\in\mathcal Y_1\cap {\tt H}(A^{\mathcal Y_2}_{(G_{2,1})},b^{\mathcal Y_2}_{(G_{2,1})}
)$, $A^{\mathcal Y_2}_{(k)} y\leq b^{\mathcal Y_2}_{(k)}$. We first show that: 
\begin{equation}\label{eq:proofHrepStrict}
    \forall y\in\mathcal Y^{\tt o}_{1,2}: A^{\mathcal Y_2}_{(G_{2,1})} y \leq b^{\mathcal Y_2}_{(G_{2,1})} \implies A^{\mathcal Y_2}_{(k)} y \leq b^{\mathcal Y_2}_{(k)}.
\end{equation}
Indeed, assume the statement does not hold and let $y_2$ be such an element. Then, there is $y_3(\kappa) = \kappa y_2 + (1-\kappa)y_1$ with $\kappa \in \mathbb R_{[0,1]}$ such that  $A^{\mathcal Y_2}_{(G_{2,1})}y_3(\kappa) = b^{\mathcal Y_2}_{(G_{2,1})}$ and $A^{\mathcal Y_2}_{(k)} y > b^{\mathcal Y_2}_{(k)}$. This implies that $y_3(\kappa)\not \in\mathcal Y^{\tt o}_{1,2}$ which is a contradiction as $\mathcal Y^{\tt o}_{1,2}$ is convex. Leveraging \eqref{eq:proofHrepStrict} it holds that 
\begin{align*}
    \mathcal Y^{\tt res}_{2,1} &= \mathcal Y^{\tt o}_{1,2}\cap {\tt P}(A^{\mathcal Y_2}_{(-k)},b^{\mathcal Y_2}_{(-k)})\cap {\tt P}(A^{\mathcal Y_2}_{(k)},b^{\mathcal Y_2}_{(k)})\\
    &= (\mathcal Y^{\tt o}_{1,2}\cap {\tt P}(A^{\mathcal Y_2}_{(k)},b^{\mathcal Y_2}_{(k)}))\cap {\tt P}(A^{\mathcal Y_2}_{(-k)},b^{\mathcal Y_2}_{(-k)})\\
    & = \mathcal Y^{\tt o}_{1,2}\cap {\tt P}(A^{\mathcal Y_2}_{(-k)},b^{\mathcal Y_2}_{(-k)}).
\end{align*}
Thus, $\mathcal Y_1\subseteq(\mathcal Y^{\tt res}_{2,1})^{\tt o}_{2,1}$, concluding the proof.   
\end{proof}
While Proposition \ref{prop:HRepStrict} is attractive for a strictly connected collection it cannot be used in the more general case of a simply connected collection. Indeed,  the  extension of two elements of a simply connected sequence may or may not be a polytope (see $\mathcal Y_{2,3}^{\tt e}$ and $\mathcal Y_{3,2}^{\tt e}$ in Figure \ref{fig:exampleConnectedSeq}). 
Fortunately, the weak extension is also a polytope and an H-representation is available for it . Before we present this we introduce two preliminary results. The first of which establishes a connection between the weak extension and the extensions for a strictly connected collection of polytopes. 
\begin{lem}
    \label{lem:intersecOfExtIsWeak} Let $\{\mathcal Y_1,\mathcal Y_2\}$ form a strictly connected collection of polytopes. Then, the following identity holds
    \begin{equation*}
        \mathcal Y^{\tt e}_{1,2} \cap \mathcal Y^{\tt e}_{2,1} = {\tt WE}(\mathcal Y_1,\mathcal Y_2).
    \end{equation*}
\end{lem}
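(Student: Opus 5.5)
Write $h=G_{1,2}$, $h'=G_{2,1}$. The separating hyperplane is ${\tt H}(a,b)$ with $a=A^{\mathcal Y_1}_{(h)}$, $b=b^{\mathcal Y_1}_{(h)}$. Up to a positive constant, $A^{\mathcal Y_2}_{(h')}=-a$, so $\mathcal Y_1\subseteq\{ay\le b\}$ and $\mathcal Y_2\subseteq\{ay\ge b\}$.

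The plan is a direct set-theoretic computation. Expand both extensions by distributivity:
\begin{equation*}
\mathcal Y^{\tt e}_{1,2}\cap\mathcal Y^{\tt e}_{2,1}=(\mathcal Y_1\cup\mathcal Y^{\tt res}_{2,1})\cap(\mathcal Y_2\cup\mathcal Y^{\tt res}_{1,2})
=(\mathcal Y_1\cap\mathcal Y_2)\cup(\mathcal Y_1\cap\mathcal Y^{\tt res}_{1,2})\cup(\mathcal Y^{\tt res}_{2,1}\cap\mathcal Y_2)\cup(\mathcal Y^{\tt res}_{2,1}\cap\mathcal Y^{\tt res}_{1,2}).
\end{equation*}
I will then simplify each of the four pieces and compare the union with ${\tt WE}(\mathcal Y_1,\mathcal Y_2)=\mathcal Y^{\tt res}_{1,2}\cup\mathcal Y^{\tt res}_{2,1}$.

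\textbf{Second and third pieces.} Since $\mathcal Y^{\tt res}_{1,2}\subseteq\mathcal Y_1$, the second piece equals $\mathcal Y^{\tt res}_{1,2}$. Symmetrically, the third piece equals $\mathcal Y^{\tt res}_{2,1}$. So these two terms already give ${\tt WE}(\mathcal Y_1,\mathcal Y_2)$, which yields the inclusion $\supseteq$.

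\textbf{First and fourth pieces.} For $\subseteq$ it remains to show both are contained in ${\tt WE}$.
\begin{itemize}
\item Fourth piece: it is contained in $\mathcal Y^{\tt res}_{1,2}$, hence in ${\tt WE}$.
\item First piece: the gate $\mathcal Y_1\cap\mathcal Y_2$ lies in ${\tt H}(a,b)$. I must show that every gate point $y$ lies in $\mathcal Y^{\tt res}_{2,1}=\mathcal Y^{\tt o}_{1,2}\cap\mathcal Y_2$. Since $y\in\mathcal Y_1$, it satisfies every row of $\mathcal Y_1$, in particular all rows other than $h$, so $y\in\mathcal Y^{\tt o}_{1,2}$. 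Together with $y\in\mathcal Y_2$, this gives the claim.
\end{itemize}

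\textbf{Strict connectedness.} The argument above appears to use only the definitions. Strict connectedness (Definition~\ref{def:strictConnectedSec}) does not seem needed for this identity; it is relevant only so that the sets are polytopes, as in Proposition~\ref{prop:HRepStrict}. The main point to check carefully is that the opening and restriction indices are used consistently. Concretely, $\mathcal Y^{\tt res}_{2,1}$ means $\mathcal Y^{\tt o}_{1,2}\cap\mathcal Y_2$, following the notation of Proposition~\ref{prop:HRepStrict}'s proof, and $\mathcal Y^{\tt res}_{1,2}=\mathcal Y^{\tt o}_{2,1}\cap\mathcal Y_1$.

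\textbf{Main obstacle.} The hardest step is bookkeeping rather than mathematics: fixing this index convention from Definition~\ref{def:auxSets}. If the convention turns out reversed, the same four-term expansion works with the roles of $\mathcal Y_1$ and $\mathcal Y_2$ swapped.
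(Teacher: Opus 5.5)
Your proof is correct and takes the same route as the paper. The paper also expands $(\mathcal Y_1\cup\mathcal Y^{\tt res}_{2,1})\cap(\mathcal Y_2\cup\mathcal Y^{\tt res}_{1,2})$ by distributivity and absorbs the terms using $\mathcal Y^{\tt res}_{1,2}\subseteq\mathcal Y_1$, $\mathcal Y^{\tt res}_{2,1}\subseteq\mathcal Y_2$ and $\mathcal Y_1\cap\mathcal Y_2\subseteq\mathcal Y^{\tt res}_{1,2}\cap\mathcal Y^{\tt res}_{2,1}$, with the same index convention you adopted. You are also right that this last containment follows from $\mathcal Y_i\subseteq\mathcal Y^{\tt o}_{i,j}$ alone, so the strict connectedness the paper invokes for it is not actually needed.
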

\begin{proof}
    The proof follows from associativity and distributivity of unions and intersections as well as the fact $\mathcal Y_1\cap \mathcal Y_2\subseteq \mathcal Y^{\tt res}_{2,1}$ and $\mathcal Y_1\cap \mathcal Y_2\subseteq \mathcal Y^{\tt res}_{1,2} $ for a striclty connected collection. Indeed 
    \begin{align*}
        &\mathcal Y^{\tt e}_{1,2} \cap \mathcal Y^{\tt e}_{2,1} = (\mathcal Y_1 \cup \mathcal Y^{\tt res}_{2,1} )\cap (\mathcal Y_2 \cup \mathcal Y^{\tt res}_{1,2} )\\
        =& (\mathcal Y_1 \cap (\mathcal Y_2 \cup \mathcal Y^{\tt res}_{1,2})) \cup (\mathcal Y^{\tt res}_{2,1}\cap (\mathcal Y_2 \cup \mathcal Y^{\tt res}_{1,2}))\\
        =& ((\mathcal Y_1  \cap \mathcal Y_2) \cup \mathcal Y^{\tt res}_{1,2})\cup (\mathcal Y^{\tt res}_{2,1}\cap (\mathcal Y_2 \cup \mathcal Y^{\tt res}_{1,2}))\\
        =& \mathcal Y^{\tt res}_{1,2}  \cup(\mathcal Y^{\tt res}_{2,1} \cap (\mathcal Y_2 \cup \mathcal Y^{\tt res}_{1,2} ))\\
        =& \mathcal Y^{\tt res}_{1,2}  \cup(\mathcal Y^{\tt res}_{2,1}  \cup (\mathcal Y_2\cap\mathcal Y_1))\\
        =& \mathcal Y^{\tt res}_{1,2}  \cup\mathcal Y^{\tt res}_{2,1}  = {\tt WE}(\mathcal Y_1,\mathcal Y_2).
    \end{align*}
\end{proof}
Lemma \ref{lem:intersecOfExtIsWeak} directly implies that for a strictly connected pair the weak extension is a subset of the extension. Next, we show that for a simply connected pair of polytopes, the weak extension is equivalent to the extension of a strictly connected pair of subsets.
\begin{figure*}
\centering
\begin{subfigure}{.45\textwidth}
  \centering
 \includegraphics[width=1\linewidth]{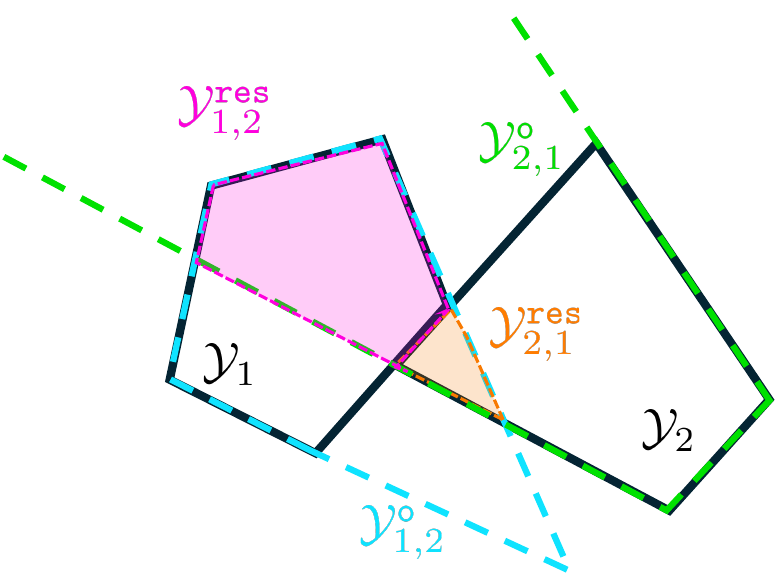}
    \caption{{ Connected collection of polytopes with $n_{\mathcal Y}=2$ also shown are relevant openings and restrictions.}}
    \label{fig:weakExt1}
\end{subfigure}%
$\;$
\begin{subfigure}{.45\textwidth}
  \centering
  \includegraphics[width=1\linewidth]{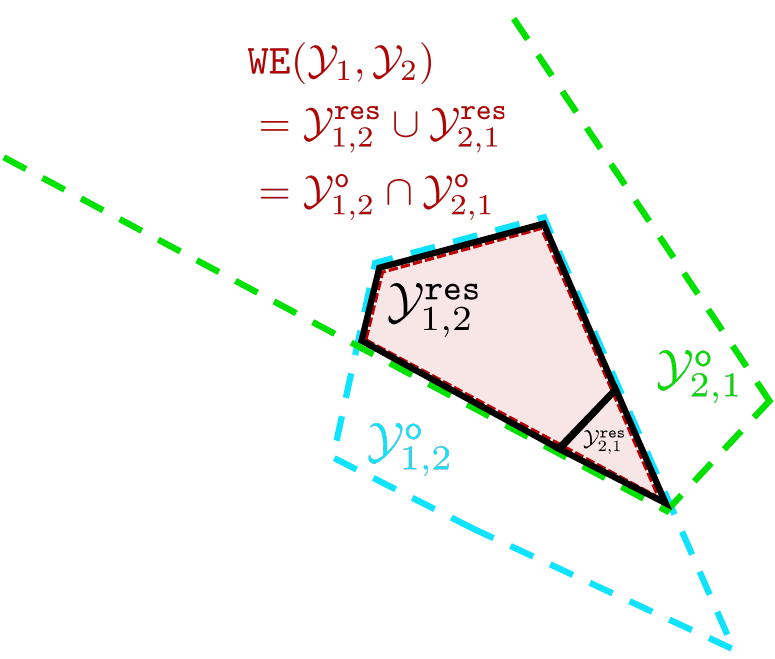}
  \caption{Generating the weak extension from the openings of the connected collection of polytopes.}
  \label{fig:weakExt2}
\end{subfigure}%
\caption{Illustration of the weak extension for a connected collection of polytopes}
\label{fig:exampleWeakExtension}
\end{figure*}
\begin{lem}\label{lem:weakExt}
     Let ${\mathcal Y_1,\;\mathcal Y_2}$ form a  simply connected collection of polytopes. Then, the restricted collection $\{\mathcal Y^{\tt res}_{1,2}, \mathcal Y^{\tt res}_{2,1}\}$ is a strictly connected collection and, moreover, 
     \begin{equation}\label{eq:identity1}
         {\tt WE}(\mathcal Y_1,\mathcal Y_2) = (\mathcal Y^{\tt res}_{1,2} )^{\tt e}_{1,2} = (\mathcal Y^{\tt res}_{2,1})^{\tt e}_{2,1}.
     \end{equation}
\vspace{.1cm} 
\end{lem}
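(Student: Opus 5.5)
Write $k_1=G_{1,2}$, $k_2=G_{2,1}$, and let $h(y)=A^{\mathcal Y_1}_{(k_1)}y-b^{\mathcal Y_1}_{(k_1)}$. Up to a positive factor, $-h$ is the corresponding affine function for row $k_2$ of $\mathcal Y_2$. So $\mathcal Y_1\subseteq\{h\le 0\}$, $\mathcal Y_2\subseteq\{h\ge 0\}$, and the gate lies in ${\tt H}:=\{h=0\}$.

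\textbf{Step 1: H-representations.} Since $\mathcal Y^{\tt o}_{1,2}$ is $\mathcal Y_1$ with row $k_1$ dropped, and $\mathcal Y^{\tt o}_{2,1}$ is $\mathcal Y_2$ with row $k_2$ dropped, we get
\begin{equation*}
\mathcal Y^{\tt res}_{1,2}=\mathcal Y^{\tt o}_{2,1}\cap\mathcal Y_1=\mathcal Y^{\tt o}_{1,2}\cap\mathcal Y^{\tt o}_{2,1}\cap\{h\le 0\},\qquad
\mathcal Y^{\tt res}_{2,1}=\mathcal Y^{\tt o}_{1,2}\cap\mathcal Y^{\tt o}_{2,1}\cap\{h\ge 0\}.
\end{equation*}
So both restrictions are cut from the same convex polytope $\mathcal K:=\mathcal Y^{\tt o}_{1,2}\cap\mathcal Y^{\tt o}_{2,1}$ by opposite half-spaces of ${\tt H}$. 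Everything else follows from this observation.

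\textbf{Step 2: strict connectedness of the restricted pair.} I check the conditions of Definition~\ref{def:connectedSec} and Definition~\ref{def:strictConnectedSec} in turn.

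Full-dimensionality: condition (\ref{def:connectedSec}.c) for the original pair gives a point $y$ and a relative ball $(\{y\}\oplus\epsilon\mathcal B)\cap{\tt H}$ contained in $\mathcal Y_1\cap\mathcal Y_2$. At $y$ every constraint of $\mathcal Y_1$ other than row $k_1$ holds. Those constraints differ from $h$ as facets, so I expect them to be strict at some point of ${\tt ri}(\mathcal Y_1\cap\mathcal Y_2)$. The same should hold for the non-$k_2$ constraints of $\mathcal Y_2$. Hence a small full ball around such a point lies in $\mathcal K$, and its two halves lie in the two restrictions, so both have nonempty interior.

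Adjacency and the common facet: the intersection $\mathcal Y^{\tt res}_{1,2}\cap\mathcal Y^{\tt res}_{2,1}=\mathcal K\cap{\tt H}$ is nonempty and contains the $(n_y-1)$-dimensional relative ball, and it lies in ${\tt H}$. Moreover
\begin{equation*}
{\tt H}\cap\mathcal Y^{\tt res}_{1,2}=\mathcal K\cap{\tt H}={\tt H}\cap\mathcal Y^{\tt res}_{2,1},
\end{equation*}
which is exactly the common-facet condition of Definition~\ref{def:strictConnectedSec}.

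Minimality caveat: the restrictions' H-representations may be non-minimal, so I need that $h$ actually defines a facet of each restriction. This follows from the $(n_y-1)$-dimensional piece of ${\tt H}$ lying in each restriction.

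\textbf{Step 3: the identity \eqref{eq:identity1}.} The opening of $\mathcal Y^{\tt res}_{1,2}$ towards $\mathcal Y^{\tt res}_{2,1}$ drops the row $h\le 0$, which leaves $\mathcal K$. I expect redundant rows to be harmless here, since dropping them does not change the set. Therefore
\begin{equation*}
(\mathcal Y^{\tt res}_{2,1})^{\tt res}=\mathcal K\cap\mathcal Y^{\tt res}_{2,1}=\mathcal Y^{\tt res}_{2,1},
\end{equation*}
and so $(\mathcal Y^{\tt res}_{1,2})^{\tt e}_{1,2}=\mathcal Y^{\tt res}_{1,2}\cup\mathcal Y^{\tt res}_{2,1}={\tt WE}(\mathcal Y_1,\mathcal Y_2)$. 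The other equality follows by symmetry. Both unions also equal $\mathcal K$, which yields convexity as a by-product.

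\textbf{Main obstacle.} The delicate point is the minimal-representation bookkeeping: openings are defined through minimal H-representations, and the restrictions' stacked rows may be redundant or rescaled. The full-dimensionality argument in Step 2 also needs care, since it must find a gate point where all other constraints are strict. For that I would take a point in ${\tt ri}$ of the gate and argue by contradiction that a non-gate facet hyperplane containing a relatively open subset of ${\tt H}$ would coincide with ${\tt H}$.
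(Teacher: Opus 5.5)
Your proof is correct and follows essentially the paper's route. Strict connectedness comes from ${\tt H}\cap\mathcal Y^{\tt res}_{1,2}=\mathcal Y_1\cap\mathcal Y_2={\tt H}\cap\mathcal Y^{\tt res}_{2,1}$, and the identity comes from the inclusion $\mathcal Y^{\tt res}_{2,1}\subseteq(\mathcal Y^{\tt res}_{1,2})^{\tt o}_{1,2}$ read off the stacked H-representations; your $\mathcal K$ packaging additionally supplies the full-dimensionality of the restrictions (left implicit in the paper) and yields Proposition~\ref{prop:HRepLoose} as a by-product.

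One small correction to Step~3: with minimal representations the opening of $\mathcal Y^{\tt res}_{1,2}$ can be strictly larger than $\mathcal K$ (e.g.\ a row of $\mathcal Y_2$ that is redundant only because of $h\le 0$ gets discarded). So ``leaves $\mathcal K$'' should read ``contains $\mathcal K$'', which holds because its rows are a subset of those defining $\mathcal K$; that inclusion is all you need, and the paper makes the same identification.
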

\begin{proof}
    To show that the restricted collection is a strictly connected collection we show that the gate connecting $\mathcal Y_1,\;\mathcal Y_2$ has the required properties for strict connectedness. Indeed,  $\mathcal Y^{\tt res}_{1,2}  \cap {\tt H}(A^{\mathcal Y_2}_{(G_{2,1})},b^{\mathcal Y_2}_{(G_{2,1})}) = \mathcal Y_2\cap \mathcal Y_1 \cap {\tt H}(A^{\mathcal Y_2}_{(G_{2,1})},b^{\mathcal Y_2}_{(G_{2,1})}) = \mathcal Y^{\tt res}_{2,1}\cap {\tt H}(A^{\mathcal Y_2}_{(G_{2,1})},b^{\mathcal Y_2}_{(G_{2,1})})$, where we used the definition of openings and commutativity and associativity of the intersection operator. Combining this with connectedness of $\mathcal Y_1$ and $\mathcal Y_2$ we conclude that $\mathcal Y^{\tt res}_{1,2}$ and $\mathcal Y^{\tt res}_{2,1}$ are strictly connected. We now turn our attention to the first equality in the identity \eqref{eq:identity1}. 
    The right hand side is given by $$(\mathcal Y^{\tt res}_{1,2} )^{\tt e}_{1,2}=\mathcal Y^{\tt res}_{1,2}  \cup \left(\mathcal Y^{\tt res}_{2,1} \cap(\mathcal Y^{\tt res}_{1,2} )^{\tt o}_{1,2}\right).$$ The left hand side is $${\tt WE}(\mathcal Y_1,\mathcal Y_2) = \mathcal Y^{\tt res}_{1,2}  \cup \mathcal Y^{\tt res}_{2,1} .$$
    Then, the following condition is sufficient for the first equality in \eqref{eq:identity1} to hold:
    \begin{equation}\label{eq:identity1pt1}
        \mathcal Y^{\tt res}_{2,1} \subseteq (\mathcal Y^{\tt res}_{1,2} )^{\tt o}_{1,2}.
    \end{equation}
    This can easily be shown by noting that 
    \begin{align*}
        x\in \mathcal Y^{\tt res}_{2,1}  &\iff \begin{bmatrix}
            A^{\mathcal Y_1}_{(-G_{1,2})}\\
            A^{\mathcal Y_2}
        \end{bmatrix} x \leq \begin{bmatrix}
            b^{\mathcal Y_1}_{(-G_{1,2})}\\
            b^{\mathcal Y_2}
        \end{bmatrix},\\
        x\in (\mathcal Y^{\tt res}_{1,2} )^{\tt o}_{1,2} &\iff \begin{bmatrix}
            A^{\mathcal Y_1}_{(-G_{1,2})}\\
            A^{\mathcal Y_2}_{(-G_{2,1})}
        \end{bmatrix} x \leq \begin{bmatrix}
            b^{\mathcal Y_1}_{(-G_{1,2})}\\
            b^{\mathcal Y_2}_{(-G_{2,1})}
        \end{bmatrix},
    \end{align*}
    where we used the definitions of openings as well as the fact that the common hyperplane for the restriction is the same as the one for the original connected collection (shown in the first part of this proof). The proof for the second equality in \eqref{eq:identity1} holds by symmetry.
\end{proof}
We now show that the weak extension is a polytope.
\begin{prop}
    \label{prop:HRepLoose}
    Let ${\mathcal Y_1,\;\mathcal Y_2}$ form a simply connected collection of polytopes.  Then, ${\tt WE}(\mathcal Y_1,\mathcal Y_2)$ is a polytope. Moreover, the following identity holds,
\begin{equation}
    {\tt WE}(\mathcal Y_1,\mathcal Y_{2}) = \mathcal Y^{\tt o}_{1,2}\cap\mathcal Y^{\tt o}_{2,1}.
\end{equation}
\end{prop}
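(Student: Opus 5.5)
The plan is to prove the set identity ${\tt WE}(\mathcal Y_1,\mathcal Y_2)=\mathcal Y^{\tt o}_{1,2}\cap\mathcal Y^{\tt o}_{2,1}$ directly. Polyhedrality then follows at once, since the right-hand side is the intersection of two finitely described half-space systems. Its H-representation is obtained by stacking $(A^{\mathcal Y_1}_{(-G_{1,2})},b^{\mathcal Y_1}_{(-G_{1,2})})$ and $(A^{\mathcal Y_2}_{(-G_{2,1})},b^{\mathcal Y_2}_{(-G_{2,1})})$. To simplify notation, write $a\triangleq A^{\mathcal Y_1}_{(G_{1,2})}$ and $\beta\triangleq b^{\mathcal Y_1}_{(G_{1,2})}$. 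By the relation between connection indices stated after the definition of $G$, we have, up to a positive scaling, $A^{\mathcal Y_2}_{(G_{2,1})}=-a$ and $b^{\mathcal Y_2}_{(G_{2,1})}=-\beta$. In other words, the removed constraints are $ay\le\beta$ for $\mathcal Y_1$ and $ay\ge\beta$ for $\mathcal Y_2$.

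For the inclusion $\subseteq$, I would unpack the definitions. First, $\mathcal Y^{\tt res}_{1,2}=\mathcal Y^{\tt o}_{2,1}\cap\mathcal Y_1$ and $\mathcal Y_1\subseteq\mathcal Y^{\tt o}_{1,2}$ (the opening only drops a constraint), so $\mathcal Y^{\tt res}_{1,2}\subseteq\mathcal Y^{\tt o}_{1,2}\cap\mathcal Y^{\tt o}_{2,1}$. The same argument with the indices swapped gives $\mathcal Y^{\tt res}_{2,1}\subseteq\mathcal Y^{\tt o}_{1,2}\cap\mathcal Y^{\tt o}_{2,1}$, and taking the union gives the inclusion.

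For the inclusion $\supseteq$, take $y\in\mathcal Y^{\tt o}_{1,2}\cap\mathcal Y^{\tt o}_{2,1}$ and split into two cases according to the sign of $ay-\beta$. If $ay\le\beta$, then $y$ satisfies every constraint of $\mathcal Y_1$: those indexed by $-G_{1,2}$ hold because $y\in\mathcal Y^{\tt o}_{1,2}$, and the remaining one is $ay\le\beta$. Hence $y\in\mathcal Y_1\cap\mathcal Y^{\tt o}_{2,1}=\mathcal Y^{\tt res}_{1,2}$. If instead $ay\ge\beta$, the gate constraint of $\mathcal Y_2$ holds by the sign convention above, so $y\in\mathcal Y_2\cap\mathcal Y^{\tt o}_{1,2}=\mathcal Y^{\tt res}_{2,1}$. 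In both cases $y\in{\tt WE}(\mathcal Y_1,\mathcal Y_2)$.

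The main obstacle is justifying that the two gate hyperplanes are the same hyperplane with opposite orientations, namely $A^{\mathcal Y_2}_{(G_{2,1})}=-\lambda a$ and $b^{\mathcal Y_2}_{(G_{2,1})}=-\lambda\beta$ with $\lambda>0$, together with the fact that the two polytopes lie on opposite sides of it. I would take this from the remark following the definition of the connection matrix. If a justification were needed, I would argue it from condition \ref{def:connectedSec}.c. The intersection contains a relatively open $(n_y-1)$-dimensional patch of ${\tt H}(a,\beta)$, so this hyperplane must support a facet of both polytopes. Since each polytope is full-dimensional and the intersection is contained in the hyperplane, the interiors lie on opposite sides, which forces the opposite orientation. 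As an alternative route, one could invoke Lemma \ref{lem:weakExt} and Proposition \ref{prop:HRepStrict}, but the direct two-case argument above avoids the heavier machinery.
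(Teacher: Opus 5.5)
Your proof is correct, but it takes a different route from the paper. The paper argues by reduction. For a strictly connected pair it writes ${\tt WE}(\mathcal Y_1,\mathcal Y_2)=\mathcal Y^{\tt e}_{1,2}\cap\mathcal Y^{\tt e}_{2,1}$ (Lemma~\ref{lem:intersecOfExtIsWeak}) and substitutes the H-representation of each extension from Proposition~\ref{prop:HRepStrict}. For a pair that is only simply connected, it first uses Lemma~\ref{lem:weakExt}: the restricted pair $\{\mathcal Y^{\tt res}_{1,2},\mathcal Y^{\tt res}_{2,1}\}$ is strictly connected and ${\tt WE}(\mathcal Y_1,\mathcal Y_2)=(\mathcal Y^{\tt res}_{1,2})^{\tt e}_{1,2}$. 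It then invokes Proposition~\ref{prop:HRepStrict} again.

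What the paper's route buys is structural: the weak extension is exhibited as a genuine extension of a strictly connected pair, which keeps the two notions tied together. The price is that it inherits the facet-level reasoning of Proposition~\ref{prop:HRepStrict} about minimal H-representations of restrictions and of openings of restrictions. The identity with $\mathcal Y^{\tt o}_{1,2}\cap\mathcal Y^{\tt o}_{2,1}$ still has to be read off by unpacking those nested objects.

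Your double inclusion, with the case split on the sign of $ay-\beta$, is elementary and self-contained. It needs no strict/simple case distinction. It also isolates the one geometric fact that carries the result: the gate hyperplane appears in the two minimal H-representations with opposite orientations. The stacked H-representation and the convexity of Corollary~\ref{cor:weakerConvexity} then come for free.

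Two cautions on that key fact. First, the relation displayed after the definition of $G$ reads $b^{\mathcal Y_i}_{G_{i,j}}=b^{\mathcal Y_j}_{G_{j,i}}$ with no sign change, which for $\beta\neq 0$ describes a different hyperplane. Your version, $b^{\mathcal Y_2}_{(G_{2,1})}=-\lambda\beta$, is the correct one. So rely on your own justification from condition~\ref{def:connectedSec}.c rather than citing that display.

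Second, in that justification the step ``the interiors lie on opposite sides'' needs one more line. Suppose both polytopes lay in $\{ay\le\beta\}$, and pick $c_i\in\mathcal Y_i$ with $ac_i<\beta$. Take the convex hull of the common patch with $c_1$, and with $c_2$. Their slices at level $ay=\beta-\delta$ are $(n_y-1)$-dimensional balls whose radii tend to the patch radius and whose centres tend to the patch centre as $\delta\to 0$. They therefore intersect off ${\tt H}(a,\beta)$, contradicting $\mathcal Y_1\cap\mathcal Y_2\subseteq{\tt H}(a,\beta)$.
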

\begin{proof}
    The result holds directly from Proposition \ref{prop:HRepStrict} using Lemma \ref{lem:intersecOfExtIsWeak} in the case we have a strictly connected collection and Lemma \ref{lem:weakExt} in the case the collection is only simply connected.
\end{proof}
Figure \ref{fig:exampleWeakExtension} illustrates Proposition \ref{prop:HRepLoose} for a planar example.
\begin{rem}
    Proposition \ref{prop:HRepLoose} implies that the weak extension can be computed trivially, once the gate between the two polytopes has been identified. Moreover, it provides an alternative definition for the weak extension.
\end{rem}
Based on Proposition \ref{prop:HRepStrict} and Proposition \ref{prop:HRepLoose} we directly have the following corollaries\footnote{An alternative proof of Corollary \ref{cor:extConvexity} which does not depend on Proposition~\ref{prop:HRepStrict} is given in the Appendix}.
\begin{cor}
    \label{cor:extConvexity}
     Let ${\mathcal Y_1,\;\mathcal Y_2}$ form a strictly connected collection of polytopes. Then, the extension of $\mathcal Y_{1,2}^{\tt e}$ is convex.
\end{cor}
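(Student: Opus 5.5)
The corollary follows from Proposition \ref{prop:HRepStrict}. Under strict connectedness, that proposition writes $\mathcal Y^{\tt e}_{1,2}$ as ${\tt P}(\tilde A,\tilde b)$, where $(\tilde A,\tilde b)$ is the stacked pair in \eqref{eq:Ext_HRep}. A set of the form ${\tt P}(\tilde A,\tilde b)$ is a finite intersection of closed half-spaces $\{y:\tilde A_{(l)}y\le \tilde b_{(l)}\}$. Each half-space is convex, and any intersection of convex sets is convex, so $\mathcal Y^{\tt e}_{1,2}$ is convex. The argument therefore has two steps: invoke Proposition \ref{prop:HRepStrict} to obtain the H-representation, then note that polytopes are convex.

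The only point to check is that the hypotheses of Proposition \ref{prop:HRepStrict} coincide with those of the corollary. Both assume that $\{\mathcal Y_1,\mathcal Y_2\}$ is a strictly connected collection, so this is immediate.

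For completeness, I would also sketch a direct argument that does not use the H-representation, in the spirit of the appendix. Take $y,z\in\mathcal Y^{\tt e}_{1,2}$ and split into cases according to whether each point lies in $\mathcal Y_1$ or in $\mathcal Y^{\tt res}_{2,1}$.

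1. If both points lie in $\mathcal Y_1$, or both lie in $\mathcal Y^{\tt res}_{2,1}$, the segment $[y,z]$ stays in that set, since each is convex.
2. In the mixed case, say $y\in\mathcal Y_1$ and $z\in\mathcal Y^{\tt res}_{2,1}$, the segment crosses the gate hyperplane ${\tt H}(A^{\mathcal Y_1}_{(G_{1,2})},b^{\mathcal Y_1}_{(G_{1,2})})$ at some point $w$.
   - Both $y$ and $z$ lie in the convex opening $\mathcal Y^{\tt o}_{1,2}$, so $w\in\mathcal Y^{\tt o}_{1,2}$. Hence $w\in\mathcal Y_1$, and in fact $w$ lies on the facet $\mathcal Y_1\cap{\tt H}$.
   - By the common-facet property, $w\in\mathcal Y_2$ as well, so $w\in\mathcal Y^{\tt res}_{2,1}$.
   - The sub-segment $[y,w]$ lies on the $\mathcal Y_1$ side of the hyperplane and inside $\mathcal Y^{\tt o}_{1,2}$, hence in $\mathcal Y_1$. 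The sub-segment $[w,z]$ lies in $\mathcal Y^{\tt res}_{2,1}$ by convexity of that set.

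The main obstacle in the direct route is the step $w\in\mathcal Y_2$. It is exactly where strict connectedness, rather than simple connectedness, is needed. The route through Proposition \ref{prop:HRepStrict} avoids this step entirely, and I would present that route as the proof.
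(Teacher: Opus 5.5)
Your proposal is correct and matches the paper. The paper obtains the corollary directly from Proposition \ref{prop:HRepStrict} (the extension is a polytope, hence convex), exactly as in your main route. Your supplementary segment argument is essentially the appendix proof (Proposition \ref{prop:extConvexity_alt}): it likewise splits the segment at its crossing point with the gate hyperplane and uses the common-facet property to place that point in $\mathcal Y_2\cap\mathcal Y^{\tt o}_{1,2}$.
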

\begin{cor}\label{cor:weakerConvexity}
    Let ${\mathcal Y_1,\;\mathcal Y_2}$ form a  simply connected collection of polytopes. Then, ${\tt WE}(\mathcal Y_1,\mathcal Y_2)$ is convex. 
\end{cor}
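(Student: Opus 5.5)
The plan is to derive Corollary \ref{cor:weakerConvexity} directly from Proposition \ref{prop:HRepLoose}, which gives the identity
\begin{equation*}
{\tt WE}(\mathcal Y_1,\mathcal Y_2) = \mathcal Y^{\tt o}_{1,2}\cap\mathcal Y^{\tt o}_{2,1}.
\end{equation*}
By Definition \ref{def:auxSets}, each opening $\mathcal Y^{\tt o}_{i,j}$ is the set ${\tt P}(A^{\mathcal Y_i}_{(-G_{i,j})},b^{\mathcal Y_i}_{(-G_{i,j})})$. It is therefore a finite intersection of closed half-spaces. Each half-space is convex, and any intersection of convex sets is convex, so both openings are convex. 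Their intersection is then convex as well, and this proves the claim.

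The only subtle point is that Proposition \ref{prop:HRepLoose} is stated for a simply connected pair. That is exactly the hypothesis of the corollary, so no case split is needed here. The case split (strict versus simple connectedness) is already handled inside that proposition, through Lemma \ref{lem:intersecOfExtIsWeak} and Lemma \ref{lem:weakExt}.

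As a sanity check independent of that machinery, I would also sketch the alternative route via Lemma \ref{lem:weakExt}. That lemma shows ${\tt WE}(\mathcal Y_1,\mathcal Y_2)=(\mathcal Y^{\tt res}_{1,2})^{\tt e}_{1,2}$, where $\{\mathcal Y^{\tt res}_{1,2},\mathcal Y^{\tt res}_{2,1}\}$ is strictly connected. Corollary \ref{cor:extConvexity} applied to this restricted pair then gives convexity immediately.

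There is essentially no obstacle; the content lies in the earlier propositions. The one thing to be careful about in the second route is that the restricted sets must satisfy Definition \ref{def:connectedSec}.a, i.e.\ be full-dimensional. I would verify this from the gate condition \eqref{eq:overlapCondConnectedSeq}: it gives a relatively open piece of the gate lying in both polytopes, and slightly perturbing it into each side yields interior points of $\mathcal Y^{\tt res}_{1,2}$ and $\mathcal Y^{\tt res}_{2,1}$. The first route avoids this issue entirely, so I would present that one.
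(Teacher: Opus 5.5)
Your proposal is correct and matches the paper, which presents this corollary as a direct consequence of Proposition~\ref{prop:HRepLoose}: the identity ${\tt WE}(\mathcal Y_1,\mathcal Y_2)=\mathcal Y^{\tt o}_{1,2}\cap\mathcal Y^{\tt o}_{2,1}$ writes the weak extension as a finite intersection of closed half-spaces, so it is convex. Your secondary route through Lemma~\ref{lem:weakExt} and Corollary~\ref{cor:extConvexity} is a valid cross-check but is not needed.
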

 The final result of this section ensures that the intersection between the weak extension and either of the constituting polytopes is non-empty.
\begin{prop} \label{prop:overlapIsNonEmpty}
Let $\{\mathcal Y_1,\mathcal Y_2\}$ form a  simply connected collection of polytopes, then ${\tt WE}(\mathcal Y_1,\mathcal Y_2)\cap \mathcal Y_2 = \mathcal Y^{\tt res}_{2,1}$ and moreover $Y^{\tt res}_{2,1}$ has non-empty interior.
\end{prop}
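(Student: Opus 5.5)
We split the statement into the set identity and the non-empty interior claim. For brevity write $h \triangleq A^{\mathcal Y_1}_{(G_{1,2})}$, $\beta\triangleq b^{\mathcal Y_1}_{(G_{1,2})}$, so that up to scaling $A^{\mathcal Y_2}_{(G_{2,1})}=-h$ and $b^{\mathcal Y_2}_{(G_{2,1})}=\beta$. Thus $\mathcal Y_1\subseteq\{hy\le\beta\}$ and $\mathcal Y_2\subseteq\{hy\ge \beta\}$.

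For the identity we use Proposition \ref{prop:HRepLoose}, i.e. ${\tt WE}(\mathcal Y_1,\mathcal Y_2)=\mathcal Y^{\tt o}_{1,2}\cap\mathcal Y^{\tt o}_{2,1}$. Since $\mathcal Y_2\subseteq \mathcal Y^{\tt o}_{2,1}$ (the opening only drops a constraint), intersecting with $\mathcal Y_2$ gives
\[
{\tt WE}(\mathcal Y_1,\mathcal Y_2)\cap\mathcal Y_2=\mathcal Y^{\tt o}_{1,2}\cap\mathcal Y_2=\mathcal Y^{\tt res}_{2,1}.
\]
Alternatively, one can argue directly from the definition: ${\tt WE}\cap\mathcal Y_2=(\mathcal Y^{\tt res}_{1,2}\cap\mathcal Y_2)\cup\mathcal Y^{\tt res}_{2,1}$, and the first term lies in $\mathcal Y_1\cap\mathcal Y_2\subseteq \mathcal Y^{\tt o}_{1,2}\cap\mathcal Y_2$.

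For the interior claim, the main step is to produce an interior point of $\mathcal Y^{\tt o}_{1,2}\cap\mathcal Y_2$. By Definition \ref{def:connectedSec}.c there are $y$ on the gate hyperplane and $\epsilon>0$ such that the $(n_y-1)$-dimensional disc $D=(\{y\}\oplus\epsilon\mathcal B)\cap{\tt H}(h,\beta)$ lies in $\mathcal Y_1\cap\mathcal Y_2$.

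The obstacle is that $y$ may sit on other facets of $\mathcal Y_1$ or $\mathcal Y_2$. To handle this we first pick a relative-interior point $\bar y$ of $D$, i.e. $\|\bar y-y\|<\epsilon$. Every constraint $a y'\le b$ of $\mathcal Y_1$ other than $G_{1,2}$ holds on all of $D$. If it were active at $\bar y$, then a linear function $a y'-b\le0$ on the relative-open disc would attain its maximum $0$ at the interior point $\bar y$. It would then be constant on $D$, so $a$ is parallel to $h$ and the hyperplane coincides with ${\tt H}(h,\beta)$. This contradicts minimality of the H-representation of the full-dimensional $\mathcal Y_1$: a duplicate facet is impossible, and the opposite orientation would force $\mathcal Y_1$ into the hyperplane. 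Hence all constraints of $\mathcal Y^{\tt o}_{1,2}$ are strict at $\bar y$, and the same argument shows all constraints of $\mathcal Y_2$ except $G_{2,1}$ are strict at $\bar y$.

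By continuity there is a ball $\{\bar y\}\oplus\delta\mathcal B$ on which all these strict inequalities persist. Moving from $\bar y$ a small step in the direction $-h^\top$ gives a point $\hat y$ with $h\hat y<\beta$, so $-h\hat y<-\beta$ and the $G_{2,1}$ constraint of $\mathcal Y_2$ is also strict, while $\hat y$ stays in the ball. Then $\hat y$ satisfies every inequality defining $\mathcal Y^{\tt o}_{1,2}\cap\mathcal Y_2$ strictly, so a neighborhood of $\hat y$ is contained in $\mathcal Y^{\tt res}_{2,1}$, which therefore has non-empty interior.
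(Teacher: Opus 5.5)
Your set identity is the paper's argument, and your alternative ($\mathcal Y^{\tt res}_{1,2}\cap\mathcal Y_2\subseteq\mathcal Y_1\cap\mathcal Y_2\subseteq\mathcal Y^{\tt o}_{1,2}\cap\mathcal Y_2$) is also valid and avoids Proposition~\ref{prop:HRepLoose}. For the interior claim you use the paper's strategy: pick a point in the relative interior of the gate, take a ball on which all non-gate inequalities are strict, then push off the gate hyperplane. Your minimality/full-dimensionality argument justifies something the paper only asserts (``$a$ is strictly inside the facet''). It also covers the inequalities of $\mathcal Y^{\tt o}_{1,2}$, which the paper's write-up does not explicitly check.

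The last step, however, fails as written. Since $\mathcal Y_2\subseteq\{y:\,hy\ge\beta\}$, its gate inequality is $-hy\le-\beta$. So with $A^{\mathcal Y_2}_{(G_{2,1})}=-h$ you need $b^{\mathcal Y_2}_{(G_{2,1})}=-\beta$, not $\beta$ (the paper's displayed relation between the offsets has the same sign slip). Stepping from $\bar y$ along $-h^\top$ gives $h\hat y<\beta$, hence $-h\hat y>-\beta$. Your inference ``$h\hat y<\beta$, so $-h\hat y<-\beta$'' flips the inequality the wrong way, and $\hat y\notin\mathcal Y_2$. In fact your $\hat y$ satisfies every inequality of $\mathcal Y_1$ and every non-gate inequality of $\mathcal Y_2$ strictly. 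You have therefore exhibited an interior point of $\mathcal Y^{\tt res}_{1,2}$, not of $\mathcal Y^{\tt res}_{2,1}$.

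The repair is immediate: take $\hat y=\bar y+t\,h^\top$ with $0<t\|h\|\le\delta$.
\begin{itemize}
\item Then $h\hat y=\beta+t\|h\|^2>\beta$, so the $G_{2,1}$ inequality of $\mathcal Y_2$ is strict.
\item The $G_{1,2}$ inequality $hy\le\beta$ is now violated, but it is exactly the row removed in the opening $\mathcal Y^{\tt o}_{1,2}$, so it plays no role.
\item All remaining inequalities stay strict because $\hat y$ lies in the $\delta$-ball.
\end{itemize}
This is precisely the paper's shift $b=a-\tau(A^{\mathcal Y_2}_{(G_{2,1})})^\top/\|A^{\mathcal Y_2}_{(G_{2,1})}\|$, which in your notation is a step along $+h^\top$. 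The paper also makes the radius explicit, $\tau=\epsilon/(2\|A^{\mathcal Y_2}_{(-G_{2,1})}\|)$, rather than appealing to continuity.
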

\begin{proof}
First, let us show that $(\mathcal Y_1,\mathcal Y_2)\cap \mathcal Y_2= \mathcal Y^{\tt res}_{2,1}$. Indeed, from Proposition \ref{prop:HRepLoose} we have that 
${\tt WE}(\mathcal Y_1,\mathcal Y_2)\cap \mathcal Y_2 = \mathcal Y_1^{\tt o}\cap \mathcal Y_2^{\tt o}\cap \mathcal Y_2 = \mathcal Y^{\tt o}_{1}\cap \mathcal Y_2 = \mathcal Y^{\tt res}_{2,1}$, where we additionally used the definition of opening and restrictions. We now show that $Y^{\tt res}_{2,1}$ has non empty interior. 
For the rest of the proof we use the shorthand notation $\mathcal H_{2,1}$ for the hyperplane connecting both polytopes, i.e., ${\mathcal H}_{2,1} \triangleq {\tt H}(A^{\mathcal Y_2}_{G_{2,1}},b^{\mathcal Y_2}_{G_{2,1}})$.
     By construction and definition of a connected collection of polytopes $\mathcal Y^{\tt res}_{2,1}\cap \mathcal H_{2,1}$ is $n-1$ dimensional. Therefore, we can find $a\in\mathcal Y^{\tt res}_{2,1} \cap \mathcal H_{2,1}$ and $\epsilon>0$ such that $A^{\mathcal Y_2}_{(-G_{2,1})}a - b^{\mathcal Y_2}_{(-G_{2,1})}\leq-\epsilon$, as $a$ is an element strictly inside the $G_{2,1}^{\text{th}}$ facet of $\mathcal Y_{2}$. Then, for any $x\in\{a\}\oplus\frac{\epsilon}{\|A^{\mathcal Y_2}_{(-G_{2,1})}\|}\mathcal B$ and $j\in\mathbb Z_{[1,n_{\mathcal Y_2}]}\backslash\{G_{2,1}\}$, 
    \begin{align*}
        &A^{\mathcal Y_2}_{(j)} x - b^{\mathcal Y_2}_{(j)} = A^{\mathcal Y_2}_{(j)} a-b^{\mathcal Y_2}_{(j)} + \frac{\epsilon}{\|A^{\mathcal Y_2}_{-G_{2,1}}\|} A^{\mathcal Y_2}_{(j)} v, \text{ where }\|v\|\leq 1,\\
        &\leq -\epsilon + \frac{\epsilon}{\|A^{\mathcal Y_2}_{(-G_{2,1})}\|} A^{\mathcal Y_2}_{(j)} v \leq  -\epsilon + \epsilon\frac{\|A^{\mathcal Y_2}_{(j)}\|}{\|A^{\mathcal Y_2}_{(-G_{2,1})}\|} \leq \epsilon-\epsilon = 0.
    \end{align*}
    Now, define $\tau = \frac{\epsilon}{2\|A^{\mathcal Y_2}_{(-G_{2,1})}\|}$ and $b = a - \tau \frac{(A^{\mathcal Y_2}_{(G_{2,1})})^\top}{\|A^{\mathcal Y_2}_{(G_{2,1})}\|}$, clearly $\{b\}\oplus\tau\mathcal B\subset \{a\}\oplus\frac{\epsilon}{\|A^{\mathcal Y_2}_{(-G_{2,1})}\|}\mathcal B$ and moreover for all $x\in\{b\}\oplus \tau\mathcal B$ $A^{\mathcal Y_2}_{(G_{2,1})}x \leq b^{\mathcal Y_2}_{(G_{2,1})}$. Therefore we have $\{b\}\oplus\tau\mathcal B\subseteq\mathcal Y^{\tt res}_{2,1}$.
\end{proof}
A similar result to Proposition \ref{prop:overlapIsNonEmpty} can be obtained for the extension in the case of a strictly connected sequence by invoking Lemma \ref{lem:intersecOfExtIsWeak}, i.e., the weak extension is a subset of either extensions.
\subsection{Theoretical properties of the proposed Reference Governor}\label{sec:propsRG}
We are now ready to derive theoretical guarantees for the proposed supervisory control scheme. The first result shows constraint admissibility of the closed loop trajectories.
\begin{thm}
    \label{thm:recursiveFeas} Let Assumptions  \ref{ass:schur} and  \ref{ass:connectedSequence} hold. Given a pair $(v^0,x_0)\in \mathcal O_{l^0,l^0}$ for some $l^0 \in \mathbb Z_{[1,n_{\mathcal Y}]}$ the output trajectory obtained from $x_0$ through the dynamics \eqref{eq:dyns} in combination with the input from Algorithm \ref{algo:1} is constraint admissible at all time instants, i.e., $y_k\in \cup_{i=1}^{n_{\mathcal Y}} \mathcal Y_i$ for all $ k\in\mathbb Z_{\geq 1}$.
\end{thm}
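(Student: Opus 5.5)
We prove by induction on $k$ the invariant that at every time $k\geq 0$ the optimization in line 8 of Algorithm \ref{algo:1} is feasible and returns $v_k$ with $(v_k,x_k)\in\mathcal O_k$, where $\mathcal O_k$ is one of the sets $\mathcal O_{i,j}(\mathcal Y,\epsilon)=\mathcal O(\mathcal Y_{i,j},\epsilon)$ with $i=j$ or $F_{i,j}=1$.

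Admissibility then follows from the invariant. Membership in $\mathcal O(\mathcal Y_{i,j})$ at $k=0$ of the MOAS definition gives $y_k = Cx_k+Dv_k\in\mathcal Y_{i,j}$. If $i=j$, then $\mathcal Y_{i,j}=\mathcal Y_i$. Otherwise $\mathcal Y_{i,j}={\tt WE}(\mathcal Y_i,\mathcal Y_j)=\mathcal Y^{\tt res}_{i,j}\cup\mathcal Y^{\tt res}_{j,i}\subseteq\mathcal Y_i\cup\mathcal Y_j$ by definition of the restrictions. In either case $y_k\in\cup_i\mathcal Y_i$.

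The key observation for the induction is that each MOAS is positively invariant under a constant command. If $(v,x)\in\mathcal O(\mathcal Y_{i,j},\epsilon)$, then $(v,Ax+Bv)\in\mathcal O(\mathcal Y_{i,j},\epsilon)$. This holds because the shifted trajectory is a tail of the original one, and the $\mathcal R(\cdot,\epsilon)$ factor depends only on $v$. Hence, if $(v_{k-1},x_{k-1})\in\mathcal O_{k-1}$, then $(v_{k-1},x_k)\in\mathcal O_{k-1}$.

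It remains to show that the set $\mathcal O_k$ chosen at time $k$ contains $(v_{k-1},x_k)$, so that $v=v_{k-1}$ is a feasible candidate. We go through the branches of the algorithm.
\begin{itemize}
\item \emph{Index advanced} ($i_k=i_{k-1}+1$). The test in line 3 gives $(v_{k-1},x_k)\in\mathcal O_{l_{i_k-1},l_{i_k}}$. Line 6 then either keeps $\mathcal O_{l_{i_k},l_{i_k}}$, which contains $(v_{k-1},x_k)$ by the membership test in line 6, or switches to $\mathcal O_{l_{i_k-1},l_{i_k}}$, which contains it by line 3.
\item \emph{Index not advanced} ($i_k=i_{k-1}$). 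Then $\mathcal O_{k-1}$ was either $\mathcal O_{l_{i_k},l_{i_k}}$ or $\mathcal O_{l_{i_k-1},l_{i_k}}$, and invariance gives membership in it. If line 6 keeps $\mathcal O_{l_{i_k},l_{i_k}}$, membership is verified by the test. Otherwise $(v_{k-1},x_k)\notin\mathcal O_{l_{i_k},l_{i_k}}$, so $\mathcal O_{k-1}$ must have been $\mathcal O_{l_{i_k-1},l_{i_k}}$, which is exactly the set now chosen.
\end{itemize}
The base case $k=0$ uses $v_{-1}=v^0$, $i_{-1}=1$, $l_1=l^{\tt s}=l^0$ and $(v^0,x_0)\in\mathcal O_{l^0,l^0}$. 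Note that line 6 can never select $\mathcal O_{l_0,l_1}$ at $k=0$, since the first test succeeds.

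The main obstacle is the bookkeeping in the non-advance branch. There we must argue that the fallback set in line 6 coincides with the set used at $k-1$. This requires carefully tracking that $\mathcal O_{k-1}$ is always one of the two sets indexed by the current position $i_{k-1}$. We would include this as part of the induction hypothesis: $\mathcal O_k\in\{\mathcal O_{l_{i_k},l_{i_k}},\mathcal O_{l_{i_k-1},l_{i_k}}\}$ and $(v_k,x_k)\in\mathcal O_k$.

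A secondary point is the ordering of indices in the weak extension. ${\tt WE}$ is symmetric by Proposition \ref{prop:HRepLoose}, so $\mathcal O_{l_{i_k-1},l_{i_k}}$ and $\mathcal O_{l_{i_k},l_{i_k+1}}$ from line 3 refer to consistent sets and the index shift is harmless. Existence of the minimizer follows because $\mathcal O_k$ is closed and contains the feasible point $v_{k-1}$.
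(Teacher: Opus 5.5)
Your proof is correct and takes essentially the same route as the paper. You show that $v_{k-1}$ is always a feasible candidate for the command-governor problem, using the membership tests of Algorithm~\ref{algo:1} when the safe set changes and forward invariance of the MOAS when it does not, and then read off $y_k=Cx_k+Dv_k\in\mathcal Y_{i,j}\subseteq\mathcal Y_i\cup\mathcal Y_j$ from the MOAS definition; your strengthened induction hypothesis (that $\mathcal O_k$ is always one of the two sets indexed by $i_k$) just makes explicit the bookkeeping that the paper's case split on $\mathcal O_k\neq\mathcal O_{k-1}$ versus $\mathcal O_k=\mathcal O_{k-1}$ leaves implicit.
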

\begin{proof}
First we note that for any $k\in\mathbb Z_{\geq 1}$, either $\mathcal O_k\neq\mathcal O_{k-1}$ or $\mathcal O_k=\mathcal O_{k-1}$. In the former case, lines 3 and 5 of Algorithm \ref{algo:1} ensure $(v_{k-1},x_k)\in\mathcal O_k$. In the latter case, leveraging the forward invariance property of the MOAS, it holds that $(v_{k-1},x_k)\in\mathcal O_k$. Therefore, $v_{k-1}$ is always a feasible solution to the optimization problem on Line 6. By definition of the MOAS (constraint satisfaction), safety at time $k$ is then ensured.
\end{proof} 
We now study the finite-time convergence properties of Algorithm \ref{algo:1}. Before we do so, we establish a couple of intermediate results ensuring that given an initial safe pair we can reach the next set in finite time. The first result states that from any safe set in a connected collection we can reach any weak extension generated by it in finite time. 
\begin{lem}\label{lem:fromSet2WE}
Let Assumptions \ref{ass:schur} and \ref{ass:connectedSequence} hold. Consider the dynamics \eqref{eq:dyns} and a constraint set $\{\mathcal Y_i\}_{i=1}^{n_{\mathcal Y}}$ representing a connected collection of polytopes. Take $i,j\in\mathbb Z_{[1,n_{\mathcal Y}]}$ such that $F_{i,j}=1$ and $\epsilon,\tau>0$ such that $\mathcal R(\mathcal Y^{\tt o}_{j,i}\cap\mathcal Y_i,\epsilon+\tau) \neq \emptyset$.  Given an initial condition $(v^0,x_0)\in\mathcal O(\mathcal Y_i,\epsilon)$, a desired setpoint $r\in\mathcal R(\mathcal Y^{\tt o}_{j,i}\cap\mathcal Y_i,\epsilon+\tau)$ and using the command governor associated with $\mathcal O(\mathcal Y_i,\epsilon)$ for input generation,
then, there exists $\ell\in\mathbb Z_{>0}$ such that 
$(v_k,x_k)\in\mathcal O({\tt WE}(\mathcal Y_i,\mathcal Y_j),\epsilon)$ for all $k\geq \ell$.
\end{lem}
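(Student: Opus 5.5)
The plan is to combine the finite-time convergence of the command governor (Theorem \ref{thm:convToStrictSteadyState}) with a robustness argument. Once $v_k$ equals $r$ and the state is close enough to its equilibrium, the pair must lie in $\mathcal O({\tt WE}(\mathcal Y_i,\mathcal Y_j),\epsilon)$.

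First I will check the hypotheses of Theorem \ref{thm:convToStrictSteadyState} for the polytope $\mathcal Y_i$ with margin $\epsilon$. It will be applied in its tightened form, i.e.\ the CG based on $\mathcal O(\mathcal Y_i,\epsilon)$, which is the MOAS of $\mathcal Y_i$ intersected with $\mathcal R(\mathcal Y_i,\epsilon)\times\mathbb R^{n_x}$. Since $\mathcal Y^{\tt o}_{j,i}\cap\mathcal Y_i\subseteq\mathcal Y_i$, we have $r\in\mathcal R(\mathcal Y_i,\epsilon+\tau)$. This says $r$ is strictly admissible for the tightened problem with extra margin $\tau>0$, which is exactly the strict steady-state admissibility required in \cite[Proposition 1]{bemporad1997nonlinear}. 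Feasibility is given by $(v^0,x_0)\in\mathcal O(\mathcal Y_i,\epsilon)$, and Assumption \ref{ass:schur} holds. If $\mathcal Y_i$ does not contain the origin, I will translate coordinates to an equilibrium, which is harmless. The theorem then yields $k^\star$ with $v_k=r$ for all $k\geq k^\star$.

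Second, for $k\geq k^\star$ the state obeys $x_{k+1}=Ax_k+Br$, so $x_k-\bar x_r = A^{k-k^\star}(x_{k^\star}-\bar x_r)\to 0$ geometrically, where $\bar x_r=(I-A)^{-1}Br$. The output predicted from $(r,x_k)$ at step $t$ is $CA^t x_k + H_t r = H_\infty r + CA^t(x_k-\bar x_r)$, so its deviation from $H_\infty r$ is at most $\|C\|\,\sup_t\|A^t\|\,\|x_k-\bar x_r\|$. The factor $\sup_t\|A^t\|$ is finite because $A$ is Schur. Hence there is $\ell\geq k^\star$ such that this deviation is below $\tau$ for all $k\geq\ell$.

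Third, I will show that the whole predicted output trajectory from $(r,x_k)$ lies in ${\tt WE}(\mathcal Y_i,\mathcal Y_j)$ and that $r\in\mathcal R({\tt WE}(\mathcal Y_i,\mathcal Y_j),\epsilon)$. By Proposition \ref{prop:HRepLoose}, ${\tt WE}(\mathcal Y_i,\mathcal Y_j)=\mathcal Y^{\tt o}_{i,j}\cap\mathcal Y^{\tt o}_{j,i}\supseteq \mathcal Y_i\cap\mathcal Y^{\tt o}_{j,i}$. Since $\{H_\infty r\}\oplus(\epsilon+\tau)\mathcal B\subseteq \mathcal Y^{\tt o}_{j,i}\cap\mathcal Y_i$, every predicted output lies within $\tau$ of $H_\infty r$ and hence in this set. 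We also get $\{H_\infty r\}\oplus\epsilon\mathcal B\subseteq{\tt WE}(\mathcal Y_i,\mathcal Y_j)$, which gives $(v_k,x_k)=(r,x_k)\in\mathcal O({\tt WE}(\mathcal Y_i,\mathcal Y_j),\epsilon)$ for all $k\geq\ell$.

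The main obstacle I expect is the first step: making sure Theorem \ref{thm:convToStrictSteadyState}, which is stated for the untightened set $\mathcal O(\mathcal Y)$ with $r\in\mathcal R(\mathcal Y,\epsilon)$, transfers to the CG on $\mathcal O(\mathcal Y_i,\epsilon)$ with $r\in\mathcal R(\mathcal Y_i,\epsilon+\tau)$. I would handle this by invoking \cite[Proposition 1]{bemporad1997nonlinear} directly. That result is stated for the tightened MOAS and requires only the strict margin $\tau$, which is precisely why $\tau$ appears in the hypothesis. The remaining steps use only the Schur property and Proposition \ref{prop:HRepLoose}.
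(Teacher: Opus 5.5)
Your proposal is correct and follows essentially the same route as the paper. It obtains finite-time convergence $v_k=r$ of the CG on $\mathcal O(\mathcal Y_i,\epsilon)$ from Theorem~\ref{thm:convToStrictSteadyState}, then uses asymptotic stability to bring the predicted outputs close to $H_\infty r$, and concludes with the inclusion $\mathcal Y^{\tt o}_{j,i}\cap\mathcal Y_i\subseteq{\tt WE}(\mathcal Y_i,\mathcal Y_j)$. Your version is slightly more careful than the paper's, which loosely writes $x_k\in\{H_\infty r\}\oplus\tau\mathcal B$ where you bound the output deviation by $\|C\|\sup_t\|A^t\|\,\|x_k-\bar x_r\|$ and make the final step into ${\tt WE}$ explicit. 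The ``main obstacle'' you flag is not a real one: the paper's CG is already defined on $\mathcal O(\mathcal Y,\epsilon)$, and $r\in\mathcal R(\mathcal Y_i,\epsilon)$ is all Theorem~\ref{thm:convToStrictSteadyState} needs, so the margin $\tau$ only serves the second step.
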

\begin{proof}
   By definition $r\in\mathcal R(\mathcal Y_i,\epsilon)$ and leveraging convexity of $\mathcal Y_i$, Theorem \ref{thm:convToStrictSteadyState} ensures finite-time convergence of the applied reference command to $r$. Therefore, there is $\ell_0 \in \mathbb Z_{\geq 0}$ such that for all $k\geq\ell_0$ $v_k=r$. Moreover, $(r,x_k)\in\mathcal O(\mathcal Y_i,\epsilon)$ by forward invariance of the MOAS. Then, asymptotic stability of  \eqref{eq:dyns} ensures there exists $\ell>\ell_0$ such that $x_k \in \{H_\infty r\}\oplus \tau\mathcal B$ for all $k\geq \ell_1$ which directly implies $(r,x_k)\in \mathcal O(\mathcal Y^{\tt o}_{j,i}\cap \mathcal Y_i,\epsilon)$ for all $k\geq \ell$.
\end{proof}
The following result  states that starting from a safe pair belonging to a given weak extension we can reach setpoints in either of its constituents in finite time. 
\begin{lem}\label{lem:fromWE2Set}
Let Assumptions \ref{ass:schur} and \ref{ass:connectedSequence} hold.
Consider the dynamics \eqref{eq:dyns} and a constraint set $\{\mathcal Y_i\}_{i=1}^{n_{\mathcal Y}}$ representing a connected collection of polytopes. Take $i,j\in\mathbb Z_{[1,n_{\mathcal Y}]}$ such that $F_{i,j}=1$ and $\epsilon,\tau>0$ such that $\mathcal R(\mathcal Y^{\tt o}_{j,i}\cap\mathcal Y_i,\epsilon+\tau) \neq \emptyset$.  Given an initial condition  $(v^0,x_0)\in\mathcal O({\tt WE}(\mathcal Y_i,\mathcal Y_j),\epsilon)$,  a desired setpoint $r\in\mathcal R(\mathcal Y^{\tt o}_{j,i}\cap\mathcal Y_i,\epsilon+\tau)$ 
and using the command governor associated with $\mathcal O({\tt WE}(\mathcal Y_i,\mathcal Y_j),\epsilon)$,
then, there exists $\ell\in\mathbb Z_{>0}$ such that 
$(v_k,x_k)\in\mathcal O(\mathcal Y_i,\epsilon)$ for all $k\geq \ell$.
\end{lem}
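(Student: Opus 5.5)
The plan mirrors the proof of Lemma \ref{lem:fromSet2WE}, with the roles of the two safe sets exchanged. The argument has three steps.

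\emph{Step 1: $r$ is strictly admissible for the weak extension.} By Proposition \ref{prop:HRepLoose} we have ${\tt WE}(\mathcal Y_i,\mathcal Y_j)=\mathcal Y^{\tt o}_{i,j}\cap\mathcal Y^{\tt o}_{j,i}$. Since $\mathcal Y_i\subseteq\mathcal Y^{\tt o}_{i,j}$, this gives
\[
\mathcal Y^{\tt o}_{j,i}\cap\mathcal Y_i\subseteq{\tt WE}(\mathcal Y_i,\mathcal Y_j),
\]
and in fact this set equals $\mathcal Y^{\tt res}_{i,j}$. The map $\mathcal Y\mapsto\mathcal R(\mathcal Y,\cdot)$ is monotone under inclusion and nonincreasing in the margin. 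Hence $r\in\mathcal R({\tt WE}(\mathcal Y_i,\mathcal Y_j),\epsilon+\tau)\subseteq\mathcal R({\tt WE}(\mathcal Y_i,\mathcal Y_j),\epsilon)$.

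\emph{Step 2: finite-time convergence of the command.} By Corollary \ref{cor:weakerConvexity} and Proposition \ref{prop:HRepLoose}, the weak extension is a convex polytope. We need it to be bounded to invoke Theorem \ref{thm:convToStrictSteadyState}. It is a union of the restrictions $\mathcal Y^{\tt res}_{i,j}\subseteq\mathcal Y_i$ and $\mathcal Y^{\tt res}_{j,i}\subseteq\mathcal Y_j$, and we take the $\mathcal Y_i$ compact, as in Lemma \ref{lem:compact}. Its interior is nonempty by Proposition \ref{prop:overlapIsNonEmpty}. The condition that the origin lies in the interior in Assumption \ref{ass:simpleCstr} is only a normalization and can be met by translating coordinates. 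We then apply Theorem \ref{thm:convToStrictSteadyState} to the command governor built on $\mathcal O({\tt WE}(\mathcal Y_i,\mathcal Y_j),\epsilon)$, with strictly admissible setpoint $r$ and safe initial pair $(v^0,x_0)$. This yields $\ell_0$ such that $v_k=r$ for all $k\geq\ell_0$.

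\emph{Step 3: entering $\mathcal O(\mathcal Y_i,\epsilon)$.} For $k\geq\ell_0$ the state evolves under the constant input $r$. Since $A$ is Schur, $x_k\to x_r\triangleq(I-A)^{-1}Br$. Consequently the predicted outputs $CA^{t}x_k+H_tr$ converge to $H_\infty r$ uniformly in $t\geq0$. Indeed,
\[
CA^t x_k+H_t r-H_\infty r=CA^t(x_k-x_r),
\]
and $\sup_t\|CA^t\|<\infty$. So there is $\ell\geq\ell_0$ such that, for all $k\geq\ell$ and all $t\geq0$, the predicted output lies in $\{H_\infty r\}\oplus\tau\mathcal B$.

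It remains to place this ball inside the constraint set. From $r\in\mathcal R(\mathcal Y^{\tt o}_{j,i}\cap\mathcal Y_i,\epsilon+\tau)$ we get
\[
\{H_\infty r\}\oplus(\epsilon+\tau)\mathcal B\subseteq\mathcal Y^{\tt o}_{j,i}\cap\mathcal Y_i\subseteq\mathcal Y_i,
\]
so in particular $\{H_\infty r\}\oplus\tau\mathcal B\subseteq\mathcal Y_i$. Thus $(r,x_k)\in\mathcal O(\mathcal Y_i)$. Also $r\in\mathcal R(\mathcal Y_i,\epsilon)$, hence $(v_k,x_k)=(r,x_k)\in\mathcal O(\mathcal Y_i,\epsilon)$ for all $k\geq\ell$.

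\emph{Main obstacle.} The delicate part is Step 2, namely checking that Theorem \ref{thm:convToStrictSteadyState} legitimately applies to the weak extension. That needs boundedness and a suitable interior point, which is where Proposition \ref{prop:HRepLoose} and Proposition \ref{prop:overlapIsNonEmpty} enter. A lesser point is that the convergence in Step 3 must be uniform over the prediction horizon $t$; this follows from the Schur property of $A$.
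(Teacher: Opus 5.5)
Your proof is correct and follows the route the paper indicates for this omitted proof: mirror Lemma~\ref{lem:fromSet2WE}, use convexity of the weak extension (Corollary~\ref{cor:weakerConvexity}) to invoke Theorem~\ref{thm:convToStrictSteadyState} and get $v_k=r$ in finite time, then use asymptotic stability to enter $\mathcal O(\mathcal Y_i,\epsilon)$. Your Step~3 bounds the predicted outputs uniformly via $CA^t(x_k-x_r)$, which is more careful than the paper's sketch in Lemma~\ref{lem:fromSet2WE} (it puts the state, rather than the output trajectory, in a $\tau$-ball); the compactness of the $\mathcal Y_i$ you add is a hypothesis the paper itself tacitly relies on whenever it invokes Theorem~\ref{thm:convToStrictSteadyState}.
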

The proof is similar to that of Lemma \ref{lem:fromSet2WE} and is omitted here. We note, however, that convexity of the weak extension (see Corollary \ref{cor:weakerConvexity}) is key in obtaining Lemma \ref{lem:fromWE2Set}, as else Theorem \ref{thm:convToStrictSteadyState} could not be used.
The following result states that using Algorithm \ref{algo:1}, and starting from a \textit{locally} safe reference command and initial condition pair the proposed scheme converges to any constant desired reference command in finite time.
\begin{thm}\label{thm:finiteTimeConvToSteadyState}
    Let Assumptions \ref{ass:schur} and \ref{ass:connectedSequence} hold. Consider the problem of bringing the dynamics \eqref{eq:dyns}, controlled using Algorithm \ref{algo:1} to the setpoint associated with $r\in\mathcal R(\mathcal Y,\epsilon)$, from the initial state $x_0$ subject to constraints \eqref{eq:cstr}, where $\mathcal Y = \cup_{i=1}^{n_{\mathcal Y}} \mathcal Y_i$ and $\{\mathcal Y_i\}_{i=1}^{n_{\mathcal Y}}$ is a connected collection of polytopes. Suppose 
    that there exists a reference command $ v^0$ such that $(v^0,x_0)\in\mathcal O_{l^0,l^0}$ for some $l^0 \in\mathbb Z_{[1,n_{\mathcal Y}]}$. Setting $v_0 = v^0$ ensures finite-time convergence of $v_k$ to $r$.
\end{thm}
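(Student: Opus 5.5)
The plan is to argue by induction along the index path $\{l_i\}_{i=1}^{n_l}$, chaining Lemma \ref{lem:fromSet2WE} and Lemma \ref{lem:fromWE2Set}, and to close with Theorem \ref{thm:convToStrictSteadyState} in the final element $\mathcal Y_{l_{n_l}}$.

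The argument rests on two properties of Algorithm \ref{algo:1}. First, the path counter $i_k$ is nondecreasing and bounded by $n_l$, so it changes only finitely many times. Second, between changes, the pair $(\mathcal O_k,\bar r_k)$ can switch only between two options: $(\mathcal O_{l_{i},l_{i}}, r_{2i-1})$ and $(\mathcal O_{l_{i-1},l_{i}}, r_{2i})$. The intermediate setpoints are strictly admissible with margin $\epsilon$ in the relevant sets by construction, namely through the argmin over $\mathcal R(\mathcal Y_{l_{i},l_{i+1}}\cap\mathcal Y_{l_i},\epsilon)$. Condition \eqref{eq:overlapInSSAdmissibleRefs} together with Proposition \ref{prop:overlapIsNonEmpty} guarantees that these sets are nonempty and that some margin $\tau>0$ is available, which is exactly what the hypotheses of Lemmas \ref{lem:fromSet2WE}--\ref{lem:fromWE2Set} require.

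The induction hypothesis is: at some finite time $k_i$ we have $i_{k_i}=i$ and $(v_{k_i-1},x_{k_i})\in\mathcal O_{l_i,l_i}$. The base case holds with $i=1$ because $(v^0,x_0)\in\mathcal O_{l^0,l^0}$ and $l_1=l^0$.

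For the inductive step, while $i_k=i<n_l$ the algorithm runs the CG on $\mathcal O_{l_i,l_i}$ with target $r_{2i-1}\in\mathcal R(\mathcal Y^{\tt o}_{l_{i+1},l_i}\cap\mathcal Y_{l_i},\epsilon+\tau)$. Forward invariance, as in the proof of Theorem \ref{thm:recursiveFeas}, keeps the pair in $\mathcal O_{l_i,l_i}$, so line 5 never switches sets. Lemma \ref{lem:fromSet2WE} then gives a finite time at which $(v_{k-1},x_k)\in\mathcal O_{l_i,l_{i+1}}$, and line 3 increments the counter to $i+1$.

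From that moment the pair may not lie in $\mathcal O_{l_{i+1},l_{i+1}}$. Line 5 then selects the weak extension $\mathcal O_{l_i,l_{i+1}}$ with target $r_{2i}$, which lies in $\mathcal R$ of the weak extension intersected with $\mathcal Y_{l_{i+1}}$. By forward invariance the pair stays in the weak-extension MOAS. Lemma \ref{lem:fromWE2Set}, applied with the roles of $i,j$ chosen so that the target set is $\mathcal Y_{l_{i+1}}$ (using symmetry of ${\tt WE}$), gives a finite time after which $(v_{k-1},x_k)\in\mathcal O_{l_{i+1},l_{i+1}}$. At that point the test on line 5 fails and the algorithm reverts to the element MOAS, which re-establishes the induction hypothesis for $i+1$.

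After at most $n_l-1$ such stages we have $i_k=n_l$ with the pair in $\mathcal O_{l_{n_l},l_{n_l}}$ and target $r_{2n_l-1}=r\in\mathcal R(\mathcal Y_{l_{n_l}},\epsilon)$, which follows from $H_\infty r\in\mathcal Y_{l^{\tt e}}$ and $r\in\mathcal R(\mathcal Y,\epsilon)$. The algorithm is now a plain CG on the convex set $\mathcal Y_{l_{n_l}}$, and Theorem \ref{thm:convToStrictSteadyState} yields $v_k=r$ for all sufficiently large $k$.

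\textbf{Main obstacle.} The hard step is the stage inside the weak extension. Algorithm \ref{algo:1} re-evaluates line 5 at every step, so it could chatter: the pair might enter $\mathcal O_{l_{i+1},l_{i+1}}$ and then leave it while the target changes from $r_{2i}$ to $r_{2i+1}$. I would first show that once $(v_{k-1},x_k)\in\mathcal O_{l_{i+1},l_{i+1}}$, the CG on that set keeps the pair there by invariance, so line 5 is never triggered again for this index and no switch back occurs. Similarly, before entering, the pair stays in the weak-extension MOAS under its own CG. This makes each stage a single CG run to which the lemmas apply verbatim. A further technical point is that the lemmas are stated for the $\epsilon$-tightened MOAS, so I would also check that the $\tau$-margin furnished by Proposition \ref{prop:overlapIsNonEmpty} and the choice of $\epsilon$ in \eqref{eq:overlapInSSAdmissibleRefs} is uniform over the finitely many transitions.
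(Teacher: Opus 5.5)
Your route is the paper's: walk the index path, using Lemma~\ref{lem:fromSet2WE} to trigger the counter update in line 3, Lemma~\ref{lem:fromWE2Set} to enter the next element MOAS, and Theorem~\ref{thm:convToStrictSteadyState} in the last element. Your invariance argument that rules out chattering between the element and weak-extension MOAS is more explicit than the paper's case analysis on $i_k$.

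The final step, however, fails as justified. You assert that $r\in\mathcal{R}(\mathcal{Y}_{l_{n_l}},\epsilon)$ follows from $H_\infty r\in\mathcal{Y}_{l^{\tt e}}$ and $r\in\mathcal{R}(\mathcal{Y},\epsilon)$. But $r\in\mathcal{R}(\mathcal{Y},\epsilon)$ only places the ball $\{H_\infty r\}\oplus\epsilon\mathcal{B}$ inside the \emph{union}, and that ball may straddle a gate. For instance, take $x_{k+1}=\tfrac12x_k+\tfrac12v_k$, $y_k=x_k$ (so $H_\infty=1$), $\mathcal{Y}_1=[0,1]$, $\mathcal{Y}_2=[1,2]$, $\epsilon=0.1$, $r=1$ and $x_0=v^0=\tfrac12$. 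All hypotheses hold, including \eqref{eq:overlapInSSAdmissibleRefs}, and $r\in\mathcal{R}(\mathcal{Y},\epsilon)=[0.1,1.9]$. Yet $\mathcal{R}(\mathcal{Y}_1,\epsilon)=[0.1,0.9]$ and $\mathcal{R}(\mathcal{Y}_2,\epsilon)=[1.1,1.9]$. Since the CG on $\mathcal{O}_{l_{n_l},l_{n_l}}$ only returns commands in $\mathcal{R}(\mathcal{Y}_{l_{n_l}},\epsilon)$, $v_k=r$ never occurs, for either choice of $l^{\tt e}$. The paper's proof has the same hole: it invokes Theorem~\ref{thm:convToStrictSteadyState} on $\mathcal{O}_{l_{n_l},l_{n_l}}$ without checking this. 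The repair is to strengthen the hypothesis to $r\in\mathcal{R}(\mathcal{Y}_{l^{\tt e}},\epsilon)$ for some $l^{\tt e}$, i.e.\ $r\in\bigcup_i\mathcal{R}(\mathcal{Y}_i,\epsilon)$, and to end the path at that index.

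Some smaller points follow. First, the intermediate setpoints are minimizers over $\mathcal{R}(\cdot,\epsilon)$ and typically lie on its boundary. So the claim that $r_{2i-1}\in\mathcal{R}(\mathcal{Y}^{\tt o}_{l_{i+1},l_i}\cap\mathcal{Y}_{l_i},\epsilon+\tau)$ ``by construction'' is unfounded; \eqref{eq:overlapInSSAdmissibleRefs} and Proposition~\ref{prop:overlapIsNonEmpty} give nonemptiness, not extra margin. You do not need $\tau$, though. With $x_{\bar r}=(I-A)^{-1}B\bar r$ one has $CA^mx+H_m\bar r=CA^m(x-x_{\bar r})+H_\infty\bar r$. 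Hence once $v_k=\bar r$ and $\sup_m\|CA^m\|\,\|x_k-x_{\bar r}\|\le\epsilon$, the margin $\epsilon>0$ alone puts $(\bar r,x_k)$ in the next MOAS, which also settles your uniformity concern.

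Second, the weak-extension target must be $r_{2(i_k-1)}$, as your inductive step correctly uses. The literal $r_{2i_k}$ of line 5 of Algorithm~\ref{algo:1}, which your description of the ``two options'' copies, lies in $\mathcal{Y}_{l_{i_k+1}}$ beyond the next gate, so Lemma~\ref{lem:fromWE2Set} does not apply to it; it is also undefined for $i_k=n_l$.

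Third, line 3 may fire at the same step at which the pair enters $\mathcal{O}_{l_{i+1},l_{i+1}}$, before line 4 is reached. Phrase the induction as ``the counter reaches $i$ in finite time'' rather than requiring $i_k=i$ at the moment of entry. Since the counter is monotone, this changes nothing.
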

\begin{proof} 
We first note that  under Assumptions \ref{ass:schur} and \ref{ass:connectedSequence}, Theorem \ref{thm:convToStrictSteadyState}
can be applied for any set $\mathcal O_{i,j}$, $i,j\in \mathbb Z_{n_{\mathcal Y}}$ such that $F_{i,j}= 1$.
Now, if $i_k = n_l$, then by  Lemma \ref{lem:fromWE2Set} there exists $\ell$ such that $(v_{\ell-1},x_\ell)\in \mathcal O_{l_{n_l},l_{n_l}}$. Under Assumptions \ref{ass:schur} and \ref{ass:connectedSequence}, Theorem \ref{thm:convToStrictSteadyState} then ensures finite-time convergence to $r$. If $i_k\neq n_l, i_k\neq 1$ Lemma \ref{lem:fromWE2Set} ensures that there is $\ell$ such that $(v_{\ell-1},x_\ell)\in \mathcal O_{l_{i_k},l_{i_k}}$. Then, Lemma \ref{lem:fromSet2WE} ensures there is $\ell$ such that $i_{\ell+1} = i_{\ell}+1$. Finally, if $i_k=1$ by forward invariance of the MOAS and assumption on $(v^0,x_0)$ Line 5 is never executed and  Lemma \ref{lem:fromSet2WE} ensures we have $i_k = 2$ in finite time.  As $n_l$ is finite, we can conclude that there is a finite $\ell$ such that $v_k = r$ for all $k\geq \ell$. throughout the proof we made use of the definition of the sequence $\{r_i\}_{i=1}^{2n_l-1}$ which is such that we can apply Lemmas \ref{lem:fromSet2WE}-\ref{lem:fromWE2Set}.
\end{proof}
\begin{rem}
    Algorithm \ref{algo:1} describes the scheme in the setting of the command governor but other RGs can be used without further modification to the scheme as long as they have convergence properties equivalent to those in Theorem \ref{thm:convToStrictSteadyState}. One such example is the scalar RG.
\end{rem}
\begin{rem}
    While the approach is presented here for disturbance free linear systems. The safe set considered in this work and general idea can also be combined with reference governors for systems with disturbance inputs, see \cite{kolmanovsky1998theory} as well as combined with Lyapunov function based reference governors for nonlinear systems, see \cite{garone2017reference} and references therein. The details of which are left for future study.
\end{rem}

\begin{figure*}
\centering
\begin{subfigure}{.45\textwidth}
  \centering
 \includegraphics[width=1\linewidth]{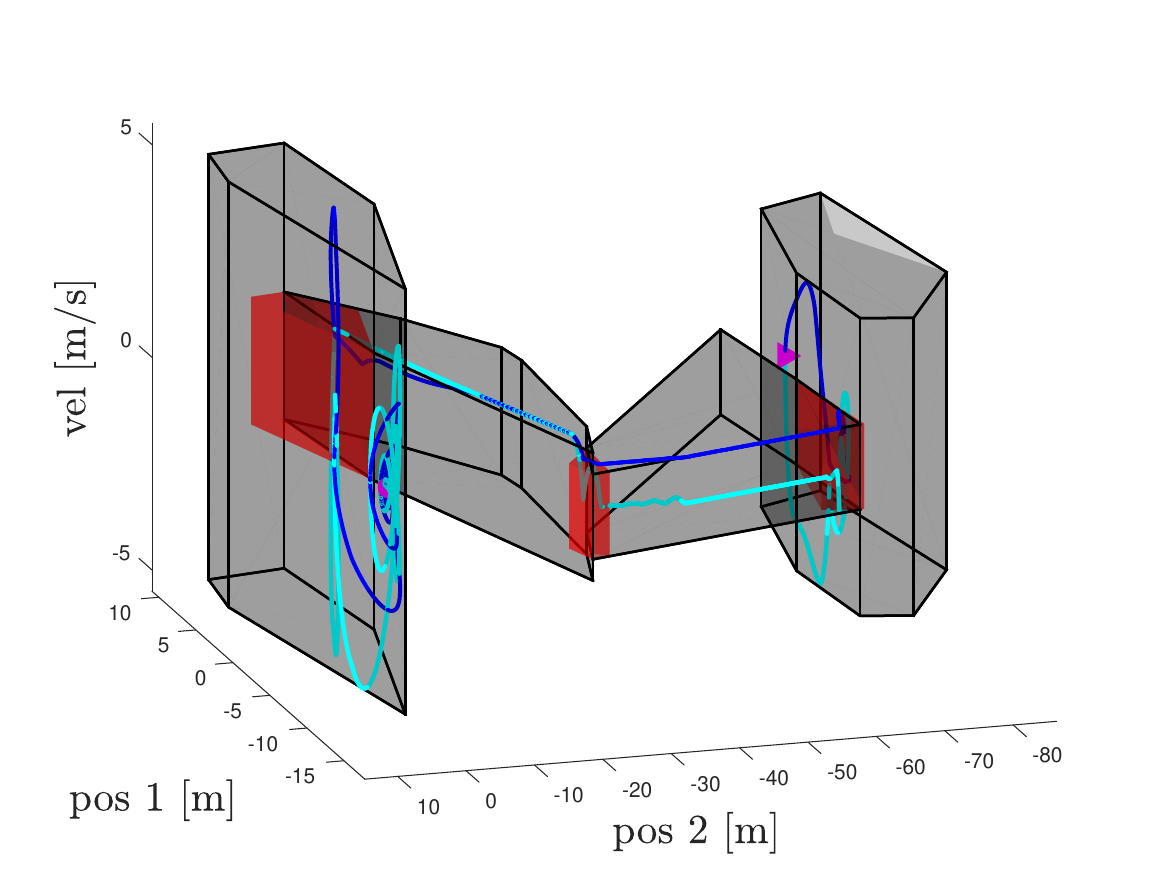}
    \caption{{\small In cyan, the trajectory considering position histories and first velocity component. In blue, the trajectory considering position histories and second velocity component.}}
    \label{fig:MSP_traj1}
\end{subfigure}%
$\;$
\begin{subfigure}{.45\textwidth}
  \centering
  \includegraphics[width=1\linewidth]{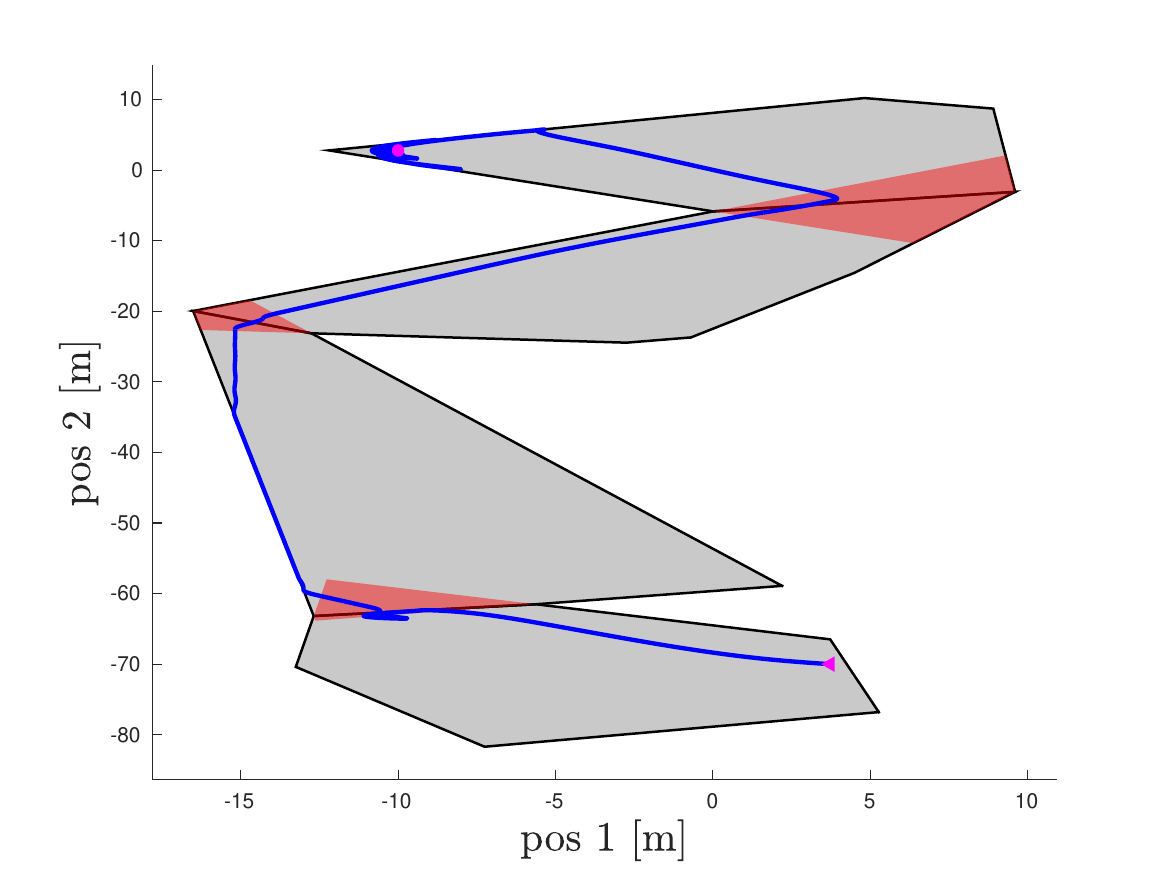}
  \caption{{\small Projection into position space of the 2D-MSD trajectory. Trajectory is depicted in blue.\\ $\;$ }}
  \label{fig:MSP_traj2}
\end{subfigure}%
\caption{Trajectory of the 2D-MSD inside a randomly generated sequence of polytopes in 4D space. A 3D representation (left) and 2D representation (right) are given. The sequence of polytopes is shown in transparent black, the weak extensions are shown in red, the initial condition and the desired set point  are represented by a magenta triangle and dot, respectively.}
\label{fig:MSP_trajs}
\end{figure*}
\subsection{Fast safe set computation for hyperrectangles}\label{sec:hyperrectangles}
When the number of elements in a connected collection, $n_\mathcal Y$, increases, the number of MOAS that need to be computed also grows, possibly leading to long offline computational times. 
Nevertheless, in certain cases,  properties of the MOAS can be used to substantially reduce the computational load. These properties are related to scaling and intersections of the MOAS.  We now recall said properties and then explain how we can use them.   
\begin{thm}\label{thm:scaling}
Let Assumptions \ref{ass:simpleCstr} and \ref{ass:schur} hold. \\
    (i)\cite[Theorem 2.1 iv)]{gilbert1991linear} Consider the dynamics \eqref{eq:dyns} and a feasible region $\mathcal Y$. Let $\alpha\in\mathbb R$, we then have that $\mathcal O(\alpha\mathcal Y) = \alpha\mathcal O(\mathcal Y)$.\\
    (ii)\cite{kolmanovsky1998theory}
    Let $\mathcal Y_1,\mathcal Y_2\in \mathbb R^{n_y}$ then $\mathcal O(\mathcal Y_1\cap\mathcal Y_2) = \mathcal O(\mathcal Y_1)\cap \mathcal O(\mathcal Y_2)$.
\end{thm}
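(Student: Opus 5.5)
Both parts follow by unwinding the definition of the MOAS,
\[
\mathcal O(\mathcal Y)=\{(v,x):\ CA^kx+H_kv\in\mathcal Y\ \ \forall k\in\mathbb Z_{\geq 0}\},
\]
and using that, for each $k$, the map $(v,x)\mapsto CA^kx+H_kv$ is linear. Neither statement needs the polytopic structure of Assumption \ref{ass:simpleCstr}. That assumption only guarantees that the sets involved are well defined and (finitely determined) computable.

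For (i), first take $\alpha\neq 0$. A pair $(v,x)$ lies in $\mathcal O(\alpha\mathcal Y)$ if and only if $CA^kx+H_kv\in\alpha\mathcal Y$ for every $k$. By linearity this is equivalent to $CA^k(x/\alpha)+H_k(v/\alpha)\in\mathcal Y$ for every $k$, that is, to $(v/\alpha,x/\alpha)\in\mathcal O(\mathcal Y)$. The latter holds exactly when $(v,x)\in\alpha\mathcal O(\mathcal Y)$. This gives both inclusions at once, as a chain of equivalences.

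The case $\alpha=0$ must be handled separately, since division by $\alpha$ is no longer available. Then $0\cdot\mathcal Y=\{0\}$, so $\mathcal O(\{0\})$ is the set of pairs whose entire output sequence vanishes. On the other side, $0\cdot\mathcal O(\mathcal Y)=\{(0,0)\}$, provided $\mathcal O(\mathcal Y)\neq\emptyset$; this holds here because $0\in\mathcal Y$ and so $(0,0)\in\mathcal O(\mathcal Y)$. The identity therefore holds precisely when the only pair producing a zero output sequence is $(0,0)$. Observability of $(C,A)$ from Assumption \ref{ass:schur} forces $x=0$ once $v=0$, but the $v$-direction needs a separate argument. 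Taking the limit $k\to\infty$ shows that $H_\infty v$ must vanish. If $H_\infty$ is injective, which is the standard standing assumption in \cite{gilbert1991linear}, we obtain $v=0$ and then $x=0$ by observability. I would either state this assumption explicitly or simply restrict (i) to $\alpha\neq0$, which is all that is needed for the scaling application later.

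For (ii), a pair $(v,x)$ lies in $\mathcal O(\mathcal Y_1\cap\mathcal Y_2)$ if and only if $CA^kx+H_kv\in\mathcal Y_1$ and $CA^kx+H_kv\in\mathcal Y_2$ for every $k$. Since the universal quantifier over $k$ distributes over the conjunction, this is the same as requiring $CA^kx+H_kv\in\mathcal Y_1$ for all $k$ and $CA^kx+H_kv\in\mathcal Y_2$ for all $k$. That is exactly $(v,x)\in\mathcal O(\mathcal Y_1)\cap\mathcal O(\mathcal Y_2)$. The same argument carries over to the $\epsilon$-tightened sets, because $\mathcal R(\mathcal Y_1\cap\mathcal Y_2,\epsilon)=\mathcal R(\mathcal Y_1,\epsilon)\cap\mathcal R(\mathcal Y_2,\epsilon)$, as a ball lies in an intersection if and only if it lies in both sets.

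The only delicate point is the degenerate scaling $\alpha=0$ in (i) and the tacit assumptions it requires. Everything else is a direct set-membership computation.
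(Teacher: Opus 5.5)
Your proof is correct. The paper gives no argument of its own here: it cites \cite[Theorem 2.1 iv)]{gilbert1991linear} for (i) and \cite{kolmanovsky1998theory} for (ii). Your direct verification from the definition of $\mathcal O$ is a self-contained version of those cited results. For (i) you use linearity of $(v,x)\mapsto CA^kx+H_kv$ with $\alpha\neq0$, and for (ii) you distribute the quantifier over $k$ across the conjunction. This also makes explicit that the polytopic structure plays no role.

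What your route adds beyond the citation is the edge case. Write $y_k=CA^k\bigl(x-(I-A)^{-1}Bv\bigr)+H_\infty v$ and use observability. Then the pairs producing an identically zero output are exactly $\{(v,(I-A)^{-1}Bv):\ H_\infty v=0\}$, so at $\alpha=0$ the identity holds if and only if $\ker H_\infty=\{0\}$.
\begin{itemize}
\item Assumptions~\ref{ass:simpleCstr}--\ref{ass:schur} do not impose $\ker H_\infty=\{0\}$; it fails, for example, whenever $n_v>n_y$.
\item The set $0\cdot\mathcal Y=\{0\}$ does not satisfy Assumption~\ref{ass:simpleCstr} anyway.
\item Only $\alpha>0$ is used in Algorithm~\ref{algo:2}.
\end{itemize}
So restricting (i) to $\alpha\neq0$ is the right reading of the statement.

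Your remark that (ii) carries over to the $\epsilon$-tightened sets is also correct, and it is what Algorithm~\ref{algo:2} actually relies on. Note that (i) does not carry over verbatim. Since $\mathcal R(\alpha\mathcal Y,\epsilon)=\alpha\,\mathcal R(\mathcal Y,\epsilon/|\alpha|)$ for $\alpha\neq0$, one gets $\mathcal O(\alpha\mathcal Y,\epsilon)=\alpha\,\mathcal O(\mathcal Y,\epsilon/|\alpha|)$ rather than $\alpha\,\mathcal O(\mathcal Y,\epsilon)$.
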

 Theorem \ref{thm:scaling}(i) states that scaling of the constraint leads to scaling of the MOAS by the same factor.  Theorem \ref{thm:scaling} (ii) states that the MOAS of the intersection of two constraint sets is equal to the intersection of the individual MOAS. 
 Finally, we mention that for linear inequality constraints, a translation of the constraint set leads to a translation of the MOAS.

 The idea underlying the fast safe set computation is that if all the sets $\mathcal Y_{i,j}$, $i,j\in\mathbb Z_{[1,n_\mathcal Y]}$ are obtained through scaling and translation of a  \textit{base polytope} one would only need to compute the MOAS of the base polytope once and then apply trivial scaling and translation operations. It is, however, unlikely that all sets and weak extensions can be obtained by scaling and translation of a single or even a reduced number of base polytopes.
Take, for example, the case where the elements of the connected collection are all hyperrectangles,  it turns out that the weak extensions will also be hyperrectangles. Nevertheless, there is no \textit{base polytope} from which all hyperrectangles can be generated by scaling alone. However, as we see next, Theorem \ref{thm:scaling} (ii) alows us to circumvent this issue.

For the rest of this section we let the elements of the connected collection be given by $\mathcal Y_i = \mathcal Y^{\tt rect}_{i,i}\cap \mathcal Y^{\tt cstr}$ where $\mathcal Y^{\tt rect}_{i,i}$ represents a hyperrectangle and $\mathcal Y^{\tt cstr}$ represents a set of additional constraints on the outputs, for example in the case of a robot moving in 3D space, the constraints $\mathcal Y^{\tt cstr}$ may represent input saturation and velocity constraints among others, while the sets $\mathcal Y^{\tt rect}_i$ represent the obstacle free regions. Then, the weak extension ${\tt WE}(\mathcal Y_i,\mathcal Y_j) = \mathcal Y^{\tt rect}_{i,j}\cap \mathcal Y^{\tt cstr}$ where $\mathcal Y^{\tt rect}_{i,j}$ is also a hyperrectangle. As such, the sets $\mathcal Y^{\tt rect}_{i,j}$ are characterized by the vectors of upper and lower bounds $b^{\tt up}_{i,j},b^{\tt low}_{i,j}$.
Algorithm \ref{algo:2} details how to efficiently compute $\mathcal O(\mathcal Y_{i,j})$. 

\begin{algorithm}
\caption{Efficient computation of the MOAS for hyperrectangles. } \label{algo:2}
\begin{algorithmic}[1]
\Require{ $F$: adjacency matrix, $b^{\tt up}_{i,j},b^{\tt low}_{i,j}$: vectors of upper and lower bounds describing the different hyperrectangles, $\mathcal Y^{\tt cstr}$: additional constraints.}
\State Define unitary constraint sets on each component of the output vector: $\mathcal A_m = \{y\in\mathbb R^{n_y}:\,|(y_{(m)})|\leq 1\}, \;m = 1,\dots, n_y$.
\State Compute $\mathcal O(\mathcal Y^{\tt cstr},\epsilon)$ and $\mathcal O(\mathcal A_m,\epsilon),\;m = 1,\dots, n_y$. 
\For{$i = 1,\dots n_{\mathcal Y}$, and $j = 1,\dots n_{\mathcal Y}$}
\If{$F_{i,j}=1$}
\State $\alpha = (b^{\tt up}_{i,j}-b^{\tt low}_{i,j})/2$, $c =( b^{\tt up}_{i,j}+b^{\tt low}_{i,j})/2$,\State $r_c: c = H_\infty r_c$, $\mathcal O(\mathcal Y_{i,j},\epsilon) = \mathbb R^{n_y}$. 
\For{$m = 1,\dots,n_y$} 
\State $\mathcal O(\mathcal Y_{i,j},\epsilon) = \mathcal O(\mathcal Y_{i,j},\epsilon)\cap \alpha_{(m)}\mathcal O(\mathcal A_m,\epsilon)$
\EndFor
\State $\mathcal O(\mathcal Y_{i,j},\epsilon) = \mathcal O(\mathcal Y_{i,j},\epsilon) \oplus \{(r_c, c)\}$
\State $\mathcal O(\mathcal Y_{i,j},\epsilon) = \mathcal O(\mathcal Y_{i,j},\epsilon) \cap \mathcal O(\mathcal Y^{\tt cstr},\epsilon) $
\EndIf
\EndFor
\end{algorithmic}
\end{algorithm}

It should be noted that Algorithm \ref{algo:2} requires the center of each hyperrectangle, i.e. the temporary variable $c$, to be associated to a steady-state, $r_c$, as defined in Line 6. The approach is used in Sections \ref{sec:CWH},  Section \ref{sec:droneInCity} and insights on the computation time are given in  Section \ref{sec:compTimes}.
\section{Numerical simulations}\label{sec:numericalSim}
We now illustrate the applicability of the approach by considering various dynamical systems and constraint sets. Simulations are performed using Matlab\textregistered \@ on a Mac book Pro Laptop with Apple M2 Pro processor and 16 Gb of RAM. It is worth noting that both Assumption \ref{ass:schur} and Assumption \ref{ass:connectedSequence} hold for each of the cases considered hereunder.
The code for the simulations in Section \ref{sec:CWH} and Section \ref{sec:droneInCity} can be found at \url{https://github.com/mcastrov-pixel/Reference-Governor-for-union-of-polytopes}
\subsection{Reference tracking inside a collection of connected rooms}\label{sec:2DMSP}

In this subsection, we aim to illustrate the safe tracking capabilities of our approach when considering highly oscillating dynamics. Moreover, we also showcase the inherent ability of our approach to consider sequences of polytopes which cannot directly be expressed as geometric constraints.
To do so, we consider a planar example with dynamics described by two uncoupled second order systems as
\begin{equation}\label{eq:dyns_MSP}
    \ddot x^c_i = -\omega_{{\tt n}i}^2x^c_i -2\zeta_i\omega_{{\tt n}i}\dot x^c_i,\;i = 1,2,
\end{equation}
with natural frequencies $\omega_{{\tt n}1} = 2$, $\omega_{{\tt n}2} = 1$ and  damping ratios $\zeta_1 = 0.1$, $\zeta_2 = 0.08$. After discretization using a sampling period of $0.05$ s we obtain a DT LTI system with four states, the first two representing positions and the last two representing velocities along the respective axes. The steady-state associated  with the reference command, $v_k$, corresponds to position $v_k$ with zero velocity. 
Note that the system is highly under-damped. We refer to this system as a two dimensional Mass spring Damper system (2D-MSD).

We consider a tracking task while requiring the 2D-MSD to remain inside a connected collection of polytopes in the four dimensional state space. The projection of the collection into the position plane can be seen in Figure \ref{fig:MSP_traj2}, note that in the position plane the collection is strictly connected. Moreover, we also impose a maximum on the  infinity norm of the velocity, with a different saturation value for each element in the sequence of polytopes\footnote{this type of velocity constraint arises naturally in, e.g., automotive applications and also in the case of close proximity constraint \cite{castroviejo2024safe}}. This naturally translates to a simply connected of polytopes in the four dimensional state space. A projection of the sequence of polytopes in three dimensional space is given in Figure \ref{fig:MSP_traj1} where both velocity components have been collapsed into a single axis.

Figure \ref{fig:MSP_trajs} shows a typical trajectory for the 2D-MSD. We note that in agreement with Theorem \ref{thm:finiteTimeConvToSteadyState}, trajectories converge to the desired setpoint in finite time. Moreover, and despite the highly underdamped dynamics, this occurs without any constraint violations, as expected from Theorem \ref{thm:recursiveFeas}. Similar observations apply to the following examples. In Figure \ref{fig:MSP_traj1}, we can see that the MSP travels at the maximum allowed velocity most of the time. Moreover, the trajectory oscillates (due to the low dampening coefficients) whenever a more stringent constraint is to be met (and similarly for final convergence). In contrast, a transition that corresponds to a relaxation of the velocity constraint does not lead to a significant decrease in the velocity before transitioning. From Figure \ref{fig:MSP_traj2} we note that the lower most weak extension has a very small overlap with one of the polytopes. This is due to one of the facets being almost parallel to the gate. Despite this very challenging situation, the weak extension is nonempty (as derived in Proposition \ref{prop:overlapIsNonEmpty}) and the 2D-MSD is able to safely transition between the elements of the sequence.
\subsection{On-orbit proximity operation}
\label{sec:CWH}
Next, we consider a more complex 6 dimensional system with coupled dynamics which represents a  point-mass satellite performing a proximity maneuver around a Chief spacecraft in a 400 km altitude low earth circular orbit. The relative dynamics of the satellite with respect to the Chief spacecraft center of mass are given by the CWH equations see, for example, \cite[Equation (12)]{castroviejo2023} for more details concerning the dynamics. The state vector is then comprised of six states, the first three and last three states represent radial, along track and cross track positions and velocities, respectively. The inputs are relative accelerations (thrust forces divided by spacecraft mass) along the three axes, with component-wise input saturation limits of 0.1 m/s$^2$. A sampling period of 2 s is used. The satellite relative position is stabilized to a specified position using an LQR controller corresponding to $10^3$ weights for the position states, $10^{-1}$ weights for the velocity states and $10^{-2}$ for the inputs.

Figure \ref{fig:ISS_rep} illustrates the Chief spacecraft and the decomposition of the space around it. More precisely, the Chief spacecraft is approximated as a union of 23 hyperrectangles and the free space is divided into a union of non-overlapping hypercubes obtained using a naive greedy algorithm that scans the three dimensional space along the different directions. Moreover, the hyperrectangles in the free space that have either a smallest dimension smaller than 1.5 meters or both a smallest dimension smaller than 3 meters and a volume smaller than 175 m$^3$ are discarded as too small. Following this procedure, we obtain a total of 84 hyperrectangles forming a connected sequence.

We consider maneuvers between an initial position and a desired setpoint with the path generated using the distance-based weights described in \eqref{eq:altWeights} with tunning parameter $b = 0.1$. 
\begin{figure}[h]
  \centering
  \centering
 \includegraphics[width=1\linewidth]{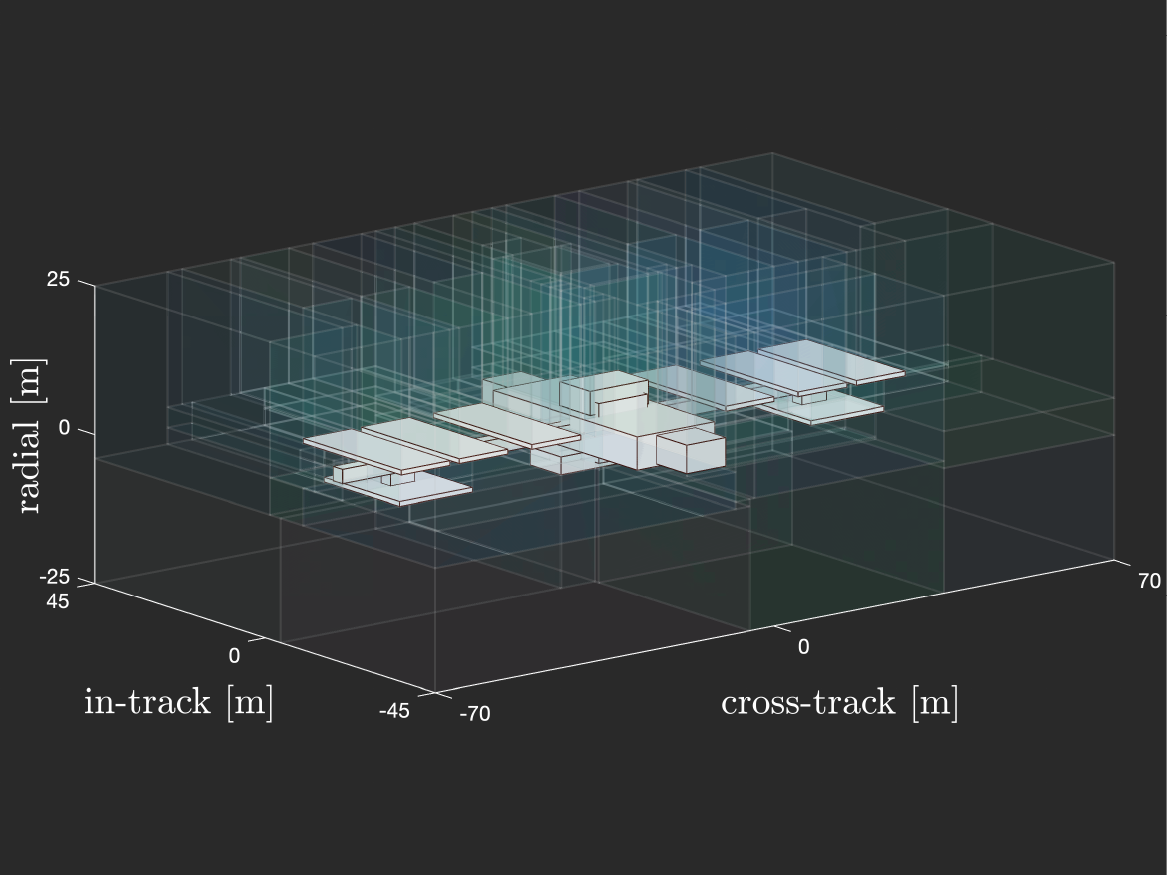}
    \caption{Representation of the spacecraft (gray) as well as the decomposition of the free space (transparent boxes).}
    \label{fig:ISS_rep}
\end{figure}%
 Figure \ref{fig:CWH3_traj} depicts different maneuvers. In both cases the satellite reaches the desired setpoint while avoiding collision with the Chief spacecraft. Both maneuvers are completed within 14 minutes while respecting the velocity and torque constraints.
\begin{figure*}
\centering
\begin{subfigure}{.48\textwidth}
  \centering
  \centering
 \includegraphics[width=1\linewidth]{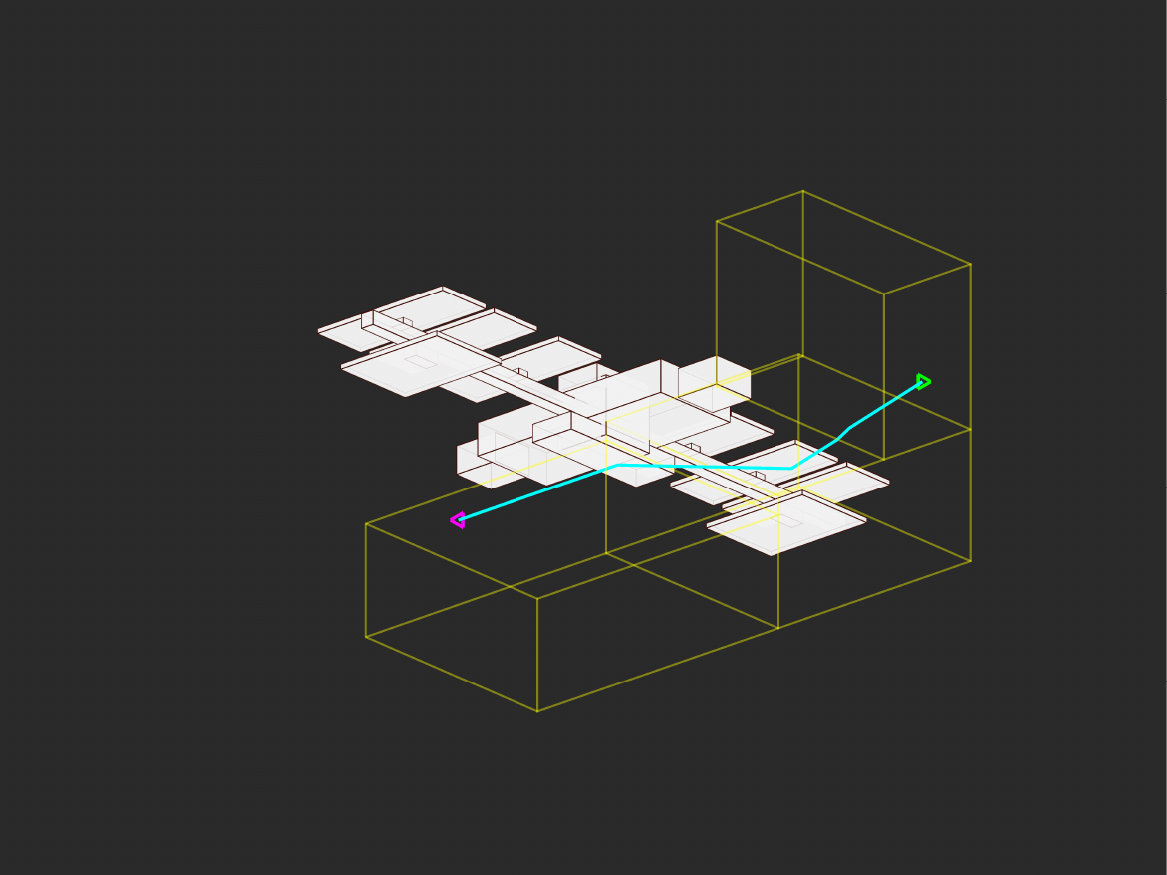}
    \caption{}
    \label{fig:CWH3_traja}
\end{subfigure}%
$\;$
\begin{subfigure}{.48\textwidth}
   \centering
 \includegraphics[width=1\linewidth]{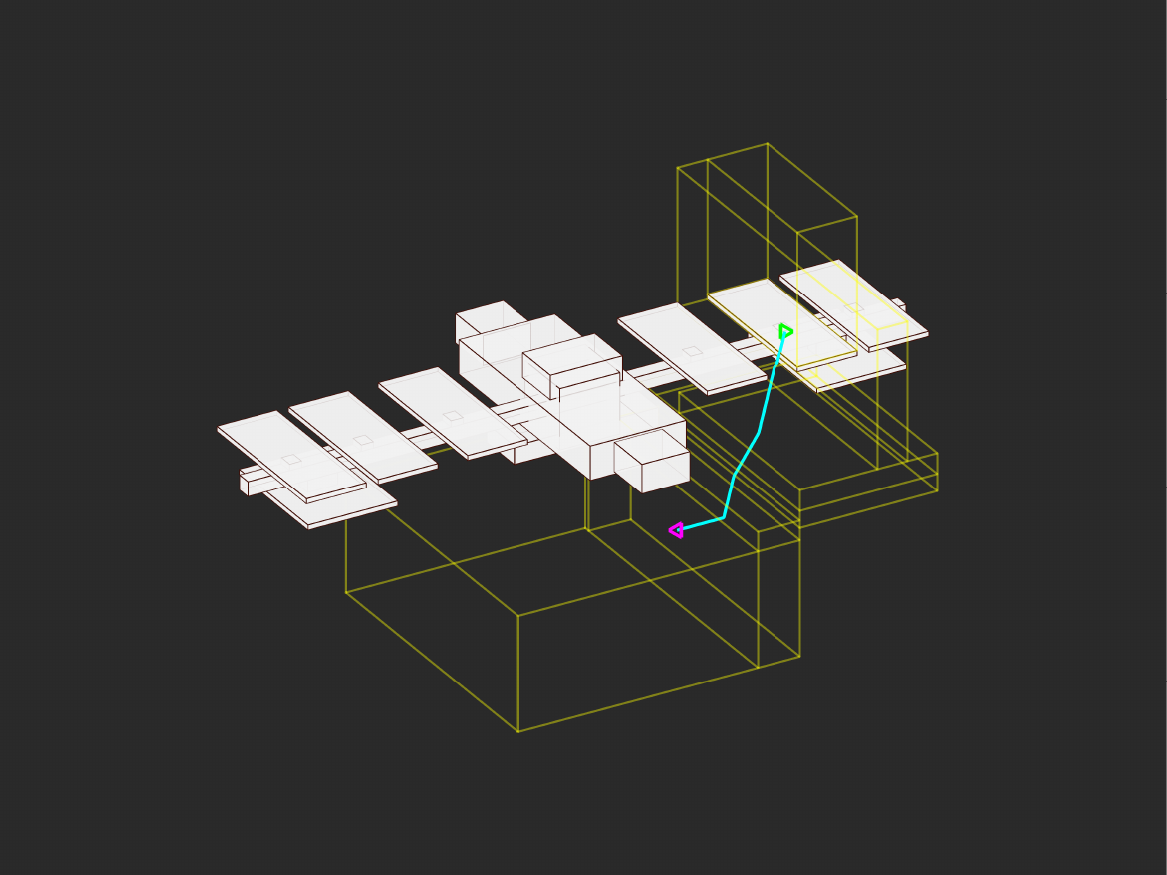}
    \caption{}
    \label{fig:CWH3_trajb}
\end{subfigure}%
\caption{Different on-orbit proximity maneuvers (left and right). The green and magenta triangles denote the satellite initial condition and desired setpoint, respectively. The path of the satellite is represented by a cyan curve. The yellow boxes are the sequence of polytopes that are traversed.}
\label{fig:CWH3_traj}
\end{figure*}
\subsection{Quadcopter control in urban environment}
\label{sec:droneInCity}
In this section we demonstrate the approach on a practically relevant nonlinear system that is feedback linearizable. 
More specifically, we consider the tracking problem for a quadcopter  operating in an urban environment. After exact linearization the translational dynamics can be represented as three double integrators with inputs corresponding to acceleration along the three major axis, see \cite[Section IV]{do2024lp} for a concise description of the procedure. Due to the feedback linearization, the box-type constraint on the original input signals (roll, pitch and positive vertical thrust) is translated into a coupled constraint on the three accelerations. This constraint is exactly represented by the intersection of a cone and a sphere and forms a convex set, see \cite[Equation (33)]{do2024lp}, parameterized by the maximum thrust, $T_{max}$, and the maximum  yaw/pitch angle, $\alpha_{max}$. Here we consider $T_{max} = 1.4 g$ where $g = 9.81$ m s$^{-2}$, and $\alpha_{max} = 15^\circ$. A tight polytopic inner approximation of the constraint set is obtained using the procedure outlined in \cite[Equation (35)]{do2024lp} and choosing 150 vertices. The resulting inner approximation of the set of constraints on the acceleration inputs is presented in Figure \ref{fig:drone_AccInput}. The difference in volume of the inner approximation with respect to the original constraint set normalized by the volume of the original set is 3.4\%. Due to the small relative error, visualizing the difference was challenging and is therefore omitted here.
The sampling frequency is 10 ms and the quadcopter is stabilized to a specified position using an LQR controller corresponding to weights of 10 for position states, $10^{-1}$ for velocity states and $10^{-2}$ for the acceleration inputs. Also, the infinity norm of the velocity vector is limited to 1 m s$^{-1}$.

The urban environment, depicted in Figure \ref{fig:dron3_traja}, is constructed by pseudo-randomly  generating 35 non-overlapping buildings of different heights, widths and depths. The free space is then decomposed using the same approach as in Section \ref{sec:CWH} but no restriction on the minimum size of the hyperrectangles was imposed, leading to a sequence of 141 connected hyperrectangles. Given an initial and final position, the path and intermediate references are computed as in Section \ref{sec:CWH} using a weight $b=0.1$.

Figure \ref{fig:dron3_traj} shows trajectories from 140 different initial conditions to a single desired reference point. None of the trajectories collided with any of the buildings, all trajectories were within one meter of the desired setpoint within 105 s and 138 of them were within $10^{-2}$ meters of the target within 105 s.
The mean and maximum times needed to generate the directed graph and associated path for the 140 simulations were 9 and 22 ms, respectively. The mean and maximum online computation time needed to solve the optimization problem associated with the command governor over all simulations and all time instants are 2.5 ms and 3 ms, respectively. These results suggest that the CG is attractive for applications with fast sampling times. This is not surprising as the  CG optimization problem has only 3 decision variables. A comparison with other constrained control approaches in terms of the computation time is out of the scope for this paper. Nevertheless, a comparison between a different RG scheme and MPC for a similar-sized problem can be found in  \cite[Section 5.2.4]{castroviejo2025robust}.

Figure \ref{fig:drone_AccInput} shows typical acceleration (input) trajectories as well as the constraint set. We note that the inputs are often on or near the constraint boundary indicating that the CG approach does not lead to excessive conservatism in this example. This observation is reinforced by examining the velocity time-histories, shown in Figure \ref{fig:drone_vels}, where the constraint boundary is often reached.
\begin{figure}[h]
  \centering
 \includegraphics[width=1\linewidth]{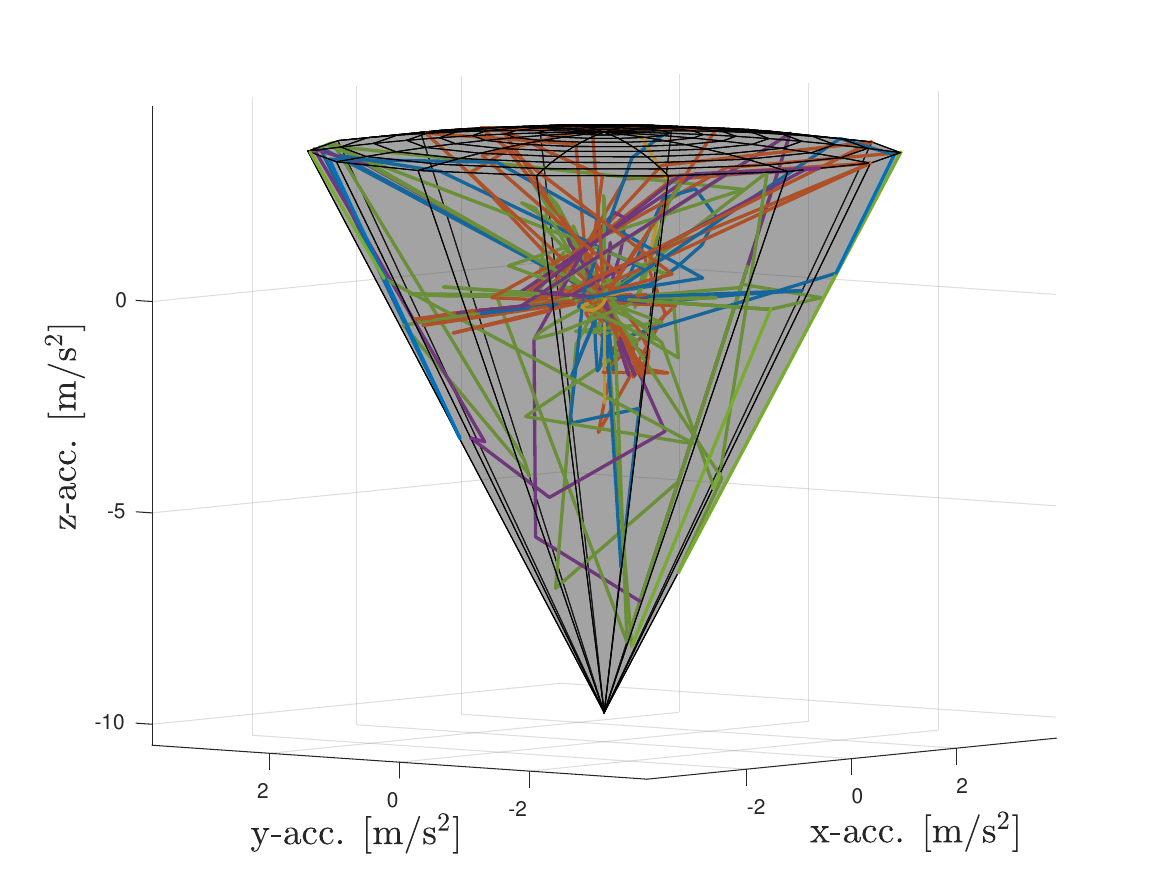}
    \caption{Typical Acceleration inputs histories (solid lines) for the quadcopter tracking task. Constraints on the input are approximated through a polytope made of 35 hyperplanes  (gray).}
    \label{fig:drone_AccInput}
\end{figure}
\begin{figure*}
\centering
\begin{subfigure}{.48\textwidth}
  \centering
  \centering
 \includegraphics[width=1\linewidth]{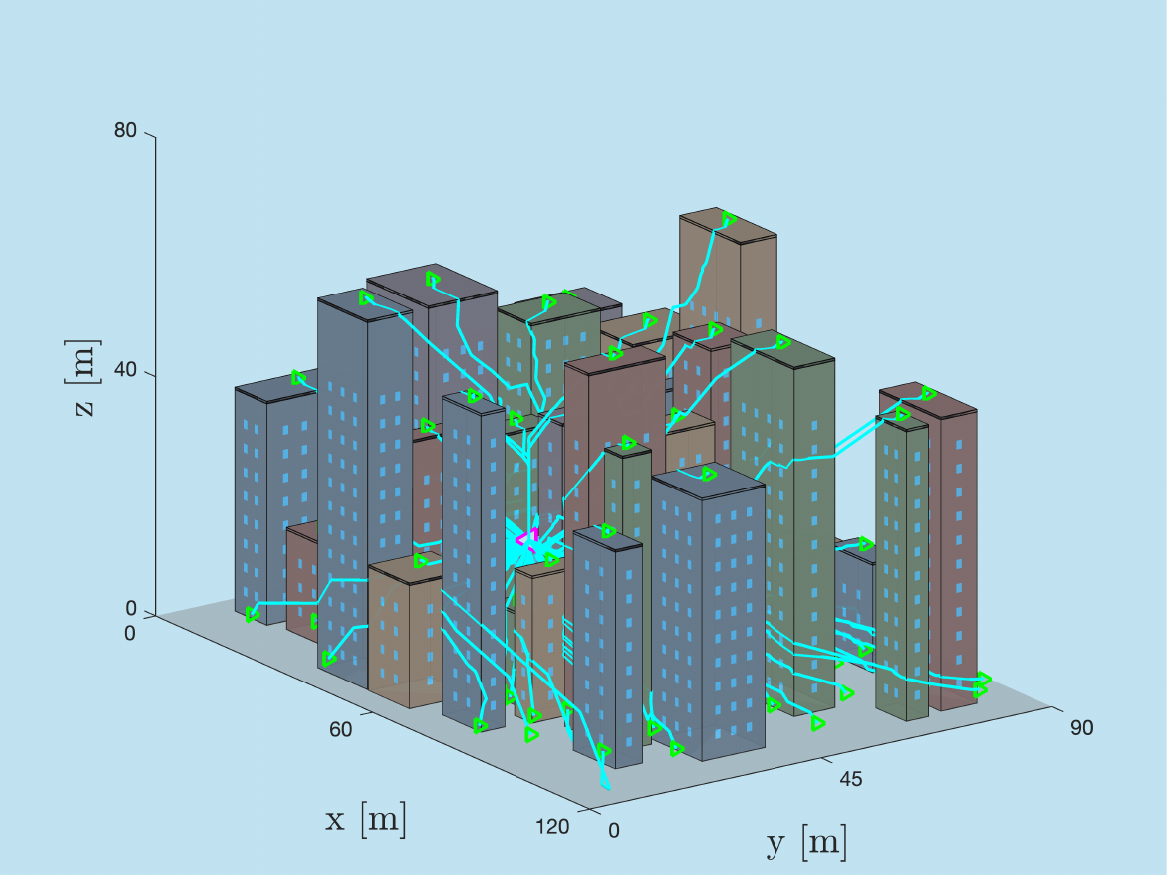}
    \caption{View from the side.}
    \label{fig:dron3_traja}
\end{subfigure}%
$\;$
\begin{subfigure}{.48\textwidth}
   \centering
 \includegraphics[width=1\linewidth]{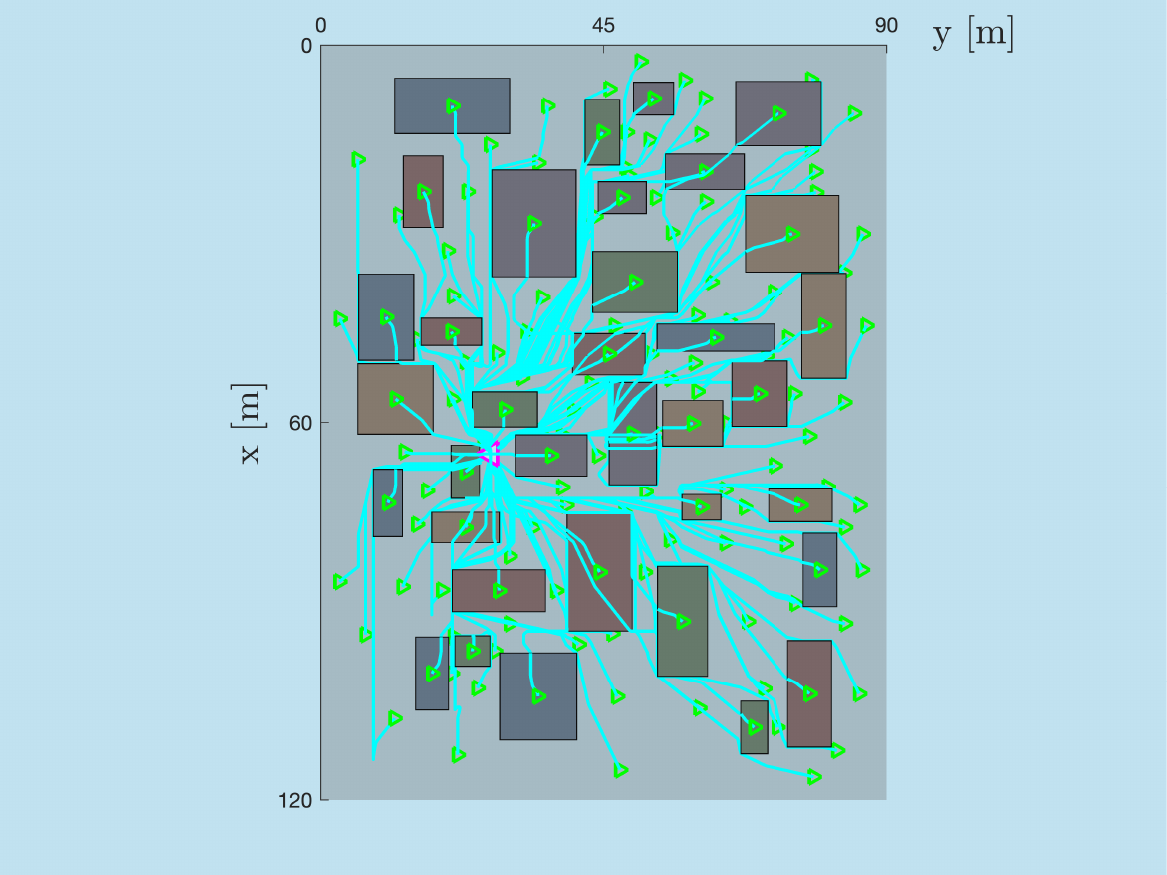}
    \caption{View from the top.}
    \label{fig:dron3_trajb}
\end{subfigure}%
\caption{Maneuvers for the quadcopter navigating to a given setpoint from different initial conditions. Left and right show different views of the same task. In both figures, the green and magenta triangles denote the quadcopter initial conditions and desired setpoint, respectively. The paths of the quadcopter are represented by cyan lines.}
\label{fig:dron3_traj}
\end{figure*}
\begin{figure}[h]
  \centering
 \includegraphics[width=1\linewidth]{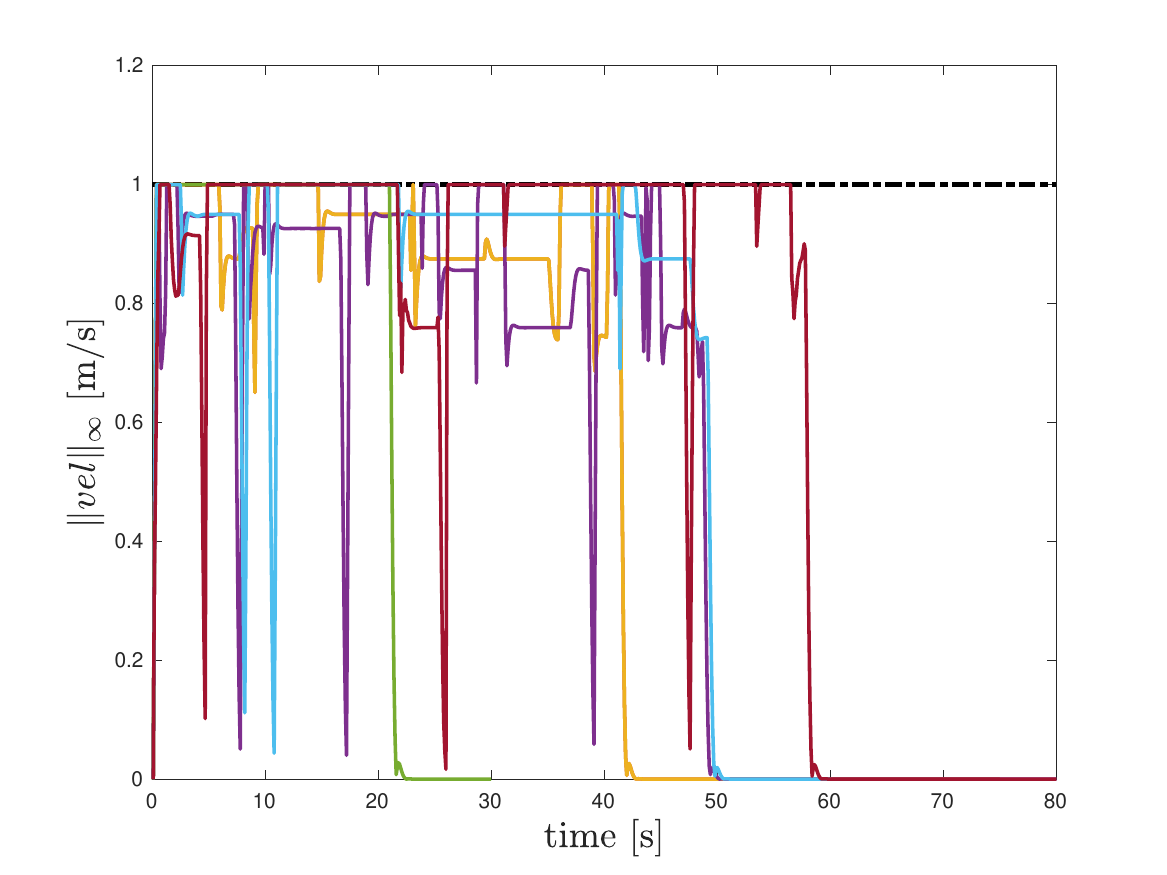}
    \caption{Typical histories of the pointwise-in-time infinity norm of velocity vector (solid lines) for the quadcopter tracking task. The bound on the infinity norm ($\|vel\|_\infty$) is represented by a dashed line.}
    \label{fig:drone_vels}
\end{figure}

\subsection{Computation times}\label{sec:compTimes}
As seen in Section \ref{sec:droneInCity}, the use of a reference governor leads to a low online computation burden. The goal of the present section is to investigate the offline computational load and, in particular, we examine how the the approach proposed in Section \ref{sec:hyperrectangles} compares to the process of computing the MOAS for each set (baseline approach). We do so for different lengths of the connected sequence and complexity of the adjacency matrix. Our results suggest that the approach proposed in Section \ref{sec:hyperrectangles} can substantially reduce the offline computational overhead\footnote{ Note that that approach was used both in Section \ref{sec:CWH} and Section \ref{sec:droneInCity}.}. 
For our study, we extend the 2D-MSD presented in Section \ref{sec:2DMSP} by considering an additional second order system with natural frequency and damping ratio given by $\omega_{{\tt n}3} = 5$ and $\zeta_3 = 0.02$, respectively. For this system, we impose a unit magnitude bound on the velocity components. Note that the resulting system is close to being marginally stable which led to a more complex MOAS (for the obstacles presented hereunder) than the one obtained when considering either the satellite or the drone dynamics.

In order to evaluate the offline computational overhead we  consider $N^{\rm obs}$ obstacles representing non-overlapping hyperrectangles with sides of length larger than 0.4 meters  and contained inside a hypercube centered at the origin with sides of length 5 meters. The feasible region is constructed using a naive algorithm that scans the three dimensional space along the different directions and generates a collection of connected hyperrectangles. Figure \ref{fig:exampleMultiObstcales} shows an example of the three dimensional space for $N^{\rm obs} = 30$.
\begin{figure}[b]
  \centering
 \includegraphics[width=1\linewidth]{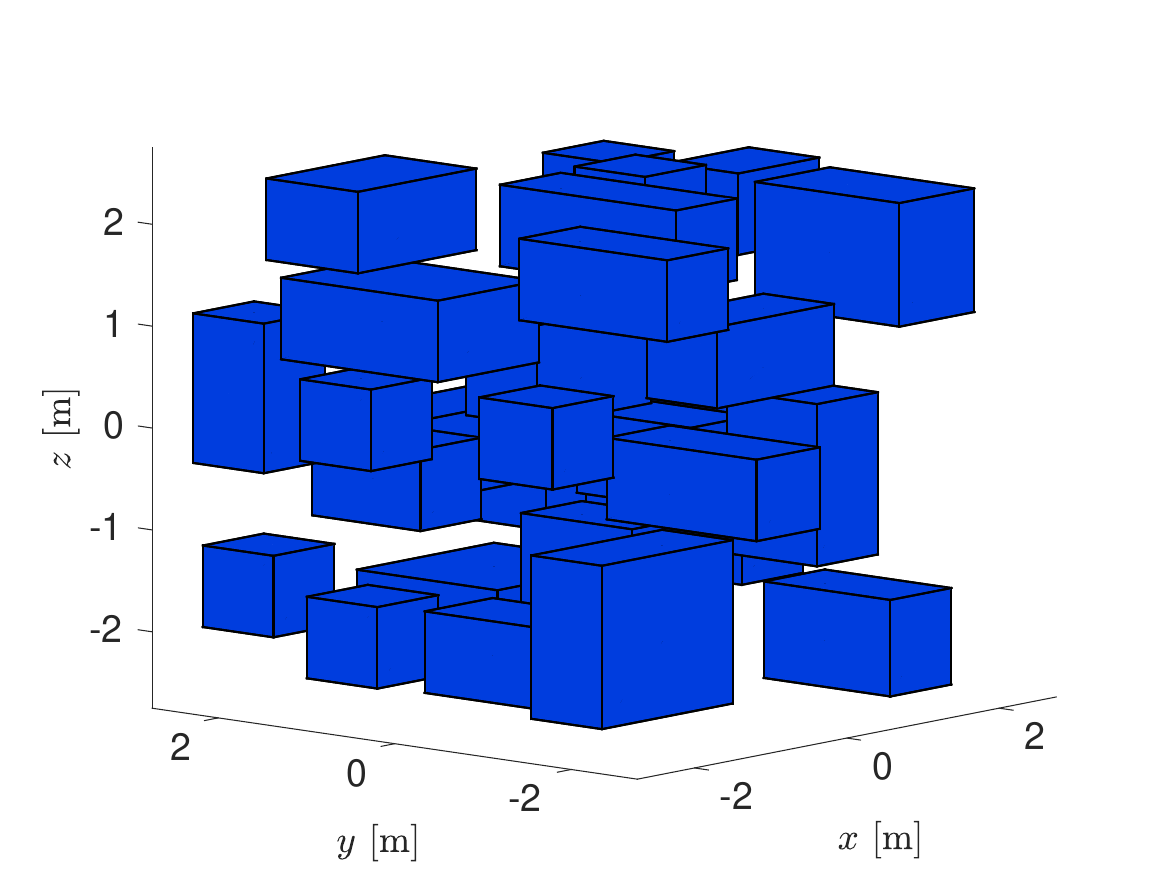}
    \caption{Randomly generated obstacles in three dimensional space for the experiment of Section \ref{sec:compTimes}.}
    \label{fig:exampleMultiObstcales}
\end{figure}

Data is collected for different values of $N^{\rm obs}$ and using 100 instances of the feasible space for each value.  Table \ref{tab:times} reports the mean and maximum values for different metrics: The volumetric ratio of occupied space inside the cube of length 5 m; the total number of elements in the connected sequence; the total number of sets for which the MOAS needs to be generated ($n_{\mathcal Y}/2$ + half the sum of all entries of the adjacency matrix, $F$); the time needed to generate the collection of polytopes from the obstacle information; the time needed to generate the adjacency matrix and weak extensions; the time needed to generate the family of MOAS; the total time (sum of the previous three metrics).

From Table \ref{tab:times}, we first note that as the number of obstacles increases the number of elements in the sequence ($n_{\mathcal Y}$) increases exponentially. Similarly, the total number of sets for which the MOAS needs to  be constructed (weak extensions + elements in the sequence) also increases exponentially. In contrast, the volumetric ratio increases linearly with the number of obstacles. While we reported the volumetric ratio, this metric is directly influenced by the imposed minimum obstacle length in every dimension. Similar experiments with larger values of the minimum obstacle length resulted in comparable computation times but increased volumetric ratios. Unsurprisingly, the method is agnostic to the volumetric ratio as the number of algebraic operations to be performed does not change.
\\
The time to generate the connected collection increased the most with the number of obstacles. This is understandable given that we are using a simple in-house algorithm. The time to generate the MOAS as well as the time needed to generate the weak extensions did not grow as fast, having a mean value of the order of $10$ and $100$ ms when considering 50 obstacles, respectively. When looking at the total time needed to generate the safe sets based on the obstacle information we see that on average handling a scene of 10 obstacles took 21 ms, suggesting that online safe set computation could be achievable when considering the method of Section \ref{sec:hyperrectangles}. 

In order to better quantify the improvement obtained with the proposed approach, the box plot in Figure \ref{fig:compTime} compares the time needed to scale the base MOAS to all sets using the approach we propose (Section \ref{sec:hyperrectangles}) to the time needed when computing all the MOAS individually. The latter is estimated by multiplying the total number of sets for which the MOAS needs to be computed by the time needed to compute the MOAS considering all constraints simultaneously and assuming a unit hypercube for the position constraints (10.87 s). In all cases, a difference of five orders of magnitudes was observed. Note that for all times related to the method of Section \ref{sec:hyperrectangles} we did not include the time needed to compute the MOAS for the base sets (3.26 s). This is reasonable as these are computed without any knowledge of the environment.
\begin{table}[]
\caption{Different metrics relating to the offline overhead as a function of the number of obstacles }
\label{tab:times}
\begin{tabular}{ll|llll}
 &$N^{\rm obs}$& 5 & 10 & 30  &50  \\\hline \arrayrulecolor{lightgray}
 \multirow{2}{*}{$n_{\mathcal Y}$}& mean & 24 & 52  & 207 & 397\\
 &max & 31 & 65  & 238 & 463  \\ \hline
 \multirow{2}{*}{ Total number of sets}& mean & 83 & 202  & 850 & 1657\\
 &max & 120 & 270  & 993 & 1939  \\ \hline
 \multirow{2}{*}{ $V^{\rm obstacles}/V^{\rm total}$ (\%)}& mean & 4 & 8  & 18 & 25\\
 &max & 8 & 15  & 27 & 36  \\\hline
 \multirow{2}{*}{Time to generate $\{\mathcal Y_i\}$ (ms)} & mean & 1    &4 &    450   & 5584\\
 & max & 18 &    9&    875 & 12,474\\\hline
 \multirow{2}{*}{Time to generate $\{\mathcal Y_{i,j}\}$ (ms)}& mean & 6 & 14 & 88  &  242  \\
 & max & 33 & 20 &  107 & 300 \\\hline
 \multirow{2}{*}{ Time to generate $\{\mathcal O_{i,j}\}$ (ms)}& mean & 2 & 3 &  20 &  55\\
 & max & 25 & 4 &  26 & 74 \\\hline
 \multirow{2}{*}{ \textbf{Total time }(ms)}& mean & 9 & 21 &  558 &  5882\\
 & max & 75 & 31 &  998 & 12,814  
\end{tabular}
\end{table}

\begin{figure}[h]
  \centering
 \includegraphics[width=1\linewidth]{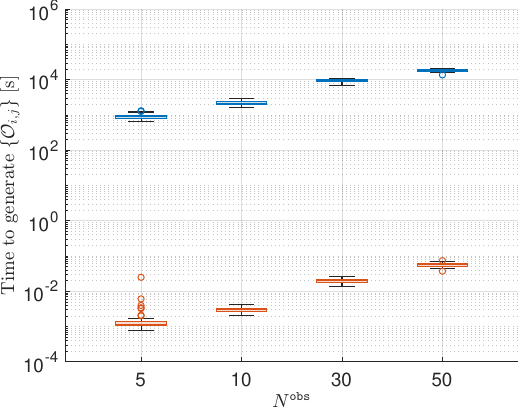}
    \caption{Box plot of the  time needed to generate all the MOAS for different numbers of obstacles. In blue, times computing individual MOAS. In orange, times when using the approach of Section \ref{sec:hyperrectangles}. Circular markers are outliers, whiskers show min-max values and boxes show median and lower/upper quartiles.}
    \label{fig:compTime}
\end{figure}
\section{Conclusion}
In this article we studied the problem of reference tracking for discrete-time linear systems subject to pointwise-in-time constraints that can be represented by a union of polytopes. We introduced the notion of simply/strictly connected collection of polytopes, which is a collection of polytopes with elements overlapping with one another on, at most, a single facet. In this setting, we developed a systematic method to generate a safe set with respect to the non-convex constraint set that enjoys the forward invariance property. We then showed how a reference governor scheme can be constructed, enabling safe tracking of any strictly feasible reference command. Theoretical guarantees for the supervisory scheme were derived, ensuring safety as well as finite-time convergence of the applied reference command to strictly admissible constant setpoints. Finally, conditions under which the safe sets for complex constraint sets can be computed efficiently offline were given.
Extensive numerical simulations demonstrated the applicability of the method in different fields. Offline and online computational times were reported showing promise for on-the-fly computation of the safe set. Based on these observations, future work will focus on developing dedicated methods to generate a connected collection of polytopes based on obstacle data as well as investigating the possibility for online implementation and handling of nonlinear systems. 

\appendix
The appendix provides an alternative proof for Corollary \ref{cor:extConvexity} (convexity of the extension) which does not rely on the knowledge that the extension is a polytope.

\begin{prop}
    \label{prop:extConvexity_alt}
     Let ${\mathcal Y_1,\;\mathcal Y_2}$ form a strictly connected collection of polytopes. Then, the extension of $\mathcal Y_1$ towards $\mathcal Y_2$ is convex.
\end{prop}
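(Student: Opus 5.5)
The plan is to prove convexity directly from the definition. Take $y,z\in\mathcal Y^{\tt e}_{1,2}=\mathcal Y_1\cup\mathcal Y^{\tt res}_{2,1}$ and $\lambda\in[0,1]$, and show that $w=\lambda y+(1-\lambda)z\in\mathcal Y^{\tt e}_{1,2}$. Write $\mathcal H\triangleq{\tt H}(A^{\mathcal Y_1}_{(G_{1,2})},b^{\mathcal Y_1}_{(G_{1,2})})$ for the common hyperplane. Write $\mathcal H^-$ and $\mathcal H^+$ for its closed half-spaces, with $\mathcal Y_1\subseteq\mathcal H^-$. By strict connectedness, $\mathcal Y_2\subseteq\mathcal H^+$, since $A^{\mathcal Y_2}_{(G_{2,1})}=-A^{\mathcal Y_1}_{(G_{1,2})}$ up to a positive constant.

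First, record the two identities that hold because the opening only drops the $G_{1,2}$-th row:
\[
\mathcal Y_1=\mathcal Y^{\tt o}_{1,2}\cap\mathcal H^-,\qquad \mathcal Y^{\tt res}_{2,1}=\mathcal Y^{\tt o}_{1,2}\cap\mathcal Y_2 .
\]
Both pieces therefore lie in the convex set $\mathcal Y^{\tt o}_{1,2}$, and so does $w$. The cases split as follows.

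\emph{Both points in $\mathcal Y_1$, or both in $\mathcal Y^{\tt res}_{2,1}$.} Then $w$ stays in the same piece by convexity of that piece, since each piece is an intersection of half-spaces.

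\emph{Mixed case.} Take $y\in\mathcal Y_1$ and $z\in\mathcal Y^{\tt res}_{2,1}$. Since $y\in\mathcal H^-$ and $z\in\mathcal H^+$, the segment $[y,z]$ meets $\mathcal H$ at some point $p=\kappa y+(1-\kappa)z$.
\begin{itemize}
\item If $w$ lies on the $y$-side of $p$, then $w\in\mathcal H^-\cap\mathcal Y^{\tt o}_{1,2}=\mathcal Y_1$, so this sub-case is immediate.
\item If $w$ lies on the $z$-side of $p$, we need $w\in\mathcal Y_2$. Since $\mathcal Y_2$ is convex and contains $z$, it suffices to show $p\in\mathcal Y_2$. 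We have $p\in\mathcal H\cap\mathcal Y^{\tt o}_{1,2}\cap\mathcal H^-=\mathcal H\cap\mathcal Y_1$. Strict connectedness (Definition \ref{def:strictConnectedSec}) gives $\mathcal H\cap\mathcal Y_1=\mathcal H\cap\mathcal Y_2$, so $p\in\mathcal Y_2$. Together with $w\in\mathcal Y^{\tt o}_{1,2}$, this gives $w\in\mathcal Y^{\tt res}_{2,1}$.
\end{itemize}

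The main step to watch is the sub-case where $w$ lies between $p$ and $z$. This is the only place where strict connectedness is used: the crossing point of the gate hyperplane must lie in $\mathcal Y_2$ as well as $\mathcal Y_1$. For a merely simply connected pair this fails, consistent with the nonconvex $\mathcal Y^{\tt e}_{3,2}$ in Figure \ref{fig:exampleConnectedSeq}. I will also make sure the boundary sub-cases $w=p$ and $y\in\mathcal H$ are covered by the closed half-space formulation, which they are, since $p$ belongs to both pieces.
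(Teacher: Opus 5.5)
Your proof is correct and takes essentially the same route as the paper's appendix proof. You split the segment from a point of $\mathcal Y_1$ to a point of $\mathcal Y^{\tt res}_{2,1}$ at its crossing $p$ with the gate hyperplane, use strict connectedness to place $p$ in $\mathcal Y_1\cap\mathcal Y_2\cap\mathcal Y^{\tt o}_{1,2}$, and conclude by convexity of $\mathcal Y_1$ on one side and of $\mathcal Y_2\cap\mathcal Y^{\tt o}_{1,2}$ on the other. Your explicit same-piece cases, which also absorb the paper's separate empty-restriction case, are only an organizational difference.
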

\begin{proof}
First, we note that if $\mathcal Y^{\tt res}_{2,1} = \emptyset$ then $\mathcal Y^{\tt e}_{1,2} = \mathcal Y_1$, which is convex. For the rest of this proof we focus on the case $\mathcal Y^{\tt res}_{2,1} \neq \emptyset$.
As $\mathcal Y^{\tt o}_{1,2}$ is a polytope it is convex, it then holds that $\mathcal Y^{\tt res}_{2,1}$ is also convex. To show $\mathcal Y^{\tt e}_{1,2}$ is convex, pick any $y_1\in\mathcal Y_1$, $y_2\in\mathcal Y^{\tt res}_{2,1}$ we then show that for all $\kappa\in\mathbb R_{[0,1]}$, $y_3(\kappa)\in \mathcal Y^{\tt e}_{1,2}$, where $y_3(\kappa) \triangleq \kappa y_2 + (1-\kappa)y_1$. By definition, we have that $A^{\mathcal Y_1}_{(G_{1,2})} y_1\leq b^{\mathcal Y_1}_{(G_{1,2})}$ and $A^{\mathcal Y_1}_{(G_{1,2})} y_2\geq b^{\mathcal Y_1}_{(G_{1,2})}$. Then, there exists $\bar\kappa\in\mathbb R_{[0,1]}$ such that $A^{\mathcal Y_1}_{(G_{1,2})} y_3(\bar\kappa) = b^{\mathcal Y_1}_{(G_{1,2})}$.
It then holds that $y_3(\bar \kappa)\in\mathcal Y_1$. Indeed, $y_1,y_2\in\mathcal Y^{\tt o}_{1,2}$ therefore $y_3(\kappa)\in\mathcal Y^{\tt o}_{1,2}$ for all $\kappa\in\mathbb R_{[0,1]}$, and thus $A^{\mathcal Y_1}_{(-G_{1,2})}y_3(\kappa)\leq b^{\mathcal Y_1}_{(-G_{1,2})}$. Then, by definition of $\bar \kappa$ we get $y_3(\bar \kappa)\in\mathcal Y_1$. 
Moreover, we conclude that $y_3(\bar \kappa)\in\mathcal Y_2\cap\mathcal Y^{\tt o }_{1,2}$ as  $y_3(\bar \kappa) \in \mathcal Y_1\cap{\tt H}(A^{\mathcal Y_1}_{(G_{1,2})},b^{\mathcal Y_1}_{(G_{1,2})}) =  \mathcal Y_2\cap{\tt H}(A^{\mathcal Y_1}_{(G_{1,2})},b^{\mathcal Y_1}_{(G_{1,2})})$, by definition of a strictly connected collection. Using convexity of $\mathcal Y_1$ and convexity of $\mathcal Y_2\cap \mathcal Y^{\tt o}_{1,2}$ we directly get that for all $\kappa\leq\bar\kappa$ $y_3(\kappa)\in\mathcal Y_1\subset \mathcal Y^{\tt e}_{1,2}$ and for all $\kappa\geq\bar\kappa$ $y_3(\kappa)\in\mathcal Y_2\cap\mathcal Y^{\tt o}_{1,2}\subset\mathcal Y^{\tt e}_{1,2}$, which concludes the proof.
\end{proof}

\end{document}